\newcommand{\NN}{\mathbb{N}} 
\let\le\leqslant
\let\ge\geqslant
\theoremstyle{plain}
\newtheorem{open}[theorem]{Open problem}
\title{Local certification of local properties: tight bounds, trade-offs and new parameters}
\author{Nicolas Bousquet}{Univ Lyon, CNRS, INSA Lyon, UCBL, LIRIS, UMR5205 F-69622 Villeurbanne, France}{nicolas.bousquet@cnrs.fr}{https://orcid.org/0000-0003-0170-0503}{}
\author{Laurent Feuilloley}{Univ Lyon, CNRS, INSA Lyon, UCBL, LIRIS, UMR5205, F-69622 Villeurbanne, France}{laurent.feuilloley@cnrs.fr}{https://orcid.org/0000-0002-3994-0898}{}
\author{Sébastien Zeitoun}{Univ Lyon, CNRS, INSA Lyon, UCBL, LIRIS, UMR5205, F-69622 Villeurbanne, France}{sebastien.zeitoun@ens-lyon.fr}{https://orcid.org/0009-0003-2675-8581}{}
\authorrunning{N. Bousquet, L. Feuilloley and S. Zeitoun}
\keywords{Local certification, local properties, proof-labeling schemes, locally checkable proofs, optimal certification size, colorability, dominating set, perfect matching, fault-tolerance, graph structure}
\begin{document}
	
	\maketitle
	\begin{abstract}
		Local certification is a distributed mechanism enabling the nodes of a network to check the correctness of the current configuration, thanks to small pieces of information called certificates. 
		For many classic global properties, like checking the acyclicity of the network, the optimal size of the certificates depends on the size of the network, $n$. 
		In this paper, we focus on properties for which the size of the certificates does not depend on $n$ but on other parameters.
		
		We focus on three such important properties and prove tight bounds for all of them. 
		Namely, we prove that the optimal certification size is: $\Theta(\log k)$  for $k$-colorability (and even exactly $\lceil \log k \rceil$ bits in the anonymous model while previous works had only proved a $2$-bit lower bound); $(1/2)\log t+o(\log t)$  for dominating sets at distance $t$ (an unexpected and tighter-than-usual bound) ; and $\Theta(\log \Delta)$ for perfect matching in graphs of maximum degree~$\Delta$ (the first non-trivial bound parameterized by~$\Delta$). 
		We also prove some surprising upper bounds, for example, certifying the existence of a perfect matching in a planar graph can be done with only two bits.   
		In addition, we explore various specific cases for these properties, in particular improving our understanding of the trade-off between locality of the verification and certificate size. 
	\end{abstract}
	
	\newpage{}
	
	\setcounter{page}{1}
	
	\section{Introduction}
\label{sec:introduction}

\textbf{Local certification.} 
Local certification is a topic at the intersection of locality and fault-tolerance in distributed computing. 
Very roughly, the main concern is to measure how much information the nodes of a network need to know in order to verify that the network satisfies a given property. 
Local certification originates from self-stabilization and is tightly related to the minimal memory needed to be sure that a self-stabilizing algorithm has reached a correct configuration locally. 
The topic is now studied independently, and the area has been very active during the last decade. 
We refer to the survey~\cite{Feuilloley21} for an introduction to the topic.

The standard model for local certification is the following. 
The nodes are first assigned labels, called \emph{certificates}, and then every node looks at its certificate and the certificates of its neighbors, and decides to accept or reject. 
A certification scheme for a given property is correct if, for any network, the property is satisfied if and only if there exists a certificate assignment such that all the nodes accept. (We discuss  variations later, and give proper definitions in Section~\ref{sec:model}.) 
The usual measure of performance of a certification scheme is the maximum certificate size over all nodes, and all networks of a given size. 

One of the fundamental results in local certification is that any property can be certified if the network is equipped with unique identifiers~\cite{KormanKP10, GoosS16},
but this is at the expense of huge certificates. 
Indeed, the scheme consists in giving to every node the full map of the graph, which takes $\Theta(n^2)$ bits in $n$-node graphs.
The question then is: when can we do better? 
There exist three typical certificate sizes. 
For some properties, \emph{e.g.} related to graph isomorphism, $\Theta(n^2)$ bits is the best we can do~\cite{GoosS16}.
For many natural properties, the optimal certificate size is $\Theta(\log n)$; for example most properties related to trees (acyclicity, spanning tree, BFS, and minimum spanning tree for small edge weights~\cite{KormanKP10}). 
A recent research direction tries to capture precisely which properties have such \emph{compact certification} (see~\cite{BousquetFP22, FraigniaudMRT22, FraigniaudMMRT23}). 
Finally, some properties are \emph{local} from the certification point of view, in the sense that the optimal certification size \emph{does not depend on $n$}.
This third type of property is the topic of this paper.%
\medskip

\noindent\textbf{Local certification of local properties.} 
Until recently, studying local properties has not been the focus of the community, since the usual parameter for measuring complexity is the network size.
A recent paper by Ardévol Mart{\'{\i}}nez, Caoduro, Feuilloley, Narboni, Pournajafi and Raymond~\cite{ArdevolCFNPR22} is the first to target this regime. 
We refer to~\cite{ArdevolCFNPR22} for the full list of motivations to study this topic, and we just highlight a few points here. 

First, it will appear in this paper that the size of the certificates for local properties is often expressed as functions of parameters different from the number of nodes. Therefore, one should not (always) see these are constants.
It has been highlighted before (see discussion before Open problem 4 in \cite{Feuilloley21}) that we have basically no understanding of certification size expressed by other parameters than the number of nodes.

Second, \emph{locally checkable languages} (LCL) are at the core of the study of the LOCAL and CONGEST models. 
These are basically the properties that are local from a certification point of view: the output can be checked by looking at all the balls of some constant radius. 
Certification is a way to question the encoding of LCLs. 
A typical example is coloring, for which one uses the colors as output of a construction algorithm, and as certificates for colorability certification. 
If there would exist a better certification, this would shed a new light on the encoding of this LCL.
For example, the celebrated round elimination technique~\cite{Suomela20} is very sensitive to the problem encoding, and one could hope that feeding it with a different encoding could provide new bounds.
In a more general perspective, we argue that just like understanding the complexity of LCLs, understanding certification of local properties is a fundamental topic.

In addition to these two general motivations, our work is guided by two open problems, the $k$-colorability question and the trade-off conjecture, that we detail now.
\medskip

\noindent\textbf{The $k$-colorability question.}
Let us first discuss the case of colorability, which will be central in this paper.
The property we want to certify is that the graph is \emph{$k$-colorable}, that is, one can assign colors from $\{1,\ldots,k\}$ to vertices such that no two neighbors have the same color. 
It is straightforward to design a local certification with $k$ certificates for this property: the certificates encode colors in $\{1,\ldots,k\}$ and the nodes just have to check that there is no conflict.\footnote{A \emph{conflict} being an edge whose two endpoints are colored the same.} This uses $O(\log k)$ bits. 
The natural open problem here is the following.

\begin{open}[Open problem 1 in \cite{Feuilloley21}]\label{open:k-colorability}
	Is $\Theta(\log k)$ optimal for $k$-colorability certification?
\end{open}

The first result on that question is the very recent paper~\cite{ArdevolCFNPR22}  which establishes that one bit is not enough to certify $k$-colorability. 
This lower bound holds in the anonymous and in the proof-labeling scheme models, that we will define later. 
In a nutshell, the technique is an indistinguishability argument: assuming that there exists a 1-bit certification, one can argue about the number of 1s in a node neighborhood and take an accepting certification of some $k$-colorable graph to derive an accepting certification of a $k+1$-clique, which is a contradiction. 

Interestingly,~\cite{ArdevolCFNPR22} also shows a case where the natural encoding is not the best certification. 
Namely, certifying a distance-$2$ $3$-coloring can be done with only $1$ bit, while the obvious encoding of the colors takes $2$ bits.
Finally, let us mention that non-$k$-colorability, the complement property, is much harder to certify. 
Indeed, it is proved in~\cite{GoosS16} that one needs $\tilde{\Omega}(n^2)$ bits to certify non-$3$-colorability (where $\tilde{\Omega}$ hides inverse logarithmic factors).
\medskip

\noindent\textbf{Trade-off conjecture.}
We finish this general introduction, with yet another motivation to study local properties. 
The \emph{trade-off conjecture}, first stated in \cite{FeuilloleyFHPP21}, basically states that for any property, if the optimal certification size is $s$ for the classic certification mechanism, where the nodes see their neighbors certificates, then it is in $O(s/d)$ if the vertices are allowed to see their whole neighborhood at distance $d$. 
This was proved to be true for many classic properties, and in many large graph classes~\cite{FeuilloleyFHPP21, FischerOS21, OstrovskyPR17}.
Implicitly, the big-O of the conjecture refers to functions of $n$, but for local properties, the conjecture is interesting only if it refers to the parameters that appear in the certificate size.

\begin{open}[Trade-off conjecture]
	Consider a property with optimal certification size~$s$ at distance 1 (where $s$ depends on the natural parameters of the problem). Is it true that if we allow the verification algorithm to look at distance $d$, then the optimal size is at most $\alpha\cdot s/d$ for some constant $\alpha$?
\end{open}

The authors of \cite{ArdevolCFNPR22} argue that the conjecture might actually be wrong for local properties.
In other words, there might exist properties such that looking further in the graph is useless (unless you can see the whole graph), or at least not as useful as claimed in \cite{FeuilloleyFHPP21} (instead of being $s/d$ the optimal size could be a less-decreasing function of $d$).
We consider this question to be very intriguing and important to the study of locality, and we will discuss it several times in the paper.
\medskip

\noindent\textbf{A sample of local properties.}
Local properties have different behaviors, which prevented us to establish general theorems capturing all of them. Instead, we looked for a sample of properties widely studied in the distributed community having diverse behaviors, and such that many other properties would behave similarly to one of the sample. 
First, we chose to study the colorability question, for the reasons cited above. 
Second we looked at domination at distance $t$, where a set of nodes is selected, and we want to check that every node is at distance at most $t$ from a selected node. This property has inputs and an external parameter, which makes it very different from colorability. Moreover, a dominating set distance $t$ is a building block for many self-stabilizing algorithms. 
Third, we study the property of ``having a perfect matching''. This differs from the two first ones by being an edge-related problem instead of the node-related problem, and it appears that the key parameter there is the maximum degree, a new parameter for certification. One more motivation is that matchings are classic objects in distributed graph algorithms.
\medskip

\noindent\textbf{Organization of the paper.}
The paper is organized the following way. 
After this introduction, we give a detailed overview of the context, results. We also give the proof techniques of the main results. 
Then, after a definition section, there are three technical sections that correspond to the three local properties we study. 
The overview and the technical parts can be read independently. 
Readers interested in motivations, general discussions, proof ideas and comparison with previous techniques can read the first and cherry-pick specific proofs they are curious about in the second; while others will prefer to go directly to the model section and formal proofs. 
The overview and the technical part are organized in the same way to allow easy back-and-forth reading.

\section{Overview of our results and techniques}

\paragraph*{Quick description of the models}

In order to state the results, we need to informally define the different models of certification (see Section~\ref{sec:model} for more formal definitions).
\begin{itemize}
	\item In the anonymous model, the nodes have no identifiers/port-numbers.
	\item In the proof-labeling scheme model, every node has a unique identifier (encoded on $O(\log n)$ bits) but it cannot access the identifiers of the other nodes.
	\item In the locally checkable proof model, nodes have identifiers and can see the identifiers of the other nodes.
\end{itemize}

The anonymous model is the one for which we have the largest number of results. It is usually less considered in the literature, but argue that for local properties it is the most natural. See the discussion in Subsection~\ref{subsec:model-certification}.

The standard assumption is that the nodes can only see their neighbors. 
Since we are interested in the trade-off conjecture, we will also consider certification at distance $d$, where the view is the full neighborhood at distance $d$. 

It is often handy to say that the certificates are given by a \emph{prover}, that intuitively tries to convince the nodes that the network is correct (both on correct and incorrect instances).

Now that we are equipped with these notions, we will review our results and techniques, problem by problem.

\subsection{Overview for colorability}

In this subsection, we consider the \emph{$k$-colorability} property, already mentioned, which states that the graph is $k$-colorable. 

\paragraph*{Two lower bounds for colorability}

We have already discussed the colorability property, and cited Open problem~\ref{open:k-colorability}.
Our first result is in the anonymous model, where we strongly solve Open problem~\ref{open:k-colorability} by determining
the \emph{exact bound}: in the anonymous model it is necessary and sufficient to have $k$ different certificates.

\begin{restatable}{theorem}{ThmColAnon}
	\label{thm:col_anon}
	For every $k \ge 2$, in the anonymous model where vertices can see at distance~1, $k$~certificates are needed in order to certify that a graph is $k$-colorable. 
	Therefore, in the anonymous model, exactly $\lceil \log(k) \rceil$ bits are needed to certify $k$-colorability.
\end{restatable}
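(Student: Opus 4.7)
The upper bound is immediate: the prover hands each vertex its color in a valid $k$-coloring (encoded on $\lceil \log k \rceil$ bits), and the verifier accepts iff no neighbor shares its certificate. This gives a valid scheme using exactly $k$ distinct certificates.

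For the lower bound, I argue by contradiction. Assume a valid certification scheme with certificate set $C$ of size at most $k-1$. Since $K_k$ is $k$-colorable, the prover has an accepting certification $f : V(K_k) \to C$. By the pigeonhole principle, two vertices $u, v \in V(K_k)$ receive the same certificate $c$; write $M$ for the multiset of all certificates of $K_k$. In the anonymous model at distance~$1$, the verifier's decision depends only on the pair (certificate, multiset of neighbor certificates). So from the acceptance of $K_k$ we record, for every $c' \in C$ appearing in $M$, that the local view $(c', M \setminus \{c'\})$ (a multiset of size $k-1$) is accepted. Note in particular that the accepted view $(c, M\setminus\{c\})$ contains $c$ itself, since $c$ appears at least twice in~$M$.

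The plan is to leverage the anonymous indistinguishability of $u$ and $v$ — same certificate, same neighbor multiset in $K_k$ — to build a non-$k$-colorable graph $H$ with an accepting certification, which contradicts validity. Concretely, I consider a graph obtained by extending $K_k$ with additional vertices that mimic the local view of $u$ (or of~$v$): for instance, duplicating $u$ into twin copies both attached to $V(K_k) \setminus \{u\}$, or adding a cloned clique, and assigning certificate $c$ to the new vertices so their local views coincide with the already-accepted view $(c, M\setminus\{c\})$. The new graph $H$ can be arranged to contain $K_{k+1}$ (or more generally to have chromatic number $>k$) while each new vertex's local view matches an already-accepted view.

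The main obstacle is ensuring that the modifications preserve acceptance \emph{at every vertex simultaneously}: adding new neighbors to an original vertex $w \in V(K_k)$ changes $w$'s neighbor multiset from $M\setminus\{f(w)\}$ to a larger multiset, which is an a priori different input to the verifier. To control this, I must either (i) choose the extension so that the new neighbors of $w$ all carry certificate $c$ and use an inductive/iterated duplication argument to certify the enlarged views from those of smaller $k$-colorable graphs, or (ii) additionally invoke the acceptance of auxiliary $k$-colorable graphs (e.g.\ $K_k$ augmented with pendants or further blow-ups at the vertex~$u$) to populate the verifier's accepted-view set with the precise larger multisets required. Combining these steps gives an accepting certification of a graph containing $K_{k+1}$, contradicting validity and establishing that $|C| \geq k$, hence $\lceil\log k\rceil$ bits are needed.
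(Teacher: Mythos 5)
Your framework is the right one (views at distance~1 are pairs consisting of the vertex's own certificate and the multiset of its neighbors' certificates; pigeonhole on at most $k-1$ certificates; build an accepted non-$k$-colorable graph), and you correctly identify the obstacle, but you do not overcome it, and neither of the remedies you sketch works as stated. Any construction that \emph{adds} vertices or edges strictly enlarges the neighbor multiset of some original vertex, producing a view that is not among the recorded accepted ones. Appealing to ``auxiliary $k$-colorable graphs'' does not repair this, because the prover chooses certificates independently for each instance: the accepting certification of $K_k$ with pendants, or of a blow-up of $K_k$, has no a priori relation to the certification $f$ of $K_k$ you started from, so accepted views cannot be combined across different certified instances. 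This is precisely the step your proposal leaves open, and it is the heart of the lower bound.

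The paper closes the gap with a degree-preserving \emph{edge crossing} rather than an extension. It starts not from $K_k$ but from the complete $k$-partite graph with parts of size $\max(k,3)$ (a blow-up of $K_k$), which is $k$-colorable and hence accepted by some certificate function $c$. With only $k-1$ certificates, pigeonhole inside each part yields two same-certificate vertices $u_i, v_i$ in each part $V_i$, and a second pigeonhole across the $k$ parts yields $i\neq j$ with $c(u_i)=c(v_i)=c(u_j)=c(v_j)$. Replacing the edges $(u_i,u_j)$ and $(v_i,v_j)$ by $(u_i,v_i)$ and $(u_j,v_j)$ changes no vertex's degree and no vertex's view: each of the four affected vertices trades one neighbor for another carrying the same certificate, and all other vertices are untouched. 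Hence the modified graph is still accepted; yet it contains a $(k+1)$-clique (take $u_i, v_i$ together with one untouched vertex from each part other than $V_i$, using that parts have size at least $3$), so it is not $k$-colorable --- a contradiction. The two ingredients missing from your argument are thus (a) starting from a blow-up large enough to run the \emph{double} pigeonhole and to leave spare vertices in each part, and (b) swapping edges instead of adding them, so that every local view is preserved exactly.
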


The upper bound is trivial since we can simply give the colors as certificates. 
The technique to establish this lower bound is a form of crossing technique. 
We take a large enough complete $k$-partite graph, thus a graph that is maximally $k$-colorable, in the sense that any edge added to it would make it non-$k$-colorable. 
The idea of the proof is to argue by counting, that for any certificate assignment that would make all nodes accept this graph, there must exist two edges that we can cross (that is replace $(a,b);(x,y)$ by $(a,x);(b,y)$ for example) such that all neighborhoods appearing in this instance did appear in the previous one, thus no node rejects. 
In addition, the graph obtained after this crossing is non-$k$-colorable, which is a contradiction with the correctness of the scheme.

Now in the most general model, we also give an (asymptotic) tight bound, fully answering Open question~\ref{open:k-colorability}. We actually prove a more general lower bound parametrized by the verification distance.

\begin{restatable}{theorem}{ThmLBColoringDistanced}
	\label{thm:LB_coloring_distance_d}
	In the locally checkable proofs model, at least  $\Omega(\log(k)/d)$ bits are needed to certify $k$-colorability when the vertices are allowed to see their neighborhoods at distance $d$.
\end{restatable}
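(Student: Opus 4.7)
The plan is to combine a distance-$1$ lower bound in the locally checkable proofs model with a simulation that converts any distance-$d$ scheme of size $s$ into a distance-$1$ scheme of size $O(sd)$; the combination yields $sd = \Omega(\log k)$, i.e.\ $s = \Omega(\log k/d)$. The first subgoal is to lift Theorem~\ref{thm:col_anon} from the anonymous to the locally checkable proofs model at distance $1$, obtaining an $\Omega(\log k)$ lower bound. Since identifiers are now visible to the verifier, the crossing argument of Theorem~\ref{thm:col_anon} cannot be applied directly; to adapt it I would work on a complete $k$-partite graph with parts of size much larger than $n \cdot 2^{s}$ and invoke a Ramsey-style pigeonhole on the pairs (certificate value, labeled-neighborhood isomorphism type) to locate, inside some part, two vertices whose distance-$1$ ID-labeled views become indistinguishable after a consistent relabeling of the rest of the graph. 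A crossing then produces a graph containing a $(k{+}1)$-clique whose certification is still accepting, contradicting soundness.

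The second subgoal is the distance-$d$-to-distance-$1$ reduction. The standard idea is to attach to each vertex of the input a short gadget of diameter $\Theta(d)$, such as a pendant path, designed so that (a) the augmented graph is $k$-colorable iff the original is, and (b) the distance-$d$ certificates of the original can be redistributed along the gadget in such a way that a distance-$1$ verifier in the augmented graph can aggregate them and reconstitute the original distance-$d$ view. Since the gadget has $O(d)$ nodes and each still holds an $s$-bit certificate, the resulting distance-$1$ scheme has size $O(sd)$. Pendant paths are $2$-colorable and do not affect $k$-colorability for $k \ge 2$, which takes care of (a); the relay encoding for (b) is the delicate part.

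The main obstacle is precisely the soundness of this relay: a dishonest prover should not be able to misrepresent the distance-$d$ certificates by writing inconsistent values along a gadget path. The standard resolution is to require the encoding to be locally cross-checkable, with synchronization tags and distance markers that adjacent pairs in the gadget verify against one another. Handling identifiers in the LCP model adds an $O(\log n)$ overhead, which affects lower-order terms but not the asymptotic $\Omega(\log k / d)$ bound. Working out the precise gadget and verification protocol is the technical core of the argument, while the two conceptual keys are the large-parts $k$-partite construction of the first step and the pendant-path relay of the second.
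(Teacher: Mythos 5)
Both halves of your plan have problems, and the second is fatal. The claimed distance-$d$-to-distance-$1$ simulation with only an $O(sd)$ blow-up does not exist in general: a distance-$1$ verifier that must emulate a distance-$d$ one needs to reconstruct the \emph{entire} ball of radius $d$ (its graph structure and all certificates in it), and that ball can contain $\Delta^{\Theta(d)}$ vertices, not $O(d)$. Your pendant-path gadget does not fix this, because a path hanging off a vertex never brings information about vertices at distance $2,\dots,d$ \emph{in the original graph} into the distance-$1$ view of that vertex; it only adds new vertices that are even farther from everything relevant. If instead you meant the reduction in the other direction (stretching a hard distance-$1$ instance so that a distance-$d$ verifier sees no more than a distance-$1$ verifier did), the natural stretching operation --- subdividing edges --- destroys the property: subdivided graphs are $2$-colorable, so $k$-colorability becomes trivial on them. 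Preserving the hardness of colorability while stretching distances is exactly the crux, and pendant paths do not address it. Your first step is also shakier than you suggest: in the locally checkable proofs model the crossing argument breaks because after the cross a vertex sees a \emph{different identifier} in its neighborhood, and ``consistently relabeling the rest of the graph'' changes the instance, for which the prover may choose entirely different certificates; closing this with a pigeonhole over identifier assignments is not worked out and is not obviously possible.

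The paper takes a different route that solves both issues at once. It builds, for each permutation $\sigma$ of $\{1,\dots,k\}$, a graph $G_{k,2d}(\sigma,\tau)$ made of two chains of $2d$ cliques of size $k$ linked by antimatchings; the coloring propagates \emph{uniquely} along each chain, so the whole graph is $k$-colorable if and only if $\sigma=\tau$. The chain length $2d$ guarantees that no vertex's distance-$d$ view contains both ends, so if two yes-instances $G(\sigma,\sigma)$ and $G(\tau,\tau)$ with $\sigma\neq\tau$ received identical certificate functions, the hybrid $G(\sigma,\tau)$ would also be accepted --- a contradiction. Counting then gives $k!\le 2^{4mkd}$, hence $m=\Omega(\log k/d)$. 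The antimatching chain is precisely the ``stretching while preserving colorability rigidity'' device that your pendant paths were meant to supply, and the argument works directly in the model with identifiers because it is a pure counting/indistinguishability argument over a fixed identifier assignment, with no crossing step.
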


We describe the proof of this result in a communication complexity framework to provide more intuition, although the actual proof does not rely on any black-box result from communication complexity.
The instances we use have the typical shape of communication complexity constructions: the graph has two parts that correspond to the players (left and right) and a part in the middle.
The middle part has a large diameter (to be sure that left and right cannot communicate at distance $d$), and has two sets of $k$ special vertices on the left (top left and bottom left), and two sets of $k$ special vertices on the right (top right and bottom right).
The part of the left (resp. right) player is an antimatching between the top left and bottom left (resp. top right and bottom right), where an antimatching is a complete bipartite graph in which a matching has been removed.
The middle part has a very constrained structure whose role is to enforce that the graph is $k$-colorable, if and only if, the left and right antimatchings are intuitively mirrors one of the other. 
The idea is then that the information about the exact matchings on the left and right parts has to be transferred to some node, to be compared, and this can happen only via the certificates. 
There are $k!$ possible forms for the left and right antimatchings, (because there are $k!$ possible matching in a complete bipartite graph of size $k$ basically). 
Thus, the information that has to be transferred from one side of the graph to the other has size $k\log k$ (via Stirling equivalent) and since our graph has cuts of order $k$, we get the $\Omega(\log k)$ lower bound.

\paragraph*{Uniquely $k$-colorable graphs and other natural counterexample candidates}

Our two lower bound constructions have in common to be very constrained, in a precise sense: for every vertex $v$, every ball centered in $v$ of radius at least $2$ admits a unique proper $k$-coloring (up to color renaming). 
In this case, we say that the graph is \emph{uniquely $k$-colorable at distance $d$} (where $d$ is the radius of the neighborhood). It is easy to see that if a graph is uniquely $k$-colorable at distance $d$, then either it has a unique $k$-coloring or it is not $k$-colorable.
Intuitively, graphs with this property are hard for certification, since there is no slack in the coloring, thus the transfer of information between different parts of the graph cannot a priori be compressed. 
Perhaps surprisingly, we prove that for these graphs the trade-off conjecture does hold, even in the anonymous model.

\begin{restatable}{theorem}{ThmUniquelyColorable}
	\label{thm:uniquely_colorable}
	For every $d \leqslant \log k$, in the anonymous model where vertices can see at distance~$d$, $O(\log k/d)$ bits are sufficient to certify that a uniquely $k$-colorable graph at distance $d-2$ is $k$-colorable.
\end{restatable}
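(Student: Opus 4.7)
The plan is to combine two ingredients: the rigidity of the coloring provided by the hypothesis and a distributed, chunked encoding of the coloring along a BFS tree. Because every radius-$(d-2)$ ball admits a unique proper $k$-coloring up to renaming, any vertex $v$ can, from its distance-$d$ view, compute the canonical coloring of its own radius-$(d-2)$ ball unaided. The certificates then only have to convey the permutation that turns this canonical coloring into the actual one. A useful side consequence of the hypothesis is that every radius-$(d-2)$ ball must use all $k$ colors, since otherwise one could permute the unused colors freely and contradict uniqueness.

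The construction goes as follows. If $G$ is $k$-colorable, fix the unique (up to renaming) coloring $c$. The prover selects a canonical root $r_0$ (marked by one bit in $r_0$'s certificate) and uses an implicit BFS tree $T$ rooted at $r_0$. Each color $c(v)\in\{0,\dots,k-1\}$ is written in binary and split into $d$ chunks of $\lceil \log k/d\rceil$ bits. The prover partitions $T$ into ``blocks'' of depth roughly $d/2$: every block has a root at a tree-depth that is a multiple of $d/2$, together with its descendants down to $d/2-1$ further levels. Inside each block, every vertex $u$ stores the chunk of the block-root's color indexed by $u$'s tree distance to that block-root. Thus each vertex carries a single chunk, of size $O(\log k/d)$, and because the block has diameter at most $d-2$ all chunks of the block-root's color lie inside any block-vertex's radius-$d$ ball.

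At verification, $v$ uses its distance-$d$ view to locate the block-roots in its ball, gathers the chunks encoding their actual colors, and reconstructs them. In parallel, it computes the canonical coloring of its radius-$(d-2)$ ball using the hypothesis. Comparing the canonical colors of the visible block-roots with their decoded actual colors yields the permutation $\pi_v$ between canonical and actual color names. Applying $\pi_v$ to $v$'s canonical color gives $c(v)$; the same is done for the neighbors of $v$, and $v$ accepts iff no incident edge is monochromatic. Soundness follows from the standard argument: on a non-$k$-colorable input no choice of chunks can produce locally consistent colorings on every ball.

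The main obstacle is that two properties must hold simultaneously: (i) each vertex stores only a constant number of chunks, to keep certificate size $O(\log k/d)$, and (ii) for every $v$, the block-roots visible in the radius-$d$ ball provide representatives of all $k$ canonical color classes, so that the renaming is fully pinned down (knowing only one anchor color determines only one value of $\pi_v$). Point (ii) is the subtle one: a single block-root may not span every canonical color appearing in $v$'s ball. I expect the fix to be a refinement of the construction, such as designating inside each block one ``anchor'' per canonical color class of the block and chunking each anchor's color independently along its own sub-path, which is affordable because the radius-$(d-2)$ ball contains at least one vertex of each of the $k$ canonical colors and the block still has $O(d)$ vertices available to host the chunks.
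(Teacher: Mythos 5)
Your high-level plan --- use local unique colorability to recover canonical color classes, and spread the $\log k$ bits naming each class over $\Theta(d)$ nearby vertices in chunks of size $O(\log k/d)$ --- is the right one, and it matches the paper's strategy. But the construction as you describe it does not work, and the place where it breaks is exactly the point you flag yourself as ``the subtle one'' and leave to an unspecified ``fix.'' If each vertex stores only a chunk of \emph{the block-root's} color, then decoding tells $v$ the actual color of a single canonical class (the block-root's), i.e., one value of the renaming $\pi_v$. That determines $v$'s own color only if $v$ is in the block-root's class; otherwise $\pi_v$ is undetermined on $v$'s class and the scheme cannot output $c(v)$. Your proposed repair --- one anchor per canonical class, each chunked ``along its own sub-path'' --- is where the actual content of the theorem lives, and as stated it threatens property (i): if the $k$ sub-paths overlap, a vertex must host several chunks. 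The paper's resolution is that the ``sub-path'' for class $\mathcal{C}$ is the set of vertices \emph{of class $\mathcal{C}$ itself}: each vertex $v$ at distance $j$ from the nearest marked vertex stores the $\lceil j/3\rceil$-th digit of $\varphi(v)$ in base $\lceil k^{1/\delta}\rceil$ with $\delta=\lfloor (d-2)/9\rfloor$. This guarantees one chunk per vertex with no collisions, and decodability rests on a small but essential lemma (Lemma~\ref{lem:layers}): in a uniquely colorable ball, every color class has a representative in every three consecutive layers around any vertex, so each digit of each class is present in each thickness-$3$ strip. Without this lemma (or a substitute), even the repaired construction has no guarantee that all digits of $v$'s class are readable in $v$'s view.

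A second, independent problem is your use of a single canonical root $r_0$ and a global BFS tree in the anonymous model. A vertex at distance more than $d$ from $r_0$ cannot verify the tree or its depth-based block decomposition, and certifying that exactly one root exists (or that the claimed depths are globally consistent rather than a spoofed cyclic pattern) is precisely the kind of global statement that cannot be checked with $o(\log n)$ anonymous certificates. The paper avoids this entirely by marking a maximal independent set $X$ at distance $6\delta+1$ with a single extra certificate value: the verifier only needs \emph{existence} of a marked vertex within distance $6\delta$ (maximality of $X$ guarantees it on yes-instances, and a vertex rejects if it sees none), plus consistency of the decoded value $a(v)$ across all nearby marked vertices; no global tree is needed, and soundness follows directly because accepted instances yield an explicit proper coloring $v\mapsto a(v)$. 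Your soundness sentence (``no choice of chunks can produce locally consistent colorings'') would need to be replaced by such an explicit extraction argument.
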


Let us briefly explain the main ingredient of that proof. Since the coloring is locally unique, by looking far enough, a node can decide which other nodes are in the same color class in a $k$-coloring, if one exists globally. 
This is not enough to be sure that the graph is $k$-colorable, because the color classes might not coincide nicely. 
For example, every cycle is uniquely 2-colorable at distance 1, but this does not certify that the full cycle is 2-colorable.
The key idea is that a node will recover the name of its  color by gathering the bits of information spread on the nodes \emph{of its own color class} at distance at most $d$ from it. 
This allows to spread the information of the color classes on several vertices and then use smaller certificates. Slightly more formally, the certifier will assign to a well-chosen subset of nodes $X$ a special certificate. The vertices of $X$ are chosen far enough from each other so that we can store information of the different color classes on the nodes close to it. But they are also chosen not too far away from each other to be sure that all the vertices are close to a vertex of $X$. Now every node just have to perform the following verification: (i) if it is too far from any vertex of $X$, it rejects, (ii) it checks that its color is the same for all vertices of $X$ close to it, (iii) it checks that for all the vertices of $X$ close to it, the color associated to it is different from colors given to its neighbors.

After Theorem~\ref{thm:uniquely_colorable}, a question is whether we can go down to a constant number of bits, or in other words, how large the constant of the  big-O needs to be. 
We prove that at a distance $O(\log k)$, \emph{one bit is sufficient} (in other words only two different certificates are needed). Note that it means in particular that we can avoid the special certificate of the proof of Theorem~\ref{thm:uniquely_colorable} by being very careful on how we represent each vertex of $X$ and the colors classes around it (and proving that even if a node makes a mistake in the choice of $X$ when some vertices are indistinguishable, its decision will be anyway correct).

\begin{restatable}{theorem}{ThmTwoCertificates}
	\label{thm:2_certificates_are_sufficient}
	In the anonymous model where vertices can see to distance $d$, $2$ certificates are enough to certify that a uniquely $k$-colorable graph at distance $d-2$ is $k$-colorable, if $d\geqslant 9\lceil \log_2 k\rceil +8$.
\end{restatable}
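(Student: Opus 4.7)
Let $\ell = \lceil \log_2 k \rceil$. The plan is to refine the cluster-based scheme of Theorem~\ref{thm:uniquely_colorable} by replacing the special third certificate (the marker of the set $X$) by a recognizable bit pattern, so that only $\{0,1\}$-certificates remain. The prover fixes a $k$-coloring $c$ of $G$ and chooses a set $X$ that is $r$-dominating and $R$-separated with $r,R = \Theta(\ell)$; inside each ball $B(x,\rho)$ with $\rho = \Theta(\ell)$ it encodes the names of the $k$ color classes. The three nested ``layers''---encoding radius $\rho$, pairwise separation $R$, covering radius $r$---must fit inside the view of radius $d$, and the extra shift $-2$ between $d$ and the hypothesis ``uniquely $k$-colorable at distance $d-2$'' is what allows the verifier to reconstruct the color partition from its view. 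Summing these constants yields the bound $d \geq 9\ell + 8$.

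For the encoding inside $B(x,\rho)$: because the coloring is unique up to renaming at distance $d-2 \geq \rho$, the verifier, from the unlabeled rooted ball $(B(x,\rho),x)$, already agrees with the prover on the partition into $k$ classes modulo a permutation of their names. To transmit the binary name of each class $C_i$ on $\ell$ bits, the prover reserves a canonical tuple of $\ell$ vertices in $C_i \cap B(x,\rho)$ (chosen by a rule that depends only on the isomorphism type of the rooted colored ball, e.g. the $\ell$ vertices of $C_i$ closest to $x$, tie-broken by BFS order of the ball) and writes the name there. The center $x$ itself is not tagged by a distinguished certificate but is pinned down by a recognizable local bit pattern imprinted by the prover. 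The verifier at $v$ looks to distance $d$, locates every $x \in X$ inside its view, reads off the $k$ names at each such $x$, and accepts iff: (i)~its view contains at least one center, (ii)~the name decoded for $v$'s color class coincides at every center in view, and (iii)~no neighbor of $v$ carries $v$'s name. Completeness is by construction, and soundness follows by gluing the per-cluster names along overlapping views.

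The main obstacle, flagged in the paragraph preceding the statement, is anonymity: two vertices of $B(x,\rho)$ with the same color and the same local neighborhood are indistinguishable, so the verifier's canonical tuple for $C_i$ may not be the one intended by the prover, and even the center $x$ may be mis-identified when the local structure has automorphisms. The heart of the proof will be a robustness lemma stating that the canonical rule is invariant under the automorphisms of $(B(x,\rho),x)$ preserving the color partition: any two admissible tuples lie in the same orbit and hence carry identical bits, and any ``mistaken'' center produces the same family of $k$ names as the intended one. Once this invariance is in place, the verification checks cannot spuriously fail on a yes-instance, and the reasoning used for Theorem~\ref{thm:uniquely_colorable} adapts to give soundness in general.
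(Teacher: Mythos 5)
Your high-level plan coincides with the paper's: keep the cluster framework of Theorem~\ref{thm:uniquely_colorable}, drop the special marker certificate, identify the centers by a recognizable $\{0,1\}$-pattern, and argue that a mis-identified center decodes to the same names. But the concrete encoding you propose --- writing the $\ell$ bits of the name of class $C_i$ on an \emph{ordered canonical tuple} of $\ell$ vertices of $C_i\cap B(x,\rho)$, with ties broken by ``BFS order of the ball'' --- does not survive anonymity, and your robustness lemma cannot repair it. Any selection rule that depends only on the isomorphism type of the rooted colored ball is constant on orbits of $\mathrm{Aut}(B(x,\rho),x,\text{partition})$; in particular it cannot break ties between two vertices of $C_i$ lying in the same orbit. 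Your lemma asserts that any two admissible tuples lie in one orbit and ``hence carry identical bits,'' but what is needed is stronger and generally false: if positions $j\neq j'$ of the tuple are exchanged by an automorphism, they are forced to carry the same bit, whereas the $j$-th and $j'$-th binary digits of the class name are arbitrary (the names must be globally consistent across all centers, so the prover cannot locally rename classes to match local symmetries). So the positional encoding is simply not writable in general.

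The paper's proof supplies exactly the missing automorphism-invariant index: the $i$-th bit of $\varphi(v)$ is written on \emph{every} vertex of $v$'s color class lying in the distance strip $N^{3i}(u)\cup N^{3i+1}(u)\cup N^{3i+2}(u)$ around a center $u$. Distance to the center is invariant, Lemma~\ref{lem:layers} guarantees each strip meets each color class (so every bit is written somewhere), and the verifier checks consistency within each strip. The marker pattern (certificate $1$ on $X$ and $N(X)$, $0$ on $N^2(X)$) consumes the first two layers, which is why the strips start at distance $3$ and why the constants are $6\delta+5$ for the separation and $9\delta+8$ for $d$; your ``summing the constants'' step would have to account for this. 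Finally, your claim that a mistaken center yields the same names needs an actual argument: the paper proves that any vertex matching the pattern is a \emph{twin} of a true center (Claim~\ref{claim:special_vertex}), hence has identical balls of radius $\geqslant 2$ and decodes identically. Without the distance-strip encoding and the twin claim, the soundness and completeness of your scheme do not go through.
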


The scheme for this is built on the same framework, but with more ideas and technicalities, and we refer to the technical part for the details.

Now that we proved that locally uniquely colorable graphs are not going to help, we wonder what would be a good candidate to disprove the trade-off conjecture (of course one might not exist, if the conjecture is true). 
We believe that for graphs that are \emph{almost} locally uniquely colorable (that is, they have few correct colorings locally, up to color renaming) the proof of Theorem~\ref{thm:uniquely_colorable} could be adapted, with more layers of technicalities.
Hence, one might go for graphs that have \emph{many} possible colorings. 
This could make sense from a communication complexity point of view, because there also exist difficult problems with many correct pairs of inputs (\emph{e.g.} the disjointness problem). 
What is problematic here is that unlike in communication complexity, in our case, on a $k$-colorable graph, the prover has the choice of the coloring, thus can choose one that is easier to encode compactly.

Since graphs that are very structured seem to admit a linear scaling, another approach could be to consider graphs with a more chaotic structure.
Random graphs are natural candidates here, but even if we could come up with a satisfying definition for what it means to certify the colorability of a random graph, it is not clear that this would be hard. 
Indeed, there exist efficient algorithms to $k$-color $k$-colorable graphs (\emph{e.g.} \cite{Turner88}) that exploit only the local structure of the graph, so one could even hope for a verification without certificates. 

In the end, graphs with large girth and large chromatic number might be the right type of graphs to consider because they have both a large number of colorings locally and a rigid structure globally. Several constructions for these have been designed (see \emph{e.g.}~\cite{Erdos59,Morgenstern94,Coja-OghlanEH16}), but they are randomized or complex, which makes their study rather challenging.

\paragraph*{Certification versus output encoding}


More generally, it is interesting to explore the links between (1) certifying that a given structure exists (\emph{e.g.} a coloring here, but also a perfect matching a bit later in the paper) (2) explicitly describing it, and (3) implicitly describing it, that is giving enough information to recover the structure, but not directly (\emph{e.g.} with fewer bits than the natural encoding). 
We refer to \cite{BickKO22} for the related topic of distributed zero-knowledge proofs.

For colorability, the case of perfect graphs is a nice example of the discrepancy that can exist between these. 
A graph is \emph{perfect} if its chromatic number is equal to the size of its maximum clique for every induced subgraph. Many classic graph classes are perfect, \emph{e.g.} bipartite graphs, chordal graphs, comparability graphs (see \cite{Trotignon13} for a survey on perfect graphs). 
If we are promised to be in such a graph, the verification (at distance 2) is very easy: a vertex just has to check whether it belongs to a clique of size larger than $k$ (in which case it rejects, otherwise it accepts). 
Thus, no certificates are needed for colorability certification, while it seems really difficult to have the nodes output a coloring without giving them quite a lot of information. Proving formally such a discrepancy would be a nice result.

Finally, let us mention yet another lower bound approach, related to problem encodings. 
One can note that if we have certification using $s$ bits for a property, where the local verification algorithm runs in polynomial time in the network size, then we have a \emph{centralized} decision algorithm for this property in time $s^n\text{poly}(n)$. 
Indeed, one can just enumerate all the possible certificate assignments of this size and check whether one is accepted. 
One could hope that having a too-good bound for certification would imply that some big conjecture (\emph{e.g.} SETH, the strong exponential hypothesis) is wrong, and get a conditional lower bound.\footnote{Such bounds are not common in certification, but not unseen, see~\cite{EmekG20}.}
Unfortunately, this does not help us much for certification, since there are known algorithms for computing the chromatic number of a graph in time $O(c^n)$ with $c<3$ (\emph{e.g.} in~\cite{Byskov04}).

\subsection{Overview for domination}

The \emph{domination at distance $t$} property applies to graphs with inputs: every node should be labeled with $0$ or $1$, and every node should be at distance at most $t$ from a node labeled $1$. 
To avoid confusion, let us highlight that in the domination property, the inputs are part of the instance and are different from the certificates. 

This problem is quite different from colorability, in several ways. 
First, it is a problem that is centered on inputs: for any graph there are inputs that are correct, so in some sense it is the inputs that are certified more than the graph itself. 
(This is actually closer to the original self-stabilizing motivation of certifying the output of an algorithm.)
Second, the natural certification has a different flavor: we give every node its distance to the closest node of the dominating set (that is, a node labeled $1$), and every node checks that these distances do make sense.
One can then consider domination at distance $t$ to be the local analogue of acyclicity, which is certified by providing the nodes of the network with the distance to the root.

Given this analogy with acyclicity, a $\Theta(\log t/d)$ optimal certificate size is expected, and indeed we prove it. 
But we go one step further by providing precise bounds on the number of different certificates.
First, for the anonymous model at distance 1,  we prove that the optimal number of different certificates is basically $\sqrt{t}$.

\begin{restatable}{theorem}{ThmSqrtNeeded}
	\label{thm:sqrt_needed}
	Let $t\in \NN^\ast$. In the anonymous model where vertices can see to distance $1$, at least $\sqrt{t-1}$ different certificates are needed to certify a dominating set at distance $t$, even if the graphs considered are just paths and cycles.
\end{restatable}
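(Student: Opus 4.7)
The plan is a pigeonhole and cut-and-swap argument on paths. Assume for contradiction that a certification scheme uses a set $C$ of fewer than $\sqrt{t-1}$ certificates, so that $|C|^2 \le t-2$. Consider the valid instance $P = v_0 v_1 \cdots v_t$, a path on $t+1$ vertices, where $v_0$ carries the input label $1$ and the other vertices carry label $0$; since every vertex is at distance at most $t$ from $v_0$, this is a valid dominating set at distance $t$, so by completeness there exists an accepting certificate assignment $c_0, c_1, \ldots, c_t \in C$.

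Now examine the $t-1$ ordered pairs $(c_i, c_{i+1})$ for $i = 1, \ldots, t-1$, each sitting on an edge between two $0$-labeled vertices. Since there are only $|C|^2 \le t-2 < t-1$ possible pairs, the pigeonhole principle yields indices $1 \le i < j \le t-1$ with $(c_i, c_{i+1}) = (c_j, c_{j+1})$. Take two disjoint copies $P^{(1)}$ and $P^{(2)}$ of $P$, certified identically, delete the edges $(v_i^{(1)}, v_{i+1}^{(1)})$ and $(v_j^{(2)}, v_{j+1}^{(2)})$, and add the edges $(v_i^{(1)}, v_{j+1}^{(2)})$ and $(v_{i+1}^{(1)}, v_j^{(2)})$. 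The resulting graph $G$ is a disjoint union of two paths, so it lies in the allowed family.

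Because the pair equality together with the fact that all four swapped neighbors are $0$-labeled ensures that the distance-$1$ view of each of the four modified vertices is unchanged, every vertex of $G$ has the same local view as in the original certified instance, and therefore accepts. However, tracing the connected components shows that one of them is the path $v_0^{(2)}, v_1^{(2)}, \ldots, v_j^{(2)}, v_{i+1}^{(1)}, \ldots, v_t^{(1)}$ whose unique $1$-labeled vertex is the endpoint $v_0^{(2)}$ and whose length in edges is $j + (t-i) > t$ since $j > i$. Therefore its other endpoint $v_t^{(1)}$ sits at distance more than $t$ from every $1$-labeled vertex: $G$ is an invalid instance that is nevertheless fully accepted, contradicting soundness, and hence $|C| \ge \sqrt{t-1}$.

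The two subtle points to handle are (1) verifying view preservation at the corner cases $i = 1$ (where one cut vertex neighbors the $1$-labeled endpoint) and $j = t-1$ (where one cut vertex is the far endpoint), which works precisely because we compare \emph{pairs} of certificates rather than single certificates, and (2) checking that the companion component, the one containing $v_0^{(1)}$, is actually a valid dominating-set instance (its farthest vertex is at distance $i + (t-j) < t$ from $v_0^{(1)}$), so that the contradiction really comes from the bad component alone.
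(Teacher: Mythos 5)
Your proof is correct, and it shares the paper's key first step -- pigeonholing on the $t-1$ ordered pairs of consecutive certificates along the $0$-labeled part of $P_{t+1}$ -- but it derives the contradiction from a different counterexample. The paper rolls the repeated segment up into a single \emph{unlabeled cycle} of length $j-i$ (with small-case variants for $j-i\in\{1,2\}$), which is rejected-worthy because it contains no labeled vertex at all; you instead cross two certified copies of the path, swapping the edges $(v_i^{(1)},v_{i+1}^{(1)})$ and $(v_j^{(2)},v_{j+1}^{(2)})$ for $(v_i^{(1)},v_{j+1}^{(2)})$ and $(v_{i+1}^{(1)},v_j^{(2)})$, producing a path of length $j+t-i>t$ whose far endpoint is undominated. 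Your view-preservation check is sound (each of the four touched vertices trades a neighbor for one with an identical certificate and label, and all endpoint corner cases work out), and the extra remark about the companion component is actually unnecessary: it suffices that the whole accepted graph is an invalid instance. The paper's cycle construction is slightly more economical and is the one that generalizes directly to the distance-$d$ lower bound (Theorem~\ref{thm:LB_dominating}) via the periodicity of the certificate sequence; your crossing argument has the virtue of producing a violation of a different flavor (a vertex too far from the dominating set rather than an instance with no dominator at all) and of staying close to path-like instances. One minor caveat: your constructed graph is a \emph{disjoint union} of two paths, whereas the theorem claims hardness already for ``paths and cycles''; to be literal you should observe that, since views at distance $1$ do not cross components, the over-long component alone with its induced certificates is already a single accepted path violating the property, so you may discard the valid component.
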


The lower bound is again quite expected: the proof for this kind of bound is based on arguing whether the same pair of certificates can appear several times on a path or not, thus the square root pops up naturally. 

\begin{restatable}{theorem}{ThmSqrtSufficient}
	\label{thm:sqrt_sufficient}
	In the anonymous model where vertices can see to distance 1, $3 \cdot \lceil \sqrt{t} \rceil$ certificates are sufficient to certify a dominating set at distance $t$.
\end{restatable}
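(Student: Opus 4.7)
My plan is to exhibit a certification scheme on $3\lceil\sqrt{t}\rceil$ symbols that encodes each vertex's distance to the nearest selected vertex via a de Bruijn-like alphabet. Setting $s = \lceil\sqrt{t}\rceil$, I will fix a function $f \colon \{0, 1, \ldots, t\} \to \{0, 1, \ldots, 3s - 1\}$ with two properties: (i) the $t$ ordered pairs $(f(i), f(i+1))$ for $i = 0, \ldots, t-1$ are pairwise distinct, and (ii) the symbol $\alpha_0 := f(0)$ does not appear among $f(1), \ldots, f(t)$. Such an $f$ can be built by prepending $\alpha_0$ to a de Bruijn sequence of order $2$ on the remaining $3s-1$ symbols, whose length $(3s-1)^2 \ge t$ always suffices.

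The prover then assigns $c(v) = f(d(v))$ to each vertex $v$, where $d(v)$ is its distance to the nearest selected vertex. A vertex $v$ accepts iff either it is selected and $c(v) = \alpha_0$, or it is not selected and has a neighbor $u$ with $(c(u), c(v)) = (f(i), f(i+1))$ for some $i \in \{0, \ldots, t-1\}$. Completeness is immediate: the honest BFS distances, composed with $f$, satisfy every local check.

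For soundness, property~(ii) forces every accepting non-selected vertex to carry a certificate different from $\alpha_0$, since no valid pair ends with $\alpha_0$. Property~(i) then ensures that the pair used at any non-selected vertex uniquely determines a virtual distance $p(v) = i + 1 \in \{1, \ldots, t\}$. Iterating the predecessor relation yields a strictly decreasing sequence of virtual distances along an actual path of the graph, which must terminate at a vertex with certificate $\alpha_0$, i.e.\ a selected vertex. The resulting path in the graph has length $p(v) \le t$, so $v$ is at graph-distance at most $t$ from a selected vertex, as required.

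The main obstacle I expect is the combinatorial design of $f$: one has to produce a de Bruijn sequence of order $2$ on $3s-1$ symbols and then verify that prepending the fresh symbol $\alpha_0$ simultaneously enforces property~(ii) and preserves the pairwise distinctness of the consecutive pairs. One should also check a few boundary situations, notably small values of $t$ and the case where $f$ itself is not injective on $\{1, \ldots, t\}$; but the crucial point is that it is the pairs, not the symbols themselves, that carry the position information. Once this combinatorial design is in place, both soundness and completeness reduce to the direct predecessor-chain induction sketched above.
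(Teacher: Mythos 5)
Your scheme is not sound, and the gap is exactly the step you wave at with ``iterating the predecessor relation yields a strictly decreasing sequence of virtual distances.'' A non-selected vertex $v$ accepts as soon as it has \emph{one} neighbor $u$ with $(c(u),c(v))=(f(i),f(i+1))$; this pins down $v$'s virtual distance $i+1$. But the predecessor $u$ is only required to have \emph{some} valid incoming pair $(c(w),c(u))=(f(j),f(j+1))$ of its own, and since $f$ is not injective on $\{1,\dots,t\}$ (it cannot be: $t>3\lceil\sqrt t\rceil-1$ for all but small $t$), the equality $f(j+1)=f(i)$ does not force $j+1=i$. So the virtual distances do not chain, and nothing decreases. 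Concretely, take $1\le a<b\le t$ with $f(a)=f(b)$ (pigeonhole), and consider a cycle on $b-a$ vertices, none selected, labeled $f(a),f(a+1),\dots,f(b-1)$ in order. Every vertex sees a neighbor completing a valid pair (the wrap-around pair is $(f(b-1),f(b))$ since $f(b)=f(a)$), so all vertices accept, yet there is no selected vertex at all. This is precisely the family of bad instances used in the paper's matching lower bound (Theorem~\ref{thm:sqrt_needed}), so any correct scheme must defend against it; yours does not. Your remark that ``it is the pairs, not the symbols, that carry the position information'' is the right slogan, but your verification algorithm only binds each vertex to its own incoming pair and never forces consecutive vertices' pairs to overlap consistently.

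The paper's scheme closes this hole with two extra mechanisms that your proposal lacks. First, the certificate is a pair $(d(v)\bmod 3,\ \omega_{d(v)})$, so the mod-$3$ coordinate orients each edge and lets a vertex distinguish predecessors from successors. Second, and crucially, each vertex $u$ does not only check that its predecessor pair $\pi_2(c(v))\pi_2(c(u))$ is a factor of the de Bruijn word; it also checks, for \emph{every} successor neighbor $w$, that the length-three window $\pi_2(c(v))\pi_2(c(u))\pi_2(c(w))$ is a factor. Because each length-two factor occurs exactly once in the de Bruijn word, this overlap check forces the predecessor's position to be exactly one less than the successor's, which is what makes the descending induction (and the termination argument at position~$1$) go through. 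Your construction of $f$ by prepending a fresh symbol $\alpha_0$ to a de Bruijn word is fine as far as it goes, but without an analogue of the orientation coordinate and the three-letter overlap check, the soundness argument cannot be repaired.
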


This upper bound is more surprising. 
It reveals that the natural encoding (consisting of giving the minimum distance to a labeled vertex) is not the best, and that the square root is not an artifact of the lower bound proof but is necessary.
The proof of Theorem~\ref{thm:sqrt_sufficient} is based on an elegant argument using \emph{de Bruijn words}. 
Roughly, for some parameters $k$ and $n$, a de Bruijn word is a (cyclic) word on an alphabet of size $k$, which contains all the factors (that is subwords of consecutive letters) of size $n$ exactly once. 
The idea here is that instead of giving the certificate $r$ to nodes at distance $r$ from a node labeled $1$, we will give them the $r$-th letter in a predefined de Bruijn word that corresponds to $n=2$ and $k=\sqrt{t}$. 
Since every node will see its neighbors' certificates, thus a factor of size at least $2$, it will be able to decode what is its position in the word, thus its distance to the 1-labeled node. 
The parametrization $k=\sqrt{t}$ ensures that the de Bruijn word of the correct length exists. 
Finally, we generalize these bounds to larger distances using generalizations of the techniques described above.

\begin{restatable}{theorem}{ThmLBDominating}
	\label{thm:LB_dominating}
	In the anonymous model where vertices can see to distance $d < \frac{t}{2}$, at least $\sqrt[2d]{t-2d+1}$ different certificates are needed to certify a dominating set at distance $t$, even if the graphs considered are just paths and cycles.
\end{restatable}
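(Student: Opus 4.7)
\medskip

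\textbf{Proof proposal.} The plan is to generalise the pigeonhole/pumping argument that gave Theorem~\ref{thm:sqrt_needed} (the $d=1$ case) by replacing the size-$2$ windows of consecutive certificates by size-$2d$ windows, which are exactly what is needed so that the resulting surgery preserves all distance-$d$ views. Concretely, suppose for contradiction that a correct scheme uses only $c$ certificates with $c<\sqrt[2d]{t-2d+1}$, i.e.\ $c^{2d}<t-2d+1$.

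Consider the path $P=v_0v_1\dots v_t$ (so $t$ edges) where $v_0$ is labelled~$1$ and every other vertex is labelled~$0$. This is a valid dominating set at distance~$t$, so the prover can produce an accepting assignment of certificates $c_0,c_1,\dots,c_t$. For each $i\in\{1,\dots,t-2d+1\}$ define the window $W_i=(c_i,c_{i+1},\dots,c_{i+2d-1})$. There are $t-2d+1$ such windows, each taking values in an alphabet of size $c^{2d}<t-2d+1$, so by pigeonhole there exist $1\le a<b\le t-2d+1$ with $W_a=W_b$, that is, $c_{a+j}=c_{b+j}$ for every $j\in\{0,1,\dots,2d-1\}$.

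Now build a new path $P'$ by ``pumping'' the segment from $a$ to $b-1$: give the first $b+2d$ vertices the certificates $c_0,c_1,\dots,c_{b+2d-1}$ and the remaining ones the certificates $c_{a+2d},c_{a+2d+1},\dots,c_t$. The graph $P'$ is still a path with only $v_0$ labelled~$1$, but it has $t+(b-a)$ edges, so its far endpoint lies at distance strictly greater than $t$ from the unique $1$-vertex; hence $P'$ is \emph{not} a valid dominating set at distance~$t$ and must be rejected. To derive the contradiction it therefore suffices to check that every vertex of $P'$ sees, at distance~$d$, a labelled ball isomorphic to some ball of $P$ (so it accepts by the scheme's acceptance on $P$). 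Vertices whose distance-$d$ neighbourhood lies entirely in the unchanged prefix keep the same view as in $P$; vertices far enough past the splice see a sequence shifted by $b-a$ from a view of $P$; and for a vertex at position $b+d+\ell$ ($0\le\ell\le d-1$) straddling the splice, the $(2d+1)$-tuple of certificates it sees is $(c_{b+\ell},c_{b+\ell+1},\dots,c_{b+2d-1},c_{a+2d},\dots,c_{a+2d+\ell})$, which by the equality $W_a=W_b$ rewrites as $(c_{a+\ell},\dots,c_{a+2d+\ell})$, exactly the view of position $a+d+\ell$ in $P$; a symmetric check handles $\ell\in\{-d,\dots,-1\}$.

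The main obstacle is precisely this last case analysis at the splice: one must see that matching the $2d$-window at two positions is the \emph{minimum} amount of overlap forcing equality of every $(2d+1)$-view that crosses the splice (and that views in the anonymous model, which are labelled balls up to isomorphism, are captured by such certificate sequences on paths). Once the splice is verified, the contradiction closes the argument, and the same instance (a path) shows the bound cannot be improved by restricting to paths and cycles, as claimed.
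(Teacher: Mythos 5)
Your proposal is correct in substance but takes a genuinely different route from the paper. Both proofs start identically: take the path with a single labeled endpoint, apply pigeonhole to the $t-2d+1$ windows of $2d$ consecutive certificates, and find two coinciding windows $W_a=W_b$. The divergence is in the no-instance you build from this collision. The paper rolls the repeated segment into a \emph{cycle} of length $(b-a)(2d+1)$ with \emph{no} labeled vertex at all, assigning certificates periodically; every cycle vertex then has the view of an interior path vertex, with no boundary cases to check. You instead \emph{pump} the path, inserting a second copy of the segment between the two windows to push the far endpoint to distance $t+(b-a)>t$ from the unique labeled vertex. Your construction buys a no-instance that is still a path (so the lower bound holds even restricted to labeled paths alone), at the cost of a splice-boundary case analysis that the cycle construction avoids entirely; the paper's construction is cleaner but needs the cycle to realize the violation. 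One bookkeeping slip in your case analysis: the positions whose views straddle the splice are $b+d,\dots,b+3d-1$, not $b,\dots,b+2d-1$. Your parametrization $p=b+d+\ell$ with $0\le\ell\le d-1$ covers only $b+d,\dots,b+2d-1$, and the ``symmetric'' range $\ell\in\{-d,\dots,-1\}$ gives positions $b,\dots,b+d-1$, which lie entirely in the unchanged prefix and are already handled; the genuinely remaining positions are $b+2d,\dots,b+3d-1$, just after the splice. These are disposed of by the same computation (their view $(c_{b+d+m},\dots,c_{b+2d-1},c_{a+2d},\dots,c_{a+3d+m})$ rewrites via $W_a=W_b$ to the view of position $a+2d+m$ in $P$), and the clipping near the right endpoint of $P'$ matches the clipping near $v_t$ in $P$, so the argument closes; but as written the enumeration of cases is off.
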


\begin{restatable}{theorem}{ThmUBDominating}
	\label{thm:UB_dominating}
	In the anonymous model where vertices are allowed to see to distance $d$, $O(\sqrt[d+1]{t})$ certificates are sufficient to certify a dominating set at distance $t$.
\end{restatable}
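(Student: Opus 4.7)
My approach generalizes the de Bruijn certificate construction from Theorem~\ref{thm:sqrt_sufficient} by increasing the window length from $2$ to $d+1$. Let $k = \lceil t^{1/(d+1)} \rceil$, so that $k^{d+1} \ge t+1$, and fix a de Bruijn sequence $B$ of length $k^{d+1}$ over an alphabet $\Sigma$ of size $k$, in which every word of length $d+1$ over $\Sigma$ appears exactly once as a factor. The prover assigns to each vertex $v$ the certificate $c(v) = B[r(v)]$, where $r(v) \in \{0, \dots, t\}$ is the graph distance from $v$ to the nearest $1$-labeled vertex. Since $|\Sigma| = k = O(\sqrt[d+1]{t})$, this meets the claimed certificate count.

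Each vertex $v$ verifies by inspecting its radius-$d$ ball. If some labeled vertex lies in the ball, let $j$ be the distance to the nearest such vertex; $v$ checks that $c(v) = B[j]$ and that along some shortest path from $v$ to that labeled vertex the certificates read $B[j], B[j-1], \dots, B[0]$. Otherwise the true distance from $v$ to a labeled vertex exceeds $d$, and $v$ looks for a path $v = u_0, u_1, \dots, u_d$ inside its ball whose certificate word is a factor of $B$. By the window property of $B$, such a factor identifies a unique position $r$ with $c(u_i) = B[r-i]$, and $v$ accepts provided $r \le t$.

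Completeness is direct: on a correct instance the prover sets $r(v)$ honestly, and extending any shortest path from $v$ toward its closest labeled vertex produces exactly the required $(d+1)$-factor of $B$ (possibly truncated by a labeled vertex when $r(v) < d$). For soundness I would prove by induction on the claimed position $r$ that if $v$ accepts with claim $r \ge 1$, then some neighbor of $v$ accepts with claim $r-1$; the base case $r = 0$ is forced by the labeled-in-ball branch to correspond to a genuinely labeled vertex. Chaining these steps extracts an explicit path of length at most $t$ from $v$ to a labeled vertex, giving soundness.

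The main obstacle is letter coincidences inside $B$: each letter of $\Sigma$ appears $k^d$ times, so the certificate of a vertex by itself does not pin down its claimed position. The soundness argument therefore relies crucially on the window-$(d+1)$ uniqueness of de Bruijn factors to lock the whole $(d+1)$-window seen by $v$ to a single position in $B$, and on a careful interaction between the two verification branches to prevent an adversarial prover from exploiting collisions $B[r] = B[r']$ to detour the inductive chain away from a labeled vertex. Managing this bookkeeping, rather than the combinatorial core, is where the technical work concentrates.
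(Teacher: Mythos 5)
Your high-level plan (a de Bruijn word with window length $d+1$ instead of $2$) is exactly the idea behind the paper's proof, but your scheme drops an ingredient that the paper's argument cannot do without: the certificates in the paper are \emph{pairs} $(i \bmod 3,\ \omega_i)$, and the first coordinate is not bookkeeping --- it is what defines a \emph{decreasing path}, i.e.\ lets a vertex tell locally which neighbors are at distance exactly one less (resp.\ one more) from $S$. With only the letter $B[r(v)]$ as certificate, your verifier has no way to distinguish a neighbor that claims position $r-1$ from one that claims some unrelated $r'$ with $B[r']=B[r-1]$ (each letter occurs $k^d$ times), and this breaks the inductive step of your soundness argument: when $v=u_0$ accepts via the window $B[r]B[r-1]\cdots B[r-d]$, the vertex $u_1$ only sees the length-$d$ suffix of that window inside the overlap of the two balls, which is \emph{not} unique in $B$; the path $u_1$ itself exhibits may spell a completely different length-$(d+1)$ factor, so its claim need not be $r-1$ and the chain of claims need not decrease to $0$. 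The paper closes this exact hole with its steps~(iii) and~(iv): every vertex checks that \emph{all} decreasing paths from it spell the same factor, and that for every neighbor $w$ carrying the successor mark ($\pi_1(c(w))=\pi_1(c(u_0))+1 \bmod 3$) the extended word is still a factor; it is the combination of $u_1$'s check~(iv) applied to $w=u_0$ with $u_0$'s check~(iii) applied to the path $u_0u_1z_1\cdots z_{d-1}$ that forces the two windows to overlap and hence pins $u_1$'s position to $r-1$. Neither check can even be formulated without the direction marker, since $v$ cannot compute its neighbors' claimed positions (their balls are not contained in its own).

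The marker is also needed for completeness, not just soundness. On a correct instance a path leaving $v$ can wander sideways (staying at the same distance from $S$) or accidentally spell some other decreasing factor of $B$, so the family of ``paths whose certificate word is a factor'' is not well controlled and a uniqueness check over it could make an honest vertex reject; with the mod-$3$ coordinate, a decreasing path necessarily loses exactly one unit of distance per step, so all decreasing paths from $v$ spell the same honest window. The fix costs only a factor of $3$ in the number of certificates, so the bound $O(\sqrt[d+1]{t})$ is unaffected, but as written your scheme is not sound, and the difficulty you attribute to ``bookkeeping'' actually requires enlarging the certificate alphabet.
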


\subsection{Overview for perfect matchings}

A graph has the \emph{perfect matching property} if it has a perfect matching, that is, a set of edges such that every vertex belongs to exactly one such edge.
It is yet another type of property, that differs from the others by the fact that it has no built-in parameter (no number $k$ of colors, or distance $t$). 
As we will see, the relevant parameter here is the maximum degree of the graph.

\paragraph*{Perfect matching certification upper bounds}

The natural way to locally encode a matching in distributed computing is to make every node know which port number corresponds to a matched edge (if the vertex is matched). 
In the port-number model, this directly leads to a certification: give the relevant port number to each node, and let them check the consistency. 
This takes $O(\log \Delta)$ bits per node, but it requires port numbers, and the ability for the nodes to know the port numbers of their neighbors. 

Our first result is that we do not need port-numbers (nor any kind of initial symmetry-breaking). 
The strategy for the prover is the following. First, choose a perfect matching, and color the matched edges such that there are no two edges $(u,v)$ and $(w,z)$ of the same color with $(v,w)$ being an edge of the graph. 
We call this a \emph{matching coloring}.
Then, give to every node the color of its matched edge. 
Then every node simply checks that it has exactly one neighbor with the same color.

\begin{restatable}{theorem}{ThmUBPerfectMatching}
	\label{thm:UB_perfect_matching}
	Let $k\in\NN^\ast$. Let $\mathcal{C}$ be a class of graphs such that, for every $G\in\mathcal{C}$, if $G$ has a perfect matching then $G$ has a $k$-matching coloring.
	Then, in the anonymous model at distance 1, $k$ certificates are enough to certify the existence of a perfect matching in $G$ for every $G \in \mathcal{C}$.
\end{restatable}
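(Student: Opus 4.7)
The scheme is already described just before the statement, so the plan is to formalize the two directions of the if-and-only-if that underpins the soundness/completeness of a proof-labeling scheme. Let me fix notation: a $k$-matching coloring consists of a perfect matching $M$ together with a map $c : M \to \{1,\dots,k\}$ such that, whenever $(u,v), (w,z) \in M$ receive the same color, the pairs $\{u,v\}$ and $\{w,z\}$ induce no edge of $G$ other than themselves. The verification rule is: each vertex accepts iff it has exactly one neighbor carrying the same certificate as itself.

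For completeness, assume $G \in \mathcal{C}$ has a perfect matching. By hypothesis on $\mathcal{C}$, $G$ admits a $k$-matching coloring $(M,c)$. The prover hands every vertex $v$ the color $c(e_v)$ of the unique edge $e_v \in M$ incident to $v$. Fix a vertex $v$ and let $u$ be its $M$-partner; then $u$ is a neighbor with the same color, so $v$ has at least one. I need to rule out a second one. If some neighbor $w \neq u$ of $v$ received the same color, then the matching edges $e_v = (u,v)$ and $e_w = (w,w')$ share a color, yet the edge $(v,w)$ of $G$ connects their endpoints, contradicting the definition of a $k$-matching coloring. Hence exactly one same-colored neighbor, and $v$ accepts.

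For soundness, assume some certificate assignment $c : V(G) \to \{1,\dots,k\}$ makes every vertex accept. Define a relation by pairing each $v$ with its unique neighbor $\pi(v)$ sharing its certificate. This is symmetric: if $\pi(v) = u$, then $v$ is a neighbor of $u$ with $c(v) = c(u)$, and since $u$'s same-colored neighbor is unique we must have $\pi(u) = v$. The edges $\{v, \pi(v)\}$ therefore form a perfect matching of $G$, so $G$ has a perfect matching. This concludes the proof.

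The argument is essentially direct given the definitions, so there is no real obstacle; the only point to be mindful of is that the matching coloring hypothesis is exactly what is needed to rule out a second same-colored neighbor in the completeness direction, and that soundness uses no property of $\mathcal{C}$ at all — the existence of a correct certification on $G$ always forces $G$ to admit a perfect matching, regardless of whether $G$ lies in $\mathcal{C}$. Consequently the bound on the number of certificates is tied purely to the matching-coloring parameter of the class.
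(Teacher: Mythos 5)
Your proof is correct and follows essentially the same route as the paper's: the prover encodes the matching coloring as certificates, each vertex checks for exactly one same-colored neighbor, and soundness recovers the matching from the accepting assignment. Your write-up is in fact slightly more careful on the soundness side, where you explicitly verify the symmetry of the pairing $v \mapsto \pi(v)$, a point the paper leaves implicit.
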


We can easily show (Lemma~\ref{lem:coloring_of_a_matching}) that if $G$ has a perfect matching then it admits a $(2\Delta-1)$-matching coloring.

\begin{restatable}{corollary}{CoroMatchingGUB}
	\label{coro:matching_GUB}
	For every graph $G$, $2\Delta-1$ certificates are enough to certify the existence of a perfect matching, in the anonymous model at distance 1.
\end{restatable}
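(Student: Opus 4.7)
The plan is to derive this corollary as an immediate consequence of Theorem~\ref{thm:UB_perfect_matching}, so the real work is to establish Lemma~\ref{lem:coloring_of_a_matching}: every graph $G$ with a perfect matching admits a $(2\Delta-1)$-matching coloring. Once this lemma is in hand, applying Theorem~\ref{thm:UB_perfect_matching} to the class of all graphs with parameter $k = 2\Delta-1$ finishes the corollary.

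To prove the lemma, I would fix any perfect matching $M$ of $G$ and translate the question into a vertex coloring of an auxiliary graph. Define $H$ to have vertex set $M$, with two edges $e, e' \in M$ adjacent in $H$ whenever some endpoint of $e$ is joined to some endpoint of $e'$ by a non-matching edge of $G$. By construction, a proper vertex coloring of $H$ with $k$ colors is exactly a $k$-matching coloring of $G$, so it suffices to bound the chromatic number of $H$.

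The key step is to bound the maximum degree of $H$. For any $e = (u,v) \in M$, the vertex $u$ is incident in $G$ to at most $\Delta - 1$ non-matching edges, each of which reaches some other $M$-edge through its non-$M$ endpoint's mate; the same count holds for $v$. Hence $e$ has at most $2(\Delta - 1) = 2\Delta - 2$ neighbors in $H$. A standard greedy coloring then produces a proper coloring of $H$ with $2\Delta - 1$ colors, which pulls back to a $(2\Delta-1)$-matching coloring of $G$.

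I do not expect any serious obstacle: the only subtlety is being careful in the degree count to ensure that the two contributions coming from $u$ and $v$ are added correctly and that the matching edge $e$ itself is not counted among its own neighbors in $H$. Everything else is a direct combination of the auxiliary-graph construction, a one-line greedy argument, and the invocation of Theorem~\ref{thm:UB_perfect_matching}.
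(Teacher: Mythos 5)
Your proposal is correct and follows essentially the same route as the paper: the paper's proof of Lemma~\ref{lem:coloring_of_a_matching} is exactly a greedy coloring of the matching edges with the same $2\Delta-2$ forbidden-color count, and your auxiliary graph $H$ is just an explicit packaging of that argument (it is the contraction of $M$, which the paper itself uses later for Theorem~\ref{thm:PM_planar}). The corollary then follows from Theorem~\ref{thm:UB_perfect_matching} exactly as you say.
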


To better appreciate the lower bound of the next paragraph, let us mention a surprising result on the upper bound side. 
Note first that if we can get a matching coloring using fewer colors, then we automatically get a more compact certification. 
Now consider a planar graph with an arbitrary perfect matching.
We claim that this perfect matching can be colored with only four colors, even though planar graphs can have an arbitrary large maximum degree. 
Indeed, if we contract the edges of the matching, we still have a planar graph, and we can color this graph with four colors, by the four color theorem. 
Now undoing the contraction, and giving to the matched edges the color of their contracted vertices, we get a proper matching coloring with only four colors.
We also prove a more general result on minor-free and bounded treewidth graphs. 

\begin{restatable}{corollary}{CoroMinorTreewidth}
	\label{coro:minor-treewidth}
	In the anonymous model at distance 1:
	\begin{itemize}
		\item Only $2$ bits are enough to certify the existence of a perfect matching for planar graphs.
		\item Only $O(\log k)$ bits are needed to certify the existence of a perfect matching in $K_k$-minor-free graphs.
		\item Only $\lceil\log_2(k+1)\rceil$ bits are needed to certify the existence of a perfect matching in graphs of treewidth at most $k$.
	\end{itemize}
\end{restatable}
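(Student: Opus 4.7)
The plan is to invoke Theorem~\ref{thm:UB_perfect_matching} in each of the three cases, so the task reduces to exhibiting, for every graph $G$ in the stated class that admits a perfect matching $M$, a matching coloring using few colors. The pivotal observation is that a $k$-matching coloring of $(G,M)$ is \emph{exactly} a proper vertex $k$-coloring of the graph $G/M$ obtained by contracting every edge of $M$ into a single vertex: two contracted vertices $x_{uv}$ and $x_{wz}$ are adjacent in $G/M$ precisely when $G$ contains an edge with one endpoint in $\{u,v\}$ and the other in $\{w,z\}$, which is exactly the conflict forbidden by the matching-coloring definition. So the whole argument reduces to bounding the chromatic number of $G/M$ in each case.

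For each bullet I would then combine this observation with the fact that edge contraction is a minor operation, so the contracted graph inherits whichever minor-closed property was assumed of $G$. For planar graphs, planarity is preserved under contraction, so $G/M$ is planar and the Four Color Theorem gives a proper $4$-coloring; this yields a $4$-matching coloring of $G$, and $\lceil \log_2 4 \rceil = 2$ bits suffice. For $K_k$-minor-free graphs, if $G$ is $K_k$-minor-free then so is every minor of $G$, hence so is $G/M$; the Kostochka--Thomason bound then supplies a proper vertex coloring with $O(k\sqrt{\log k})$ colors, which encodes in $O(\log k)$ bits. For graphs of treewidth at most $k$, treewidth does not grow under taking minors, so $G/M$ still has treewidth at most $k$; such graphs are $k$-degenerate and hence $(k+1)$-colorable by the standard greedy argument on an elimination ordering, giving $\lceil \log_2(k+1) \rceil$ bits.

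The only point that deserves genuine care, and the one I would verify first, is that the stated equivalence between a $k$-matching coloring of $(G,M)$ and a proper vertex $k$-coloring of $G/M$ is really tight: one must check that no loops are created (they are not, since $M$ is a matching in a simple graph) and that multi-edges arising from the contraction are harmless for coloring purposes (they are, since we only need the underlying simple graph). Once this equivalence is confirmed, each bullet becomes a one-line consequence of a classical chromatic-number bound on the appropriate minor-closed class, so I do not anticipate any further obstacle.
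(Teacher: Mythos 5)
Your proposal is correct and follows essentially the same route as the paper: the paper also reduces each bullet to producing a matching coloring via a proper coloring of the contracted graph $G/M$ (packaged there as Theorem~\ref{thm:PM_planar}, for classes closed under edge contraction with bounded chromatic number), then invokes the Four Color Theorem, a bound toward Hadwiger's conjecture, and $k$-degeneracy of treewidth-$k$ graphs, respectively. The only cosmetic difference is that the paper cites the Delcourt--Postle $O(k\log\log k)$ bound for $K_k$-minor-free graphs where you use Kostochka--Thomason; both give $O(\log k)$ bits.
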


Note that, all our upper bounds follow from the existence of $k$-matching colorings. One can easily remark that, if instead of certifying a perfect matching one is simply interested in representing and certifying a matching, we can also do it with the same method by simply coloring all the unmatched vertices with an additional color. Since all the graph classes we mention in the upper bounds are closed under vertex deletion, our results ensure that the following holds: We can represent and certify matchings with $2 \Delta$ certificates for general graphs, $5$ certificates for planar graphs and $t+2$ certificates for graphs of treewidth at most~$t$.

\paragraph*{Perfect matching certification lower bounds}

We prove the following lower bound for perfect matching certification.

\begin{restatable}{theorem}{ThmLBPerfectMatching}
	\label{thm:LB_perfect_matching}
	For every $\Delta \geqslant 2$, in the anonymous model where vertices can see at distance~1, $\Delta$ different certificates are needed to certify the existence of a perfect matching for graphs of maximum degree $\Delta$. 
\end{restatable}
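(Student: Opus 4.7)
I would prove this by contradiction, mirroring the indistinguishability style used for the anonymous $k$-colorability lower bound (Theorem~\ref{thm:col_anon}). Suppose some certification scheme $\mathcal{S}$ correctly certifies the existence of a perfect matching in every graph of maximum degree $\Delta$, at distance $1$ in the anonymous model, using only $\Delta-1$ different certificates. The plan is to exhibit two graphs $G$ and $G'$, both of maximum degree $\Delta$, such that $G$ has a perfect matching, $G'$ does not, but any accepting assignment on $G$ can be transplanted verbatim into an assignment on $G'$ that every vertex still locally accepts, contradicting correctness of $\mathcal{S}$.

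\textbf{Construction and pigeonhole.} I would take $G$ to be a graph built around a central vertex $v$ of degree exactly $\Delta$, with neighbors $u_1,\dots,u_\Delta$, where the rest of the graph is a rigid gadget that \emph{forces} the perfect matching of $G$ to pair $v$ with exactly one of the $u_i$'s (for instance by attaching to each $u_i$ a private pendant substructure that already determines $u_i$'s match, so that exactly one $u_i$ is left free for $v$). To amplify the pigeonhole, one actually uses many disjoint copies of this gadget (or several sibling copies of $v$ sharing the same role), so that in any accepting assignment of $\mathcal{S}$, among the $\Delta$ neighbors of some degree-$\Delta$ vertex two of them, say $u_i$ and $u_j$, receive the very same certificate; this is forced because only $\Delta-1$ different certificates are available. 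A symmetric pigeonhole step in a twin copy of the gadget then exhibits a second pair of vertices $u_i', u_j'$ with the matching certificate pattern.

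\textbf{Crossing and destroying the matching.} From these coincidences I would perform an edge-crossing of the type used for colorability: replace the pair of edges $\{v,u_i\},\{v',u_j'\}$ by $\{v,u_j'\},\{v',u_i\}$ (or a symmetric variant), producing a new graph $G'$. Because the swapped vertices carry identical certificates, every vertex in $G'$ has the same view (own certificate plus multiset of certificates of neighbors) as in $G$, so the scheme $\mathcal{S}$ still accepts. Each vertex loses one incident edge and gains one, so the maximum degree stays at $\Delta$. The whole point of the forcing gadget is that this swap wires together two previously independent ``forced'' regions into one, creating a parity (or Hall-type) obstruction: the merged component has odd size, or Tutte's condition fails at a small separator, so $G'$ admits no perfect matching. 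This contradicts the correctness of $\mathcal{S}$.

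\textbf{Main obstacle.} The hard part is undoubtedly the gadget design: I need a graph of maximum degree $\Delta$ whose perfect matching is rigidly forced through $v$, where the identity of the forced match among $\{u_1,\dots,u_\Delta\}$ cannot be hidden from the certification unless $\Delta$ distinct certificates are used, and such that the very swap enabled by the $\Delta-1$-certificate pigeonhole provably breaks the matching in \emph{every} possible way (not just the one we had in mind), while preserving the degree bound and all local views. Matchings are rather robust under local edge modifications, so constructing a gadget whose matching collapses under any certificate-forced swap is the delicate point; I expect the gadget to rely on a parity invariant of the forced sub-matching, and most of the technical work will lie in verifying rigorously that the crossed graph $G'$ fails Tutte's condition.
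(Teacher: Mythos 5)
Your high-level strategy --- contradiction, pigeonhole forcing a certificate collision among $\Delta$ vertices when only $\Delta-1$ certificates are available, taking two copies of an accepted instance, rewiring edges between them so that no view changes, and then exhibiting a Hall-type violation --- is exactly the skeleton of the paper's proof. But there is a genuine gap, and you have correctly located it yourself: the gadget. The hub-with-pendants gadget you sketch does not work as described. The pigeonhole only tells you that \emph{some} two neighbors $u_i,u_j$ of the hub $v$ share a certificate; it gives you no control over \emph{which} two. If both $u_i$ and $u_j$ are among the $\Delta-1$ neighbors whose matches are already forced inside their private pendants, then crossing $\{v,u_i\},\{v',u_j'\}$ into $\{v,u_j'\},\{v',u_i\}$ leaves the edges $\{v,u_{i_0}\}$ and $\{v',u_{i_0}'\}$ (to the unique free neighbors) untouched, and the original perfect matching of the two-copy graph survives verbatim in $G'$. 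No contradiction. A second issue is that a single edge crossing is generally too weak: to manufacture a set $A$ with $|N(A)|<|A|$ one typically has to move several edges at once while keeping every view intact and every degree at most $\Delta$.

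The missing idea in the paper is the \emph{half-graph} $B_\Delta$: the bipartite graph on $u_1,\dots,u_\Delta$ and $v_1,\dots,v_\Delta$ with $(u_i,v_j)\in E$ iff $i\geqslant j$. Its nested neighborhoods force a \emph{unique} perfect matching ($u_1$ must take $v_1$, hence $u_2$ must take $v_2$, etc.), which is the rigidity you were after, but with the crucial extra property that \emph{any} collision $c(v_{j_1})=c(v_{j_2})$ with $j_1<j_2$ is exploitable: taking two copies $B_\Delta'$, $B_\Delta''$ and redirecting the whole block of edges $(u_i'',v_{j_1}'')$ for $i\in\{j_1,\dots,j_2-1\}$ to $(u_i'',v_{j_2}')$ preserves every vertex's view (each $u_i''$ swaps one neighbor of certificate $c(v_{j_1})$ for one of certificate $c(v_{j_2})$, while $v_{j_1}''$ and $v_{j_2}'$ simply acquire the views of $v_{j_2}$ and $v_{j_1}$ respectively), keeps the maximum degree at $\Delta$, and shrinks the neighborhood of $A=\{v_{j_1}'',v_{j_2}'',\dots,v_\Delta''\}$ down to $\{u_{j_2}'',\dots,u_\Delta''\}$, violating Hall's condition. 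So your parity/Tutte instinct is realized as a Hall violation, but only because the half-graph makes every possible pigeonhole outcome fatal --- this is precisely the property your pendant gadget lacks.
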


Before we sketch the proof, let us note that there is no natural candidate for a lower bound: on the one hand, graphs that are sparse are ruled out by Corollary~\ref{coro:minor-treewidth}, and on the other hand many dense graph classes are known to always have perfect matchings, \emph{e.g.} even cliques, or even random graphs with at least $n\log n$ edges~\cite{ErdosR66}.

The key to the proof is the notion of \emph{half-graphs}. 
Half-graphs are bipartite graphs with vertices $u_1,...,u_n$ and $v_1,...,v_n$  such that $u_i$ is linked to $v_1,...,v_i$. 
In such a graph, there exists a \emph{unique} perfect matching, and it uses the edges $(u_i,v_i)$. 
Indeed, $u_1$ must be matched with $v_1$, thus $u_2$ must be matched with $v_2$ etc. 
Now, we prove by counting argument that if such a graph is accepted with fewer than $\Delta$ certificates, we can create a graph without perfect matching that is also accepted. 
The idea is to take two copies of this certified graph, and then to carefully remove edges from within these graphs to add edges in between. 
Again, we refer to the technical section for the details.

\paragraph*{Discussion of the parameter $\Delta$}

A remarkable feature of the perfect matching property is that the optimal certificate size is completely captured by the maximum degree (if we do not consider restricted graph classes). 
As far as we know, it is the first time that $\Delta$ appears as a natural parameter for local certification. 
This is interesting, since there are few results that use graph parameters other than the size of the graph to measure certificate size. 
To our knowledge, the only pure graph parameter that has been used before and that does not appear as a parameter of the problem is the girth, for approximate certification~\cite{EmekG20}.\footnote{The maximum edge weight also appears for problems in weighted graphs, \emph{e.g.} minimum spanning tree~\cite{KormanKP10}, max-weight matching~\cite{GoosS16, Censor-HillelPP20}.}
It has been highlighted, \emph{e.g.} in~\cite{Feuilloley21} (discussion before open problem 5), that developing a theory of parametrized certification is an interesting research direction.

Let us note however that it is maybe not very surprising that the maximum degree appears as a key parameter for matching-related problems, since celebrated papers have proved that it is central to the complexity of such problems in other models of computation, \emph{e.g.} the LOCAL model~\cite{BalliuBHORS21,BrandtO20}.

	
	\section{Model and definitions}
	\label{sec:model}
	
	
	\subsection{Graphs}
	
	All the graphs we consider are finite, simple, and non-oriented. 
	For completeness, let us recall the following classical graph definitions. Let $G=(V,E)$ be a graph, $u,v \in V$, $S\subseteq V$, $i \in \NN$.
	The \emph{distance between $u$ and~$v$}, denoted by $d(u,v)$, is the length (number of edges) of the shortest path from $u$ to~$v$.
	The \emph{layer at distance $i$ from~$u$}, denoted by $N^i(u)$, is the set of vertices $v\in V$ such that $d(u,v)=i$.
	The \emph{ball of radius $i$ centered in~$u$}, is $B(u,i):=\bigcup_{0 \leqslant j \leqslant i}N^j(u)$.
	The \emph{closed (resp. open) neighborhood} of $u$ is $N[u]:=B(u,1)$ (resp. $N^1(u)$).
	We can similarly define $d(u,S), N(S)$ and $N[S]$ when $S$ is a subset of vertices. 
	

	\subsection{Certification}
	\label{subsec:model-certification}
	
	Let $G=(V,E)$ be a graph, and let $C,I$ be non-empty sets. A \emph{certificate function of $G$} (with certificates in $C$) is a mapping $c : V \rightarrow C$. An \emph{identifier assignment of $G$} (with identifiers in~$I$) is an injective mapping $Id : V \rightarrow I$.
	
	\begin{definition}
		Let $c$ be a certificate function of $G$ and $Id$ be an identifier assignment of~$G$. Let $u \in V$, $d \in \NN^\ast$. The \emph{view of $u$ at distance $d$} consists in all the information available at distance at most $d$ from $u$, that is:
		\begin{itemize}
			\item the vertex $u$;
			\item the graph with vertex set $B(u,d)$ and the edges $(v_1,v_2)\in E(G)$ such that\\ \mbox{$\{v_1,v_2\}\cap B(u,d-1)\neq\emptyset$};
			\item the restriction of $c$ to $B(u,d)$;
			\item the restriction of $Id$ to $B(u,d)$.
		\end{itemize}
	\end{definition}
	
	\begin{remark}
		For a vertex $u$, the subgraph induced by $B(u,d-1)$ is included in the view of $u$ at distance $d$. However, the subgraph induced by $B(u,d)$ is \emph{not} included in the view of $u$ at distance $d$ in general (because the view of $u$ does not contain the edges between two vertices $v_1$ and $v_2$ which are both at distance exactly $d$ from $u$).
	\end{remark}
	
	A \emph{verification algorithm (at distance $d$) in the locally checkable proof model} is a function which takes as input the view (at distance $d$) of a vertex, and outputs a decision, \emph{accept} or \emph{reject}.
	For a property $\mathcal{P}$ on graphs, we say that there is a \emph{certification for $\mathcal{P}$} using $k$ certificates (resp. $k$ bits) if $C$ has size $k$ (resp. $2^k$), and if there exists a verification algorithm~$A$ such that for every graph $G$ and every identifier assignment~$Id$, $G$ has property $\mathcal{P}$ if and only if there exists a certificate function $c$ such that $A$ accepts for every $v\in V(G)$.
	
	A \emph{verification algorithm in the anonymous model} is defined in the exact same way that a verification algorithm in the locally checkable proof model, but vertices are not equipped with a unique identifier (or equivalently, the output is invariant with respect to the identifier assignment).
	
	In this paper, we do not use the proof-labeling scheme model, where the node has access only its own ID, but it appears in a relevant previous work~\cite{ArdevolCFNPR22}.
	
	\paragraph*{Discussion of the anonymous model}
	
	Note that the model we chose as our main model is the anonymous one. Indeed, all our results are for this model, except for the lower bound on colorability which we wanted to strengthen to the model with identifiers in order to fully solve Open Problem 1. 
	
	We think that the natural model for local properties is anonymous. Indeed, the main reason why identifiers are common in local certification is that they are often necessary, which is not the case for local properties. For example, certifying tree-like structures requires certifying that there is a unique connected component, and for this identifiers are needed. Note that in the LOCAL model, identifiers are also used to break symmetry, but in certification, the certificates can do this.
	
	Moreover, the anonymous model is very common in the self-stabilizing literature (see \emph{e.g.}
	\cite{FischerJ06,BernardDPT09,CohenLMPS16}) which is the origin of local certification.
	
	Third, in the paper, we draw a parallel between certification of local properties and LCLs, and the identifiers do not appear in the definition of LCLs.
	
	Finally, in the cases known in the literature where anonymous and ``with-ID'' bounds match (\emph{e.g.} acyclicity), the proofs are similar in spirit, except that the ID case involves more counting arguments (usually assuming that the ID interval is not of linear size), which implies loosing constants everywhere, getting results that are less crisp, and proofs that are more obfuscated.

	\section{Colorability certification}
	
	In this section, we will consider the certification of the \emph{$k$-colorability} property.
	Let us recall that a graph $G$ is said to be $k$-colorable if there exists \emph{proper $k$-coloring} of $G$, that is, is a mapping \mbox{$\varphi:V \rightarrow \{1, \ldots k\}$} such that for all $(u,v)\in E$, $\varphi(u)\neq\varphi(v)$.
	
	For completeness, let us start by proving the following simple upper bound.
	
	\begin{proposition}
		In the anonymous model where vertices can see at distance~1, $k$-colorability can be certified with $\lceil \log k \rceil$ bits.
	\end{proposition}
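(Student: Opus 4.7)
The plan is to use the obvious certification scheme where the prover directly encodes a proper $k$-coloring into the certificates. Concretely, the certificate set will be $C = \{1, \ldots, k\}$, which requires exactly $\lceil \log k \rceil$ bits to encode. The verification algorithm at distance~1 at a vertex $u$ is: look at the certificate $c(u)$ and the certificates of each neighbor, then accept if and only if no neighbor $v$ of $u$ satisfies $c(v) = c(u)$.

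To show correctness, I would argue the two directions separately. For the forward direction, if $G$ is $k$-colorable, fix any proper $k$-coloring $\varphi : V \to \{1, \ldots, k\}$ and set $c := \varphi$; then by definition of a proper coloring no edge is monochromatic, so every vertex accepts. For the reverse direction, suppose there exists a certificate function $c$ such that every vertex accepts; then for every edge $(u,v)$, vertex $u$'s acceptance rule ensures $c(u) \ne c(v)$, so $c$ itself is a proper $k$-coloring of $G$, witnessing that $G$ is $k$-colorable.

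Finally, I would note that the verification algorithm only inspects the vertex and its immediate neighbors, hence is at distance~$1$, and that it does not use identifiers, hence is valid in the anonymous model. There is no real obstacle here: the proof is essentially a direct unpacking of the definitions, and the only subtlety worth flagging is that we are implicitly identifying the set of certificates with the set of colors, so the bit encoding cost is exactly $\lceil \log k \rceil$.
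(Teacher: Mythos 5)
Your proof is correct and is essentially identical to the paper's: both use the proper $k$-coloring itself as the certificate assignment and have each vertex check that no neighbor shares its certificate, with the same two-direction correctness argument. Your write-up is slightly more explicit about the reverse direction, but there is no substantive difference.
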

	
	\begin{proof}
		For a graph $G=(V,E)$ which is $k$-colorable, the certificate function given by the prover is the following. The prover chooses a proper $k$-coloring $\varphi$ of $G$, and assigns certificate $c(u):=\varphi(u)$ to every $u\in V$. The verification algorithm of every vertex $u$ consists in checking if for every neighbor $v$, $c(u)\neq c(v)$. If it is the case, $u$ accepts. Otherwise, $u$~rejects. It is clear that the graph is accepted if and only if it is $k$-colorable.
	\end{proof}
	
	\subsection{Lower bounds}
	
	\ThmColAnon*
	
	\begin{proof}
		Assume by contradiction that there exists a certification of $k$-colorability in the anonymous model using only $k-1$ different certificates. The idea of the proof is to consider a specific $k$-colorable graph $G_k$, for which there must exist an accepting certification function. 
		We will prove that we can flip (\emph{i.e.} cross) two edges of $G_k$ such that the resulting graph $G_k'$ is not $k$-colorable and no vertex is able to detect it (meaning that the local view of each vertex will be unchanged). Thus, in the new graph $G_k'$, each vertex accepts, which is a contradiction since $G_k'$ is not $k$-colorable.
		
		Let us denote by $G_k$ the complete $k$-partite graph, where each set has size $\max(k,3)$. More formally, let $V_1, \ldots, V_k$ be $k$ disjoint sets, each of size $\max(k,3)$. Let $G_k$ be the graph with vertex set \mbox{$V = \bigcup_{i=1}^k V_i$} where $(u,v)$ is an edge if and only if $u$ and $v$ do not belong to the same set $V_i$.
		Since $\{V_1, \ldots, V_k\}$ is a partition of $V$ into $k$ independent sets, $G_k$ is $k$-colorable. Thus, by hypothesis, there exists a certificate function $c : V \rightarrow \{1, \ldots, k-1\}$ such that every vertex accepts.
		By the pigeonhole principle, for every $i \in \{1, \ldots, k\}$ there exist two different vertices \mbox{$u_i, v_i \in V_i$} such that \mbox{$c(u_i)=c(v_i)$} (since each set has size at least~$k$). 
		Again, by the pigeonhole principle, there exist $i\neq j$ such that \mbox{$c(u_i)=c(u_j)$}. Thus, we get $c(u_i)=c(v_i)=c(u_j)=c(v_j)$,  with $u_i, v_i\in V_i$ and $u_j,v_j\in V_j$. Let $G'_k$ be the graph obtained from $G_k$ by removing the two edges $(u_i,u_j), (v_i,v_j)$ and adding the two edges $(u_i,v_i),(u_j,v_j)$, as depicted on Figure~\ref{fig:G, G'}.

		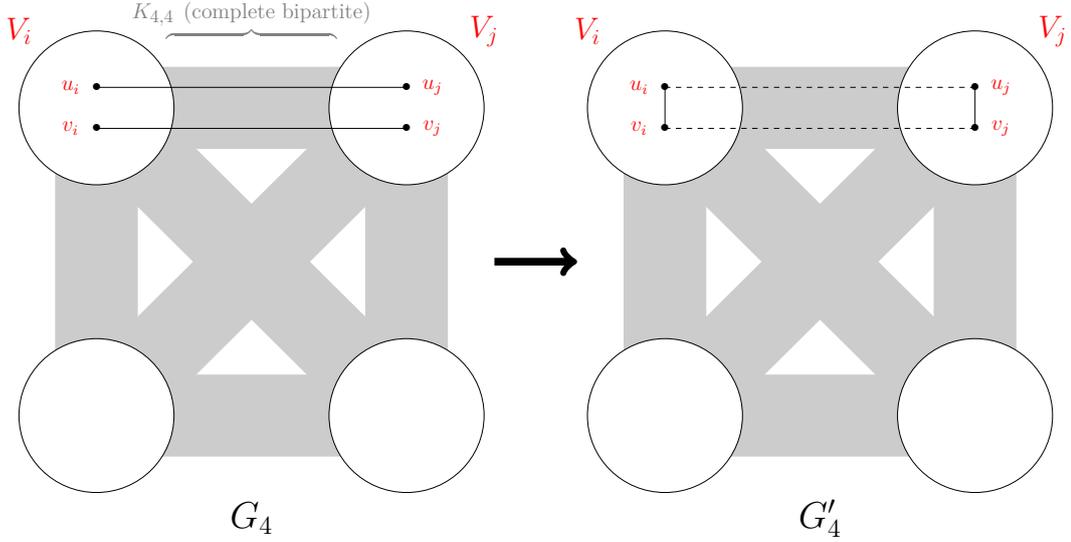
\begin{figure}[h!]
			\centering
			\scalebox{0.68}{
				\begin{tikzpicture}
					\node[circle, draw, minimum size=3cm, fill=white] at (0, 0) () {};
					\node[circle, draw, minimum size=3cm, fill=white] at (6, 0) () {};
					\node[circle, draw, minimum size=3cm, fill=white] at (0, -6) () {};
					\node[circle, draw, minimum size=3cm, fill=white] at (6, -6) () {};
					\node[] at (0,0.4) () {$\bullet$};
					\node[] at (0,-0.4) () {$\bullet$};
					\node[] at (6,0.4) () {$\bullet$};
					\node[] at (6,-0.4) () {$\bullet$};
					
					\draw
					(0,0.4) -- (6,0.4)
					(0,-0.4) -- (6,-0.4);
					
					\begin{pgfonlayer}{bg}
						\draw
						(0,0) edge[line width=16mm, gray!40] node{} (6,0)
						(0,0) edge[line width=16mm, gray!40] node{} (0,-6)
						(0,0) edge[line width=16mm, gray!40] node{} (6,-6)
						(6,0) edge[line width=16mm, gray!40] node{} (6,-6)
						(0,-6) edge[line width=16mm, gray!40] node{} (6,-6)
						(6,0) edge[line width=16mm, gray!40] node{} (0,-6);
					\end{pgfonlayer}
					
					\node[gray] at (3, 1.7) () {\LARGE$\overbrace{~~~~~~~~~~~~~~~~~~}^{K_{4,4}~{\large\text{(complete bipartite)}}}$};
					\node[red] at (-0.5, 0.4) () {\Large$u_i$};
					\node[red] at (-0.5, -0.4) () {\Large$v_i$};
					\node[red] at (6.5, 0.4) () {\Large$u_j$};
					\node[red] at (6.5, -0.4) () {\Large$v_j$};
					\node[red] at (-1.5, 1.5) () {\LARGE$V_i$};
					\node[red] at (7.5, 1.5) () {\LARGE$V_j$};

					\node[circle, draw, minimum size=3cm, fill=white] at (11, 0) () {};
					\node[circle, draw, minimum size=3cm, fill=white] at (17, 0) () {};
					\node[circle, draw, minimum size=3cm, fill=white] at (11, -6) () {};
					\node[circle, draw, minimum size=3cm, fill=white] at (17, -6) () {};
					\node[] at (11,0.4) () {$\bullet$};
					\node[] at (11,-0.4) () {$\bullet$};
					\node[] at (17,0.4) () {$\bullet$};
					\node[] at (17,-0.4) () {$\bullet$};
					
					\draw[dashed]
					(11,0.4) -- (17,0.4)
					(11,-0.4) -- (17,-0.4);
					\draw
					(11,0.4) -- (11,-0.4)
					(17,0.4) -- (17,-0.4);
					
					\begin{pgfonlayer}{bg}
						\draw
						(11,0) edge[line width=16mm, gray!40] node{} (17,0)
						(11,0) edge[line width=16mm, gray!40] node{} (11,-6)
						(11,0) edge[line width=16mm, gray!40] node{} (17,-6)
						(17,0) edge[line width=16mm, gray!40] node{} (17,-6)
						(11,-6) edge[line width=16mm, gray!40] node{} (17,-6)
						(17,0) edge[line width=16mm, gray!40] node{} (11,-6);
					\end{pgfonlayer}
					
					\node[red] at (10.5, 0.4) () {\Large$u_i$};
					\node[red] at (10.5, -0.4) () {\Large$v_i$};
					\node[red] at (17.5, 0.4) () {\Large$u_j$};
					\node[red] at (17.5, -0.4) () {\Large$v_j$};
					\node[red] at (9.5, 1.5) () {\LARGE$V_i$};
					\node[red] at (18.5, 1.5) () {\LARGE$V_j$};
					
					\node[] at (3, -8) () {\huge$G_4$};
					\node[] at (14, -8) () {\huge$G_4'$};
					
					\draw[->, line width=1.5mm] (7.7,-3) -- (9.3,-3);
				\end{tikzpicture}
			}
			\caption{Our constructions for the proof of Theorem \protect \ref{thm:col_anon}, in the case of $k=4$. The gray strips indicate complete bipartite graphs. The edges we are interested in appear explicitly.}
			\label{fig:G, G'}
		\end{figure}
		
		We claim that, with the same certificate function $c$, all the vertices of $G_k'$ accept. Indeed, the only vertices whose neighborhood have been modified between $G_k$ and $G'_k$ are $u_i,v_i,u_j,v_j$. So every vertex in $V \setminus \{u_i,v_i,u_j,v_j\}$ accepts. For $u_i$, the only difference between its neighborhood in $G_k$ and $G'_k$ is that $u_j$ is replaced by $v_i$. Since $c(u_j)=c(v_i)$, the view of $u_i$ is the same in $G_k$ and $G'_k$, so $u_i$ accepts in $G'_k$ as well. Similarly, one can prove that $v_i$, $u_j$, and $v_j$ accept in $G'_k$.
		
		However, $G'_k$ is not $k$-colorable. Indeed, assume by contradiction that it is, and let $\varphi$ be a proper $k$-coloring of $G_k'$. For any $r \notin \{i,j\}$, let $w_r \in V_r$. Let $w_j$ be a vertex of $V_j \setminus \{u_j,v_j\}$ which exists, since each set contains at least $3$ vertices. Then, $K = \{w_1, \ldots, w_{i-1}, w_{i+1}, \ldots w_k\}$ is a clique in $G_k'$, so the $(k-1)$ vertices of $K$ receive pairwise different colors in $\varphi$. Moreover, both $u_i$ and $v_i$ are complete to $K$. So if $G_k'$ is $k$-colorable, then $u_i$ and $v_i$ have to be colored the same, which is a contradiction since $(u_i,v_i)\in E(G_k')$.
	\end{proof}
	
	Obtaining lower bounds is usually harder in the locally checkable proofs model. 
	In this more demanding model, we do not get the exact bound in terms of number certificates, but still we get a bound that would be considered optimal by the usual standards.
	Namely, we will prove that $\Omega(\log k)$ bits are needed to certify the $k$-colorability of a graph. More generally, we prove that, if vertices can see their neighborhoods at distance $d$, at least $\Omega(\log k /d)$ bits are needed to certify $k$-colorability. Note that the graph constructed in the proof of Theorem~\ref{thm:col_anon} has diameter two, thus no lower bound at distance at least $3$ can be obtained from that construction. 
	The lower bound of the following theorem is obtained with a completely different graph.
	
	\ThmLBColoringDistanced*
	
	\begin{proof}
		For $r, s \geqslant 2$, let us denote by $P_{r, s}$ the graph on vertex set $V = \{1, \ldots, r\} \times \{1, \ldots, s\}$ with the following edges:
		\begin{itemize}[label=--]
			\item for all $i \neq j \in \{1, \ldots, r\}$ and all $p \in \{1, \ldots, s\}$, $((i,p),(j,p)) \in E$
			\item for all $i \neq j \in \{1, \ldots, r\}$ and all $p \in \{1, \ldots, s-1\}$, $((i,p),(j,p+1)) \in E$
		\end{itemize}
		
		In other words, the graph $P_{r,s}$ is a sequence of $s$ cliques $\mathcal{K}_1, \ldots, \mathcal{K}_s$, each having size~$r$, with an antimatching\footnote{Remember that an antimatching between two sets of nodes is an edge set containing all the possible edges between the two sets, except for a matching.} between $\mathcal{K}_i$ and $\mathcal{K}_{i+1}$ for all $i\le s-1$. The second coordinate indicates the index of the clique the vertex belongs to, and for all $(i,p)$ with $p \le s-1$, $(i,p)$ is the only vertex in $\mathcal{K}_p$ which is not adjacent to $(i,p+1)$. For instance, the graph $P_{3,5}$ is depicted on Figure~\ref{fig:P_35}.
		
		\begin{figure}[h!]
			\centering
			\begin{tikzpicture}
				\node[circle, draw, minimum size=0.2cm] at (0, 0) (1) {};
				\node[circle, draw, minimum size=0.2cm] at (0, 2) (2) {};
				\node[circle, draw, minimum size=0.2cm] at (0.5, 1) (3) {};
				\node[circle, draw, minimum size=0.2cm] at (3, 0) (4) {};
				\node[circle, draw, minimum size=0.2cm] at (3, 2) (5) {};
				\node[circle, draw, minimum size=0.2cm] at (3.5, 1) (6) {};
				\node[circle, draw, minimum size=0.2cm] at (6, 0) (7) {};
				\node[circle, draw, minimum size=0.2cm] at (6, 2) (8) {};
				\node[circle, draw, minimum size=0.2cm] at (6.5, 1) (9) {};
				\node[circle, draw, minimum size=0.2cm] at (9, 0) (10) {};
				\node[circle, draw, minimum size=0.2cm] at (9, 2) (11) {};
				\node[circle, draw, minimum size=0.2cm] at (9.5, 1) (12) {};
				\node[circle, draw, minimum size=0.2cm] at (12, 0) (13) {};
				\node[circle, draw, minimum size=0.2cm] at (12, 2) (14) {};
				\node[circle, draw, minimum size=0.2cm] at (12.5, 1) (15) {};
				
				\draw
				(1) -- (2)
				(1) -- (3)
				(1) -- (5)
				(1) -- (6)
				(2) -- (3)
				(2) -- (4)
				(2) -- (6)
				(3) -- (4)
				(3) -- (5)
				(4) -- (5)
				(4) -- (6)
				(4) -- (8)
				(4) -- (9)
				(5) -- (6)
				(5) -- (7)
				(5) -- (9)
				(6) -- (7)
				(6) -- (8)
				(7) -- (8)
				(7) -- (9)
				(7) -- (11)
				(7) -- (12)
				(8) -- (9)
				(8) -- (10)
				(8) -- (12)
				(9) -- (10)
				(9) -- (11)
				(10) -- (11)
				(10) -- (12)
				(10) -- (14)
				(10) -- (15)
				(11) -- (12)
				(11) -- (13)
				(11) -- (15)
				(12) -- (13)
				(12) -- (14)
				(13) -- (14)
				(13) -- (15)
				(14) -- (15);
			\end{tikzpicture}
			\caption{The graph $P_{3,5}$. Here the cliques are triangles, and two consecutive cliques are linked by all possible edges except for the horizontal ones (the antimatching).}
			\label{fig:P_35}
		\end{figure}
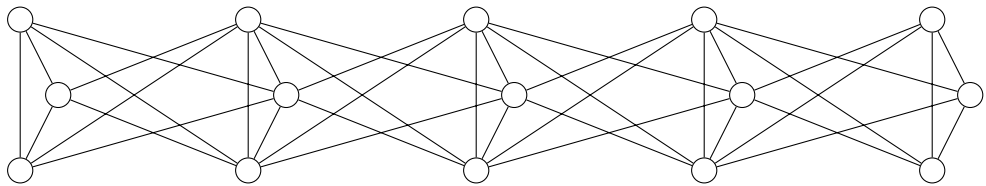
		
		Now, for any pair of permutations $\sigma, \tau$ of $\{1, \ldots, r\}$, let us denote by $G_{r, s}(\sigma, \tau)$ the graph obtained in the following way. We take two copies of $P_{r,s}$, and we denote their cliques by $\mathcal{K}_1, \ldots, \mathcal{K}_s$ and $\mathcal{K}'_1, \ldots, \mathcal{K}'_s$. We give $2rs$ different identifiers to the vertices of the two copies (the identifiers are fixed, they do not depend on $\sigma$ and $\tau$). Finally, we add the antimatching $\{(i, \sigma(i))\}_i$ between $\mathcal{K}_1$ and $\mathcal{K}_1'$, and the antimatching $\{(i, \tau(i))\}_i$ between $\mathcal{K}_s$ and $\mathcal{K}_s'$ (see Figure~\ref{fig:G_sigmatau} for an illustration).
		
		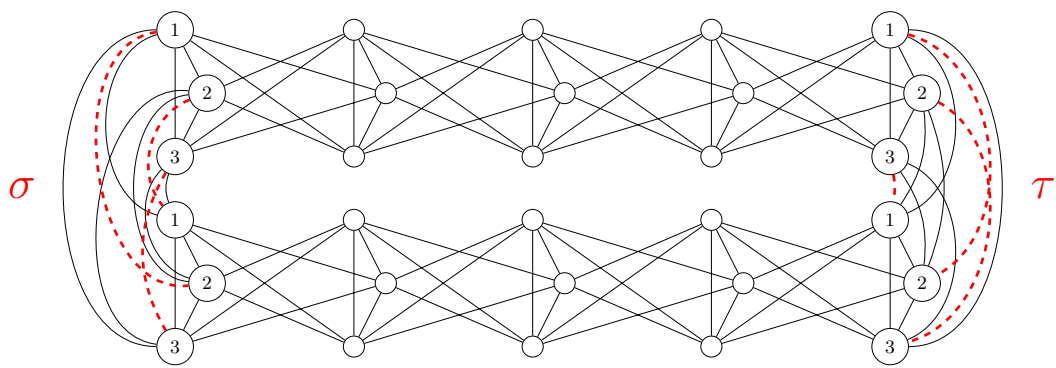
\begin{figure}[h]
			\centering
			\scalebox{0.84}{
				\begin{tikzpicture}
					\node[circle, draw, minimum size=0.2cm] at (0, 3) (1a) {\footnotesize 3};
					\node[circle, draw, minimum size=0.2cm] at (0, 5) (2a) {\footnotesize 1};
					\node[circle, draw, minimum size=0.2cm] at (0.5, 4) (3a) {\footnotesize 2};
					\node[circle, draw, minimum size=0.2cm] at (2.8, 3) (4a) {};
					\node[circle, draw, minimum size=0.2cm] at (2.8, 5) (5a) {};
					\node[circle, draw, minimum size=0.2cm] at (3.3, 4) (6a) {};
					\node[circle, draw, minimum size=0.2cm] at (5.6, 3) (7a) {};
					\node[circle, draw, minimum size=0.2cm] at (5.6, 5) (8a) {};
					\node[circle, draw, minimum size=0.2cm] at (6.1, 4) (9a) {};
					\node[circle, draw, minimum size=0.2cm] at (8.4, 3) (10a) {};
					\node[circle, draw, minimum size=0.2cm] at (8.4, 5) (11a) {};
					\node[circle, draw, minimum size=0.2cm] at (8.9, 4) (12a) {};
					\node[circle, draw, minimum size=0.2cm] at (11.2, 3) (13a) {\footnotesize 3};
					\node[circle, draw, minimum size=0.2cm] at (11.2, 5) (14a) {\footnotesize 1};
					\node[circle, draw, minimum size=0.2cm] at (11.7, 4) (15a) {\footnotesize 2};
					
					\node[circle, draw, minimum size=0.2cm] at (0, 0) (1b) {\footnotesize 3};
					\node[circle, draw, minimum size=0.2cm] at (0, 2) (2b) {\footnotesize 1};
					\node[circle, draw, minimum size=0.2cm] at (0.5, 1) (3b) {\footnotesize 2};
					\node[circle, draw, minimum size=0.2cm] at (2.8, 0) (4b) {};
					\node[circle, draw, minimum size=0.2cm] at (2.8, 2) (5b) {};
					\node[circle, draw, minimum size=0.2cm] at (3.3, 1) (6b) {};
					\node[circle, draw, minimum size=0.2cm] at (5.6, 0) (7b) {};
					\node[circle, draw, minimum size=0.2cm] at (5.6, 2) (8b) {};
					\node[circle, draw, minimum size=0.2cm] at (6.1, 1) (9b) {};
					\node[circle, draw, minimum size=0.2cm] at (8.4, 0) (10b) {};
					\node[circle, draw, minimum size=0.2cm] at (8.4, 2) (11b) {};
					\node[circle, draw, minimum size=0.2cm] at (8.9, 1) (12b) {};
					\node[circle, draw, minimum size=0.2cm] at (11.2, 0) (13b) {\footnotesize 3};
					\node[circle, draw, minimum size=0.2cm] at (11.2, 2) (14b) {\footnotesize 1};
					\node[circle, draw, minimum size=0.2cm] at (11.7, 1) (15b) {\footnotesize 2};
					
					\node at (-2.4, 2.5) () {\huge\textcolor{red}{$\sigma$}};
					\node at (13.6, 2.5) () {\huge\textcolor{red}{$\tau$}};
					
					\draw
					(1a) -- (2a)
					(1a) -- (3a)
					(1a) -- (5a)
					(1a) -- (6a)
					(2a) -- (3a)
					(2a) -- (4a)
					(2a) -- (6a)
					(3a) -- (4a)
					(3a) -- (5a)
					(4a) -- (5a)
					(4a) -- (6a)
					(4a) -- (8a)
					(4a) -- (9a)
					(5a) -- (6a)
					(5a) -- (7a)
					(5a) -- (9a)
					(6a) -- (7a)
					(6a) -- (8a)
					(7a) -- (8a)
					(7a) -- (9a)
					(7a) -- (11a)
					(7a) -- (12a)
					(8a) -- (9a)
					(8a) -- (10a)
					(8a) -- (12a)
					(9a) -- (10a)
					(9a) -- (11a)
					(10a) -- (11a)
					(10a) -- (12a)
					(10a) -- (14a)
					(10a) -- (15a)
					(11a) -- (12a)
					(11a) -- (13a)
					(11a) -- (15a)
					(12a) -- (13a)
					(12a) -- (14a)
					(13a) -- (14a)
					(13a) -- (15a)
					(14a) -- (15a)
					
					(1b) -- (2b)
					(1b) -- (3b)
					(1b) -- (5b)
					(1b) -- (6b)
					(2b) -- (3b)
					(2b) -- (4b)
					(2b) -- (6b)
					(3b) -- (4b)
					(3b) -- (5b)
					(4b) -- (5b)
					(4b) -- (6b)
					(4b) -- (8b)
					(4b) -- (9b)
					(5b) -- (6b)
					(5b) -- (7b)
					(5b) -- (9b)
					(6b) -- (7b)
					(6b) -- (8b)
					(7b) -- (8b)
					(7b) -- (9b)
					(7b) -- (11b)
					(7b) -- (12b)
					(8b) -- (9b)
					(8b) -- (10b)
					(8b) -- (12b)
					(9b) -- (10b)
					(9b) -- (11b)
					(10b) -- (11b)
					(10b) -- (12b)
					(10b) -- (14b)
					(10b) -- (15b)
					(11b) -- (12b)
					(11b) -- (13b)
					(11b) -- (15b)
					(12b) -- (13b)
					(12b) -- (14b)
					(13b) -- (14b)
					(13b) -- (15b)
					(14b) -- (15b)
					
					(1a) edge[bend right=20] node{} (2b)
					(1a) edge[bend right=60] node{} (3b)
					(2a) edge[bend right=90] node{} (1b)
					(2a) edge[bend right=75] node{} (2b)
					(3a) edge[bend right=90] node{} (1b)
					(3a) edge[bend right=85] node{} (3b)
					
					(13a) edge[bend left=68] node{} (13b)
					(13a) edge[bend left=20] node{} (15b)
					(14a) edge[bend left=90] node{} (13b)
					(14a) edge[bend left=68] node{} (14b)
					(15a) edge[bend left=20] node{} (14b)
					(15a) edge[bend left=20] node{} (15b)
					
					(1a) edge[bend right=30, red, very thick, dashed] node{} (1b)
					(2a) edge[bend right=90, red, very thick, dashed] node{} (3b)
					(3a) edge[bend right=55, red, very thick, dashed] node{} (2b)
					
					(13a) edge[bend left=10, red, very thick, dashed] node{} (14b)
					(14a) edge[bend left=68, red, very thick, dashed] node{} (15b)
					(15a) edge[bend left=68, red, very thick, dashed] node{} (13b);
					
				\end{tikzpicture}
			}
			\caption{The graph $G_{3,5}(\sigma,\tau)$, where $\sigma$ is the permutation $(1,2)$ and $\tau$ the cycle $(1,2,3)$.}
			\label{fig:G_sigmatau}
		\end{figure}
		
		\begin{claim}
			\label{claim:sigmatau}
			The graph $G_{r, s}(\sigma, \tau)$ is $r$-colorable if and only if $\sigma=\tau$.
		\end{claim}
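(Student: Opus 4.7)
The plan is to show that any proper $r$-coloring of $G_{r,s}(\sigma,\tau)$ is essentially determined by a single permutation of the $r$ colors per copy of $P_{r,s}$, and then extract the condition $\sigma=\tau$ from the two antimatching constraints at the ends.

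First I would establish a \emph{propagation lemma} for the base graph $P_{r,s}$: in any proper $r$-coloring~$\varphi$, one has $\varphi((i,p+1))=\varphi((i,p))$ for every~$i$ and every $p<s$. The argument is that each clique $\mathcal{K}_p$ has exactly $r$ vertices, so it uses every color once; the vertex $(i,p)$ is adjacent to every vertex of $\mathcal{K}_{p+1}$ except $(i,p+1)$, which forbids the color $\varphi((i,p))$ everywhere in $\mathcal{K}_{p+1}$ other than at $(i,p+1)$. Since this color must still appear somewhere in $\mathcal{K}_{p+1}$, it must sit at $(i,p+1)$. Consequently, any $r$-coloring of $P_{r,s}$ is described by a single permutation: there exists $\pi \in \mathfrak{S}_r$ such that $\varphi((i,p))=\pi(i)$ for all $i,p$.

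Next I would apply this to each of the two copies of $P_{r,s}$ inside $G_{r,s}(\sigma,\tau)$, obtaining permutations $\pi$ and~$\pi'$. Translating the antimatching between $\mathcal{K}_1$ and $\mathcal{K}_1'$: the vertex $(i,1)$ is adjacent to $(j,1)'$ for every $j\neq\sigma(i)$, so $\pi(i)\neq\pi'(j)$ whenever $j\neq\sigma(i)$; since $\pi'$ is a bijection, the unique $j$ with $\pi'(j)=\pi(i)$ must be $j=\sigma(i)$, giving $\pi'(\sigma(i))=\pi(i)$, i.e., $\pi=\pi'\circ\sigma$. The antimatching between $\mathcal{K}_s$ and $\mathcal{K}_s'$ similarly yields $\pi=\pi'\circ\tau$. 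Composing with $(\pi')^{-1}$ gives $\sigma=\tau$, which is the forward direction.

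For the converse, if $\sigma=\tau$ I would exhibit an explicit coloring: set $\pi:=\mathrm{id}$ and $\pi':=\sigma^{-1}$, that is, $\varphi((i,p)):=i$ and $\varphi((i,p)'):=\sigma^{-1}(i)$. Each clique $\mathcal{K}_p$ and $\mathcal{K}_p'$ is rainbow by construction; the antimatching edges inside each copy of $P_{r,s}$ are respected because the coloring is exactly the one produced by the propagation lemma (consecutive rows share a color); and both cross-copy antimatching constraints are satisfied since $\pi'\circ\sigma=\pi'\circ\tau=\mathrm{id}=\pi$. The only step demanding real care is the propagation lemma; once it is in place, the rest is algebraic bookkeeping in $\mathfrak{S}_r$.
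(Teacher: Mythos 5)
Your proposal is correct and follows essentially the same route as the paper: the key step in both is the propagation lemma forcing $\varphi((i,p+1))=\varphi((i,p))$ via the antimatchings, after which the coloring of each copy is a single permutation and the two cross-antimatchings force $\pi=\pi'\circ\sigma$ and $\pi=\pi'\circ\tau$. The paper phrases this as propagation around the whole cycle of cliques $\mathcal{K}_1,\dots,\mathcal{K}_s,\mathcal{K}'_s,\dots,\mathcal{K}'_1$, while you separate the internal propagation from the boundary conditions and make the converse coloring explicit, but the underlying argument is the same.
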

		
		\begin{proof}
			Since $\mathcal{K}_1$ is a clique of size $r$, in a proper coloring with $r$ colors, every color appears exactly once inside $\mathcal{K}_1$. 
			Similarly, every color appears exactly once in a proper coloring of~$\mathcal{K}_2$. For all $i\in\{1, \ldots, r\}$, the vertex $(i,1)$ is a neighbor of every vertex in $\mathcal{K}_2$ except $(i,2)$. Thus, the vertices $(i,1)$ and $(i,2)$ are colored the same. Hence, given a coloring of $\mathcal{K}_1$, there exists a unique way to properly color $\mathcal{K}_2$. This propagates along the cycle of cliques $\mathcal{K}_2, \ldots \mathcal{K}_s, \mathcal{K}'_s, \ldots, \mathcal{K}'_1$, and it leads to a proper coloring of the whole graph if and only if the coloring of $\mathcal{K}'_1$ is compatible with the coloring of $\mathcal{K}_1$, that is, if and only if $\sigma=\tau$.
		\end{proof}
		
		Using Claim~\ref{claim:sigmatau}, we will deduce a lower bound on the number of bits needed to certify $k$-colorability, when vertices are allowed to see at distance $d$. 
		We consider the construction above with parameters $r=k$ and $s=2d$.
		Assume that $m$ bits are sufficient. Then, for any permutation $\sigma$, there exists a certificate function $c_\sigma : V \rightarrow \{0, \ldots, 2^m-1\}$ such that all the vertices of $G_{k,2d}(\sigma,\sigma)$ accept. 
		Suppose that there are two different permutations $\sigma, \tau$ such that the certification functions are the same: $c_\sigma=c_\tau$, that is, all vertices receive the same certificate in both instances.
		Then $G_{k,2d}(\sigma,\tau)$ would be accepted with this certificate function, since the view of every vertex would be identical as its view in $G_{k, 2d}(\sigma,\sigma)$ (for vertices in the left part) or in $G_{k, 2d}(\tau,\tau)$ (for vertices in the right part). This would be a contradiction because $G_{k, 2d}(\sigma,\tau)$ is not $k$-colorable (by Claim~\ref{claim:sigmatau}).
		
		Hence, all the certificate functions $c_\sigma$ are different. In particular, there are no more permutations of $\{1, \ldots, k\}$ than functions $V \rightarrow  \{0, \ldots, 2^m-1\}$. 
		The set $V$ has size $4kd$ since it is made of two copies of a graph of size $k\times 2d$. 
		Therefore, we get $k! \leqslant 2^{4mkd}$, leading to $m \geqslant \frac{\log_2(k!)}{4kd}$. Finally, since $\log_2(k!)=\Omega(k \log k)$, we get the result.
	\end{proof}
	
	\begin{remark}
		Two ingredients of the proof, the  communication complexity insight and the antimatchings to propagate the coloring, have already been used in~\cite{GoosS16}, to get a lower bound on the certification of not-3-colorable graphs.
	\end{remark}

	\subsection{Upper bounds: uniquely colorable graphs}
	
	We can wonder if the lower bound of Theorem~\ref{thm:LB_coloring_distance_d} can be improved and be independent of $d$. We will show that if it is the case, it is impossible with graphs similar to the ones used to prove Theorem~\ref{thm:LB_coloring_distance_d}. Indeed, these graphs have a strong property which can be exploited to get an almost tight upper bound: all the balls of radius $d \geqslant 2$ are uniquely colorable, up to the permutation of color classes.
	This property, that allowed us to control how the coloring can be transmitted and helped us to obtain the lower bound, can also be used to improve the upper bound. Let us first formally define uniquely colorable graphs.
	
	\begin{definition}
		Let $k, n \in \NN$. A graph $G$ is \emph{uniquely $k$-colorable at distance $d$} if, for every vertex~$v$, the graph induced by $B(v,d)$ is uniquely $k$-colorable up to the permutation of the color classes.
	\end{definition}
	
	One can easily note that, for every $r \geqslant 1, s \geqslant 3$, and for all permutations $\sigma, \tau$ of $\{1, \ldots, r\}$, the graphs $G_{r,s}(\sigma, \tau)$ introduced in the proof of Theorem~\ref{thm:LB_coloring_distance_d} are uniquely $r$-colorable at distance 2.
	Note moreover that every graph $G$ that is uniquely $k$-colorable at distance $d$ is either uniquely $k$-colorable at distance $d+1$, or has a ball of radius $d+1$ which is not $k$-colorable.
	
	Before proving the main statements of this section, let us first give some properties satisfied by uniquely $k$-colorable graphs:

	\begin{lemma}
		\label{lem:layers}
		Let $k, d \in \NN$.
		For every uniquely $k$-colorable graph $G$ at distance $d$ which is (globally) $k$-colorable, every color~$\alpha$ and every vertex~$u$, the following holds:
		\begin{enumerate}[label=(\roman*)]
			\item at least one vertex of color $\alpha$ appears in $N[u]$;
			\item for every $i \in \NN$, at least one vertex of color $\alpha$ appears at distance $i$, $i+1$ or $i+2$ from~$u$ (if $N^{i+2}(u)\neq\emptyset$).
		\end{enumerate}
	\end{lemma}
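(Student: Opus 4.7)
The plan is to establish (i) by a local color-swap argument relying on the unique $k$-colorability of each ball, and then derive (ii) in two lines from (i).

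For (i), I would fix any global proper $k$-coloring $c$ of $G$ (which exists by hypothesis), set $\beta = c(u)$, and suppose towards a contradiction that color $\alpha$ does not appear in $N[u]$. In particular $\alpha \neq \beta$ and $\alpha$ is absent from $N(u)$, so the function $c'$ obtained from $c$ by only changing the color of $u$ to $\alpha$ is still a proper $k$-coloring of the subgraph induced by $B(u,d)$. Unique $k$-colorability of that ball yields a permutation $\pi$ of colors with $c'|_{B(u,d)} = \pi \circ c|_{B(u,d)}$. Comparing on $B(u,d) \setminus \{u\}$ forces $\pi$ to fix every color appearing there, while comparing at $u$ itself gives $\pi(\beta) = \alpha$. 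Because ``uniquely $k$-colorable'' entails chromatic number exactly $k$, every color --- in particular $\alpha$ --- occurs in $B(u,d)$, and since $c(u) = \beta \neq \alpha$, it must occur in $B(u,d) \setminus \{u\}$. Therefore $\pi(\alpha) = \alpha$, which together with $\pi(\beta) = \alpha$ violates injectivity of $\pi$, the desired contradiction.

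For (ii), I would simply apply (i) at a vertex one step closer to $u$ than the outermost allowed layer. Using the assumption $N^{i+2}(u) \neq \emptyset$, pick any $w \in N^{i+2}(u)$ and let $v$ be the predecessor of $w$ on some shortest $u$-$w$ path, so that $v \in N^{i+1}(u)$. The neighbors of any vertex in $N^{i+1}(u)$ lie in $N^i(u) \cup N^{i+1}(u) \cup N^{i+2}(u)$, hence $N[v]$ is contained in those three layers. Applying (i) to $v$ produces a vertex of color $\alpha$ inside $N[v]$, and therefore at distance $i$, $i+1$, or $i+2$ from $u$.

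The one delicate point in this plan is the justification that unique $k$-colorability of $B(u,d)$ forces all $k$ colors to actually appear on $B(u,d)$: this is what turns the color swap at $u$ into a genuine contradiction rather than merely a legitimate permutation exploiting unused colors, and it is the step I would state carefully in a formal writeup. Everything else is a short bijectivity check together with a routine layer inclusion, so I expect the proof to be quite compact once this point is settled.
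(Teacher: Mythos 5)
Your proof is correct and follows essentially the same route as the paper's: part (i) is obtained by recoloring $u$ with $\alpha$ and contradicting unique colorability, and part (ii) by applying (i) to a vertex of $N^{i+1}(u)$, whose closed neighborhood lies in $N^i(u)\cup N^{i+1}(u)\cup N^{i+2}(u)$. The only difference is that you carefully justify why the recolored ball is not merely a permutation of the original coloring (i.e.\ that $\alpha$ must already occur in the ball), a point the paper's two-sentence argument leaves implicit.
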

	
	\begin{proof}
		Let  $G$ be a uniquely $k$-colorable graph at distance $d$, $u$ be a vertex and $\alpha$ be a color.
		\begin{enumerate}[label=(\roman*)]
			\item Assume by contradiction that color $\alpha$ does not appear in the closed neighborhood of~$u$. Then, we can obtain another proper coloring by changing the color of $u$ to $\alpha$, which is a contradiction since $G$ is uniquely $k$-colorable.
			\item Let $v$ be a vertex at distance $i+1$ from $u$. By applying~(i) to $v$, there exists a vertex $w \in N[v]$ of color $\alpha$, and $w$ is at distance $i$, $i+1$ or $i+2$ from $u$.
			\qedhere
		\end{enumerate}
	\end{proof}

	We are now ready to state and prove the following theorem.
	
	\ThmUniquelyColorable*
	
	\begin{proof}
		We actually prove the following more precise result in the model where vertices are allowed to see at distance $d\geqslant 11$. Let $\delta := \left\lfloor \frac{d-2}{9} \right\rfloor$ and $f(k):=\left\lceil \sqrt[\delta]{k} \right\rceil$. Note that $\delta \geqslant 1$, and that every integer in $\{0, \ldots k-1\}$ needs at most $\delta$ digits to be represented in base $f(k)$. Let us prove that $f(k)+1$ certificates are sufficient to certify $k$-colorability for uniquely $k$-colorable graphs at \mbox{distance $d-2$}. (The theorem follows by taking the logarithm of this quantity.)
		The set of certificates we will use is $\{0, 1, \ldots, f(k)\}$.
		
		Let $G=(V,E)$ be a uniquely $k$-colorable graph at distance $d-2$ which is $k$-colorable. Let us denote by $\varphi$ a proper $k$-coloring of $G$ using colors $\{0, \ldots, k-1\}$. The prover gives the following certificates. 
		The prover first chooses a maximal independent set $X$ at distance $6\delta+1$, and gives to all the vertices $u \in X$ the certificate $c(u):=f(k)$. Then, for every vertex $v \in V \setminus X$ such that $d(v,X) \leqslant 3\delta$, let $i_v:=\left\lceil \frac{d(v,X)}{3}\right\rceil$. The certificate $c(v)$ given by the prover to $v$ is the $i_v$-th digit of the decomposition of $\varphi(v)$ in base $f(k)$. Note that $c(v)\neq f(k)$. Finally, the prover gives to all the vertices at distance at least $3\delta+1$ from $X$ an arbitrary certificate in $\{0, \ldots, f(k)-1\}$.
		Intuitively, the name of the color is encoded on the vertices of that color, the first bit on a first circular strip (of thickness three:  the vertices at distance 1, 2 and 3 from the vertex in $X$), the second bit on a second circular strip, etc.
		
		The informal idea of the verification is the following. Since $X$ is a maximal independent set at distance $6\delta+1$, each vertex $v$ should see at least one vertex $u \in X$ at distance $6\delta\leqslant d-2$. Since $G$ is uniquely colorable, $v$ will determine the set of vertices around $u$ which are in its color class, and will then determine its own color thanks to their certificates. Finally, $v$ will compare its color to the color of its neighbors.
		
		Let us now formalize this intuition. Every vertex $v \in V$ checks its certificate as follows. If the diameter of the connected component of $v$ is at most $6\delta$, then $v$ sees its whole connected component and accepts if and only if it is $k$-colorable. Otherwise, let us denote by $X$ the set of vertices with certificate $f(k)$. The vertex $v$ performs the following verification steps:
		
		\begin{enumerate}[label=(\roman*)]
			\item  $v$ starts by checking if $B(v,6\delta)\cap X \neq \emptyset$. If it is not the case, then $v$ rejects.
			
			\item Let $u \in B(v, 6\delta) \cap X$. Since $d\geqslant 9\delta+2$, $B(v,d-2)$ contains $B(u,3\delta)$.
			Since $B(v,d-2)$ is uniquely colorable, $v$ determines the color classes $\mathcal{C}_1, \ldots, \mathcal{C}_k$ of $B(v,d-2)$. Let us denote by $\mathcal{C}_\ell$ the color class of~$v$. For every $i \in \{1, \ldots, \delta\}$, $v$ checks if all the vertices in $\mathcal{C}_\ell \cap (N^{3i-2}(u) \cup N^{3i-1}(u) \cup N^{3i}(u))$ have the same certificate. If it is not the case, $v$ rejects. Otherwise, let us denote by $a_i$ this common certificate. Note that $a_i$ is well-defined. Indeed, since the connected component of $v$ has diameter at least $6\delta+1$, $N^j(u)\neq\emptyset$ for all $j\leqslant 3\delta$. And Lemma~\ref{lem:layers} ensures that at least one vertex of $\mathcal{C}_\ell$ appears in every three consecutive layers around $u$. Let $a(v,u)$ be the integer whose decomposition in base $f(k)$ is $a_1\ldots a_\delta$.
			
			\item  Then, $v$ checks if, for every pair of vertices $u,u' \in B(v,6\delta)\cap X$, we have $a(v,u)=a(v,u')$. If it is not the case, $v$ rejects. Otherwise, let us denote by $a(v)$ this common value. If $a(v) \notin \{0, \ldots, k-1\}$, then $v$ rejects.
			
			\item For every $w\in N(v)$, $v$ can determine $a(w)$. Indeed, $a(w)$ only depends on $B(w,d-2)$ which is included in $B(v,d-1)$. So it is in the view of $v$. If $a(w)=a(v)$ for some $w \in N(v)$, then $v$ rejects.
			
			\item If $v$ did not reject at this point, then $v$ accepts.
		\end{enumerate}
		
		To conclude, we simply have to show that a uniquely \mbox{$k$-colorable} graph at distance $d-2$ is accepted with this verification algorithm if and only if it is $k$-colorable.
		If a graph $G$ is $k$-colorable, then the above checking algorithm indeed accepts with the certificates given by the prover as described above.
		Conversely, if all the vertices accept, then we can construct a proper $k$-coloring of~$G$ by giving to each vertex $v$ the color $a(v)$. It uses at most $k$ colors because of step~(iii). Moreover, it is a proper coloring since step~(iv) ensures that every vertex~$v$ is colored differently from all its neighbors.
	\end{proof}
	
	\begin{remark}
		Unique colorability can also be tested by the verification algorithm (instead of making the assumption that the input graph is uniquely colorable). In other words, there is a certification using the same number of certificates, which accepts the input graph if and only if it is a uniquely $k$-colorable graph at distance $d-2$ which is $k$-colorable.
	\end{remark}

	One can wonder how much we can decrease the number of certificates when $d$ increases. We prove that we can decrease it up to 2 (which is the best one could hope for). 
	To do so, the main difficulty is that, in the proof of Theorem~\ref{thm:uniquely_colorable}, we used a special certificate (namely the certificate $f(k)$) to certify the vertices of the maximal independent set $X$, and then at least 2 other certificates to code the colors. We will show that we can get rid of the special certificate, by modifying the certification around the vertices of $X$.
	
	\ThmTwoCertificates*
	
	\begin{proof}
		Let $\delta = \lceil\log_2 k\rceil$. We will prove that, in the model where vertices can see at distance \mbox{$d=9\delta+8$}, only $2$ certificates are sufficient in order to certify that a uniquely $k$-colorable graph at distance $d-2$ is $k$-colorable. In the following, we denote these two certificates by 0 and 1.
		
		On a graph $G=(V,E)$ which is $k$-colorable, the prover gives the following certificates. Let us denote by $\varphi$ a proper $k$-coloring of $G$ using colors $\{0, \ldots, k-1\}$. The prover begins by choosing a maximal independent set $X$ at distance $6\delta+5$, and gives to all the vertices $u \in X$ the certificate 1. He also gives certificate 1 to all the vertices $v$ such that $d(v,X)=1$, and certificate 0 to all the vertices $v$ such that $d(v,X)=2$. For every vertex~$v$ such that $3 \leqslant d(v,X) \leqslant 3\delta+2$, let $i_v:=\left\lceil\frac{d(v,X)-2}{3}\right\rceil$. The certificate $c(v)$ given by the prover to $v$ is the $i_v$-th digit of the decomposition of $\varphi(v)$ in base 2. Finally, the prover gives certificate $0$ to all the vertices $v$ such that $d(v,X) \geqslant 3\delta+3$.
		
		A vertex $v \in V$ checks its certificate as follows. If the diameter of the connected component of $v$ in $G$ is at most $6\delta+4$, then $v$ sees its whole connected component and accepts if and only if it is $k$-colorable. Otherwise, let us denote by $Y$ the set of vertices $u$ such that (a) $c(u)=1$ and, (b) $c(u')=1$ for all $u' \in N(u)$ and, (c) $c(u')=0$ for all $u' \in N^2(u)$. The vertex $v$ performs the following verification steps:
		\begin{enumerate}[label=(\roman*)]
			\item $v$ starts by checking if $B(v,6\delta+4) \cap Y = \emptyset$. If it is the case, then $v$ rejects.
			
			\item Let $u\in B(v,6\delta+4)\cap Y$. Since $d=9\delta+8$, $B(v,d-2)$ contains $B(u,3\delta+2)$. Since $B(v,d-2)$ is uniquely colorable, $v$ determines the color classes $\mathcal{C}_1, \ldots, \mathcal{C}_k$ of $B(v,d-2)$. Let us denote by $\mathcal{C}_\ell$ the color class of~$v$. For every $i \in \{1, \ldots, \delta\}$, $v$ checks if all the vertices in $\mathcal{C}_\ell \cap (N^{3i}(u) \cup N^{3i+1}(u) \cup N^{3i+2}(u))$ have the same certificate. If it is not the case, $v$ rejects. Otherwise, let us denote by $a_i$ this common certificate. As in proof of Theorem~\ref{thm:uniquely_colorable}, $a_i$ is well-defined by Lemma~\ref{lem:layers}. Let $a(v,u)$ be the integer whose decomposition in base 2 is $a_1\ldots a_\delta$.
			
			\item  Then, $v$ checks if, for every pair of vertices $u, u' \in B(v,6\delta+4)\cap Y$, we have $a(v,u)=a(v,u')$. If it is not the case, $v$ rejects. Otherwise, let us denote by $a(v)$ this common value. If $a(v) \notin \{0, \ldots, k-1\}$, then $v$ rejects.
			
			\item For every $w \in N(v)$, $v$ can determine $a(w)$. Indeed, $a(w)$ only depends on $B(w,d-2)$ which is included in $B(v,d-1)$. So it is in the view of $v$. If $a(w)=a(v)$ for some $w \in N(v)$, then $v$ rejects.
			
			\item If $v$ did not reject at this point, then $v$ accepts.
		\end{enumerate}
		
		To conclude, we have to show that a uniquely $k$-colorable graph at distance $d-2$ is accepted with this verification algorithm if and only if it is $k$-colorable. If all the vertices accept, we construct a proper $k$-coloring of the graph by giving to each vertex $v$ the color $a(v)$. It uses at most $k$ colors because of step~(iii) and it is a proper coloring because step~(iv) ensures that every vertex $v$ is colored differently from all its neighbors.
		
		Conversely, assume that a graph $G$ is $k$-colorable. We need to show that every vertex accepts with the certificate function given by the prover as described before. It is more difficult than in proof of Theorem~\ref{thm:uniquely_colorable}, since we have $X\subseteq Y$ but we may not have the equality. Hence, some vertices in $Y\setminus X$ can lead some vertex to reject at step (iii). We will prove that it cannot happen because of the following claim:
		
		\begin{claim}
			\label{claim:special_vertex}
			With the certificate function given by the prover, every vertex $y \in Y\setminus X$ is a twin of some vertex $x \in X$ (that is $N[y]=N[x]$).
		\end{claim}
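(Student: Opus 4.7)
The plan is to exploit the three defining conditions (a), (b), (c) of $Y$ together with the spacing of $X$, which is an independent set at distance $6\delta+5$. First I would observe that any $y \in Y\setminus X$ must lie at distance exactly $1$ from $X$: by the prover's construction the only vertices receiving certificate $1$ are those in $X$ or at distance $1$ from $X$, so condition (a) and $y\notin X$ force $y \in N(X)$. Pick any $x \in X$ adjacent to $y$. This $x$ is actually the unique neighbor of $y$ in $X$, because two distinct vertices of $X$ adjacent to $y$ would be at distance $2$ in $G$, contradicting the $(6\delta+5)$-independence of $X$.

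Next I would prove the two inclusions that make up $N[y]=N[x]$. For $N[y]\subseteq N[x]$, I take a neighbor $z$ of $y$. Condition (b) gives $c(z)=1$, so $z\in X\cup N(X)$. If $z\in X$, uniqueness forces $z=x\in N[x]$. Otherwise $z$ has some neighbor $x_z\in X$; then $x$ and $x_z$ are at distance at most $3$ via $y$ and $z$, which again forces $x_z=x$ by $(6\delta+5)$-independence, so $z\in N(x)$. Together with $y\in N(x)$ this yields $N[y]\subseteq N[x]$.

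For $N[x]\subseteq N[y]$, I would use condition (c) crucially. Since $x\in N(y)$, it suffices to show $N(x)\subseteq N[y]$. Let $w\in N(x)$ with $w\ne y$. Then $w$ is at distance at most $2$ from $y$ through $x$. If $w$ were not adjacent to $y$, the distance would be exactly $2$, so condition (c) would give $c(w)=0$; but $w\in N(X)$ implies $c(w)=1$ by the prover's construction, a contradiction. Hence $w\in N(y)$, and $N[x]\subseteq N[y]$.

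The only real subtlety I foresee is correctly splitting the roles of the three conditions: (a) places $y$ inside $N[X]$; (b) keeps $N(y)$ inside $X\cup N(X)$, which together with the wide spacing of $X$ forces all neighbors of $y$ to sit inside $N[x]$; and (c), which is the least obvious, rules out any ``extra'' neighbor of $x$ that fails to touch $y$. Once these roles are lined up with the distance-$6\delta+5$ independence of $X$, the twin equality $N[y]=N[x]$ follows with no further calculation.
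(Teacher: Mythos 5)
Your argument has a genuine gap at its very first step. You assert that ``the only vertices receiving certificate $1$ are those in $X$ or at distance $1$ from $X$,'' but this contradicts the prover's assignment: every vertex $v$ with $3 \leqslant d(v,X) \leqslant 3\delta+2$ receives the $i_v$-th binary digit of its color $\varphi(v)$, which can perfectly well be $1$. So condition (a) alone does \emph{not} place $y$ in $N[X]$, and the hardest part of the claim --- ruling out that some vertex far from $X$ satisfies all three conditions (a), (b), (c) defining $Y$ --- is exactly what your proof skips. The paper's proof spends most of its effort here, with a case analysis on $d(y,X)$: the cases $d(y,X)=2$ and $d(y,X)=3$ are killed by playing condition (b)/(c) at $y$ against condition (b)/(c) at the nearest $x\in X$ (using $X\subseteq Y$); the case $4\leqslant d(y,X)\leqslant 3\delta+2$ needs Lemma~\ref{lem:layers}, which guarantees a vertex of color $0$ in $N[y]$ --- such a vertex necessarily has certificate $0$ (all binary digits of $0$ are $0$), contradicting (a) or (b); and the case $d(y,X)\geqslant 3\delta+3$ gives $c(y)=0$ directly. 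None of this appears in your proposal, and without it the claim is unproven.

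The same false premise resurfaces in your proof of $N[y]\subseteq N[x]$ (``$c(z)=1$, so $z\in X\cup N(X)$''). There the conclusion happens to be recoverable, because $z\in N(y)$ forces $d(z,x)\leqslant 2$, and then condition (c) at $x$ (not the global structure of the certificate assignment) rules out $d(z,x)=2$; this is the local argument the paper actually uses, and you should phrase it that way. Your final inclusion $N[x]\subseteq N[y]$ via condition (c) at $y$ is correct and matches the paper. In summary: the twin argument in the second half is essentially right once repaired, but the reduction to $d(y,X)=1$ is the core of the claim and is missing.
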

		\begin{proof}
			Let us consider the certificates assigned by the prover, and let $y \in Y\setminus X$. Let us denote by $x$ the closest vertex in $X$ from $y$. By construction, we have $X \subseteq Y$, so $x \in Y$. First, let us prove that $d(x,y)=1$. Assume by contradiction that $d(x,y) \geqslant 2$. We distinguish some cases depending on $d(x,y)$:
			\begin{itemize}
				\item $d(x,y)=2$. Since $y\in Y$, we get $c(x)=0$. But $x\in Y$, so it should have certificate 1, which is a contradiction.
				\item $d(x,y)=3$. Then, there exists $w \in N^2(y)\cap N(x)$. This is a contradiction, since $y \in Y$ implies that $c(w)=0$ and $x \in Y$ implies that $c(w)=1$.
				\item $4 \leqslant d(x,y) \leqslant 3\delta+2$. By Lemma~\ref{lem:layers}, there is a vertex $z\in N[y]$ colored by~0. Note that $d(z,X)\geqslant 3$. \\
				If $d(z,X)\leqslant 3\delta+2$, then the certificate of $z$ given by the prover is a digit of the decomposition of its color in base $2$. Since $z$ is colored by $0$, $z$ has certificate $0$. \\
				If $d(z,X) \geqslant 3\delta+3$, then $z$ has certificate $0$ by construction.\\
				So $y$ has a neighbor with certificate $0$, which is a contradiction since $y\in V$.
				\item $d(x,y) \geqslant 3\delta+3$. Then, $y$ has certificate $0$ by construction and $y$ cannot be in $Y$.
			\end{itemize}
			So $d(x,y)=1$. Let us finally prove that $x$ and $y$ are twins. We prove that $N(y)\setminus\{x\}=N(x)\setminus\{y\}$. Let $z \in N(y)\setminus\{x\}$. Since $y \in Y$, we get $c(z)=1$. Moreover, $d(x,z) \in \{1,2\}$ since $d(x,y)=1$. Since $x \in Y$, we cannot have $d(x,z)=2$ since otherwise this would imply $c(z)=0$. Thus, $d(x,z)=1$, so $z \in N(x) \setminus\{y\}$. Thus, $N(y)\setminus\{x\}\subseteq N(x)\setminus\{y\}$. By symmetry, we get the other inclusion. So $x$ and $y$ are twins, which completes the proof of Claim~\ref{claim:special_vertex}.
		\end{proof}
		
		We can now conclude the proof of Theorem~\ref{thm:2_certificates_are_sufficient}. In the way the prover assigned certificates, every vertex $v$ will see at least one vertex $u \in Y$ at distance at most $6\delta+4$. It will then compute $a(v,u)$. But since $u$ is either in $X$ or a twin of a vertex $w \in X$, $u$ and $w$ has the same neighborhoods at distance $d$ for every $d \geqslant 2$. Thus, $a(v,u)=a(v,w)$. Then, by construction, every vertex $v$ will accept.
	\end{proof}

	\section{Dominating sets at large distance}
	
	In this section, we consider \emph{labeled} graphs: every graph~$G$ is equipped with a label function $\mathcal{L} : V(G) \rightarrow \{0,1\}$ which is part of the input (in other words, the certificate function given by the prover depends on the label function). We say that a vertex $v$ is \emph{labeled} if $\mathcal{L}(v)=1$. We focus on the number of certificates needed to certify that the set of labeled vertices is a \emph{dominating set at distance $t$} (we remind that a dominating set at distance $t$ is a set $S\subseteq V$ such that $\cup_{v\in S}B(v,t)$ is equal to $V$).
	
	\subsection{Lower bounds}
	
	\ThmSqrtNeeded*
	
	\begin{proof}
		Let us denote by $P_{t+1}$ the path having $t+1$ vertices denoted by $u_0, \ldots, u_t$, where only $u_0$ is labeled. Note that $\{u_0\}$ is a dominating set at distance $t$ in $P_{t+1}$ (see Figure~\ref{fig:g6} for an illustration).
		
		\begin{figure}[h]
			\centering
			\begin{tikzpicture}
				\node[circle, draw, minimum size=0.7cm, fill=red!50] at (0, 0) (0) {\scriptsize $u_0$};
				\node[circle, draw, minimum size=0.7cm] at (1.5, 0) (1) {\scriptsize $u_1$};
				\node[circle, draw, minimum size=0.7cm] at (3, 0) (2) {\scriptsize $u_2$};
				\node[circle, draw, minimum size=0.7cm] at (4.5, 0) (3) {\scriptsize $u_3$};
				\node[circle, draw, minimum size=0.7cm] at (6, 0) (4) {\scriptsize $u_4$};
				\node[circle, draw, minimum size=0.7cm] at (7.5, 0) (5) {\scriptsize $u_5$};
				\node[circle, draw, minimum size=0.7cm] at (9, 0) (6) {\scriptsize $u_6$};
				
				\draw
				(0) -- (1)
				(1) -- (2)
				(2) -- (3)
				(3) -- (4)
				(4) -- (5)
				(5) -- (6)
				;
			\end{tikzpicture}
			\caption{The graph $P_7$.}
			\label{fig:g6}
		\end{figure}
		
		Since $P_{t+1}$ is a valid instance, there exists a certificate function $c$ such that every vertex accepts.
		Let us prove that for $1\le i < j \le t-1$, the pairs of certificates $(c(u_i),c(u_{i+1}))$ and $(c(u_j),c(u_{j+1}))$ must be distinct. 
		Assume by contradiction that it is not the case, and let $i < j$ such that $(c(u_i),c(u_{i+1}))=(c(u_j),c(u_{j+1}))$. Let us prove that, for every possible pair $(i,j)$, we can find a graph $G$ that is accepted and should not be. There are several cases:
		\begin{itemize}
			\item if $j=i+1$, we get $c(u_i)=c(u_{i+1})=c(u_{i+2})$. Thus, the triangle where all the vertices receive the certificate $c(u_{i+1})$ is accepted, since the view of every vertex of the triangle is the same as the view of $u_{i+1}$ in $P_{t+1}$. But this graph should not be accepted since it does not contain any labeled vertex.
			
			\item if $j=i+2$, then the cycle of length 4 where the vertices receive successively the certificates $c(u_i)$ and $c(u_{i+1})$ is accepted, since every vertex has either the view of $u_{i+1}$ or the view of $u_{i+2}$ in~$P_{t+1}$. But it should not be accepted since it does not contain any labeled vertex.
			
			\item if $j \geqslant i+3$, then the cycle of length $j-i$ where vertices have certificates $c(u_i)$, $ c(u_{i+1}), \ldots,$ $c(u_{j-2}), c(u_{j-1})$ is accepted. Indeed, for every $\ell \in \{i+1, \ldots, j-1\}$, the vertex certified by $c(u_\ell)$ has the view of $u_\ell$ in $P_{t+1}$, and the vertex certified by $c(u_i)$ has the view of $u_j$ in~$P_{t+1}$. Again, this graph should not be accepted since it does not contain any labeled vertex.
		\end{itemize}
		Hence, all the pairs of certificates of consecutive vertices have to be pairwise distinct. Since we need at least $t-1$ different pairs of certificates, there must be at least $\sqrt{t-1}$ different certificates in $P_{t+1}$.
	\end{proof}
	
	We can wonder if the lower bound of Theorem~\ref{thm:sqrt_needed} can be generalized when vertices are allowed to see at distance $d \leqslant t$. We will see that in Corollary~\ref{cor:LB_dominating} that the lower bound is in fact divided by~$d$.

	\ThmLBDominating*
	
	\begin{corollary}
		\label{cor:LB_dominating}
		In the anonymous model where vertices can see at distance $d \leqslant \frac{t}{3}$, $\Omega(\log(t)/d)$ bits are needed to certify a dominating set at distance $t$.
	\end{corollary}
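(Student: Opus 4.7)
The plan is to generalize the pair-counting argument of Theorem~\ref{thm:sqrt_needed} by replacing pairs with $(2d)$-tuples of consecutive certificates. On the path $P_{t+1} = u_0 u_1 \cdots u_t$ in which only $u_0$ is labeled, fix any certificate function $c$ under which every vertex accepts, and for each $i \in \{1, \ldots, t - 2d + 1\}$ define
\[
T_i := \bigl(c(u_i), c(u_{i+1}), \ldots, c(u_{i+2d-1})\bigr).
\]
The claim is that these $t-2d+1$ tuples are pairwise distinct, which immediately yields $k^{2d} \geq t - 2d + 1$ and hence the bound $k \geq \sqrt[2d]{t - 2d + 1}$.

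To prove distinctness, I assume $T_i = T_j$ with $i < j$ and set $m := j - i$. I will construct a cycle $C_L$ containing no labeled vertex on which every node accepts, contradicting correctness. I choose $L$ to be any multiple of $m$ with $L \geq 2d+1$ (for instance $L = m \cdot \lceil(2d+1)/m\rceil$), label the cycle vertices $w_0, \ldots, w_{L-1}$, and set $c_{\text{cycle}}(w_p) := c(u_{i + (p \bmod m)})$, so that the cycle certificates are $m$-periodic around the cycle. The equality $T_i = T_j$ gives the periodicity relations $c(u_{i+k}) = c(u_{i+m+k})$ for $k \in \{0, \ldots, 2d-1\}$. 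When $m \leq 2d$, the overlap of the ranges $[0, 2d-1]$ and $[m, m+2d-1]$ forces the path sequence $c(u_i), \ldots, c(u_{i+m+2d-1})$ to itself be $m$-periodic; when $m \geq 2d+1$ I simply take $L = m$ and only the $2d$ shift relations are needed.

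The central step is then to exhibit, for every $p \in \{0, \ldots, L-1\}$, a path vertex $u_q$ whose view at distance $d$ in $P_{t+1}$ coincides with the view of $w_p$ in $C_L$. Since $L \geq 2d+1$, the cycle view is a path of length $2d$ of certificates with no wrap; the natural target is $q := i + ((p-d) \bmod m) + d$, which satisfies $q - d \geq 1$ and $q + d \leq t$ thanks to $i \geq 1$ and $j \leq t - 2d + 1$, placing $q$ in the safe interior range $\{d+1, \ldots, t-d\}$ where neither the label of $u_0$ nor the degree-$1$ appearance of $u_t$ is visible in the view of $u_q$. For $m \leq 2d$ the match follows from the extended $m$-periodicity on the path; for $m \geq 2d+1$ the interior cycle vertices (those with $d \leq p \leq m-1-d$) match trivially with $u_{i+p}$, while the vertices near the seam ($p < d$ or $p > m-1-d$) are matched respectively to $u_{j+p}$ or $u_{i+p}$ using only the $2d$ shift relations in the valid range $k \in \{0, \ldots, 2d-1\}$.

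I expect the main obstacle to be the careful bookkeeping in the small-$m$ regime: one has to simultaneously verify that the path periodicity really extends over $m+2d$ consecutive positions, that the chosen $L$ produces a non-wrapping view at distance $d$, and that the target indices $q$ remain inside the safe range $\{d+1,\ldots,t-d\}$ for every $p$. Warming up with the extreme cases $m = 1$ (yielding $2d+1$ equal consecutive certificates on the path and a monochromatic $C_{2d+1}$) and $m = 2$ (yielding an alternating $C_{2d+2}$) should clarify the general periodic construction. Corollary~\ref{cor:LB_dominating} then follows by taking $\log_2$: for $d \leq t/3$ we have $t - 2d + 1 \geq t/3 + 1$, hence $\log_2 k \geq \tfrac{1}{2d}\log_2(t-2d+1) = \Omega(\log t / d)$.
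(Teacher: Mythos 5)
Your proposal is correct and follows essentially the same route as the paper's proof of Theorem~\ref{thm:LB_dominating}: show that the $2d$-tuples of consecutive certificates on $P_{t+1}$ must be pairwise distinct by turning a repeated tuple into an $m$-periodic certificate assignment on an unlabeled cycle of length a multiple of $m$ that is at least $2d+1$, so that every cycle vertex sees the view of an interior path vertex. The only differences are cosmetic (the paper takes cycle length $r(2d+1)$ and shifts the certificate indexing by $d$ rather than shifting the target vertex $q$), and your derivation of the corollary from $k \geq \sqrt[2d]{t-2d+1}$ matches the paper's.
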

	
	\begin{proof}[Proof of Theorem~\ref{thm:LB_dominating}]
		The proof generalizes the one of Theorem~\ref{thm:sqrt_needed}. As in the proof of Theorem~\ref{thm:sqrt_needed}, we consider the graph $P_{t+1}$. Since $P_{t+1}$ has a dominating set at distance $t$, there exists a certificate function $c$ such that every vertex accepts. Let us prove that $2d$-tuples of consecutive certificates $(c(u_i),\ldots,c(u_{i+2d-1}))$ have to be pairwise distinct for every \mbox{$i \in \{1, \ldots, t-2d+1\}$}. Assume by contradiction there exist $i<j$ such that $(c(u_i), \ldots, c(u_{i+2d-1})) = (c(u_j), \ldots, c(u_{j+2d-1}))$. Let $r=j-i$. The finite sequence $(c(u_i), c(u_{i+1}), \ldots, c(u_{j+2d-1}))$ has its $2d$ first and last values equal, so it is $r$-periodic.\footnote{It is not necessarily a cyclic periodicity. By saying that this finite sequence is $r$-periodic, we mean that $c(u_k)=c(u_{k+r})$ for all $k \in \{i, \ldots, i+2d-1\}$.} Let us consider the cycle $C$ on $r \cdot (2d+1)$ vertices denoted by $v_0, \ldots, v_{r(2d+1)-1}$ with no labeled vertex. The graph $C$ does not have any labeled vertex, so it should not be accepted. However, let us show that it is accepted with some certificate function. For every $\ell \in \{0, \ldots, r\cdot (2d+1)-1\}$, the prover gives to the vertex $v_\ell$ the certificate $c(u_{i+d+(\ell\mod r)})$. We claim that every vertex accepts. Indeed, for every $\ell \in \{0, \ldots, r\cdot(2d+1)-1\}$, $v_\ell$ has the same view as $u_{i+d+(\ell\mod r)}$ in $P_{t+1}$. This is a contradiction.
		
		Hence, at least $t-2d+1$ different $2d$-tuples of certificates are needed to certify $P_{t+1}$, so at least $\sqrt[2d]{t-2d+1}$ different certificates are needed.
	\end{proof}

	\subsection{Upper bounds}
	
	In this subsection, we prove upper bounds matching with the previous lower bounds, thus giving the optimal bounds to certify a dominating set at large distance.
	Firstly, the lower bound of Theorem~\ref{thm:sqrt_needed} is optimal up to a constant factor, as stated in the following Theorem~\ref{thm:sqrt_sufficient}, which shows that the optimal size is $\frac{\log t}{2} + o(\log t)$ if the vertices can see at distance~1.
	
	\ThmSqrtSufficient*
	
	Before proving Theorem~\ref{thm:sqrt_sufficient}, we will define a few notions. 
	Let $A$ be an alphabet, and $\omega, \omega' \in A^\ast$. We say that $\omega$ is a \emph{factor} of $\omega'$ if there exists a sequence of consecutive letters in $\omega'$ which is equal to $\omega$. Let us now define \emph{de Bruijn words}, whose existence is well-known.
	
	\begin{proposition}
		\label{prop:de_bruijn}
		Let $k, n \in \NN^\ast$, and $A$ be an alphabet of size $k$. There exists a word $\omega \in A^\ast$ of length $k^n$, such that every word of $A^n$ appears at most once as a factor of $\omega$. Such a word $\omega$ is called a $(k,n)$-\emph{de Bruijn word}.\footnote{Usually, de Bruijn words are defined as words of length $k^n$ such that each word of $A^n$ appears at least once when $\omega$ is considered circularly. But due to the length of $\omega$, each word of $A^n$ actually appears exactly once circularly, so at most once if $\omega$ is not seen as a circular word.}
	\end{proposition}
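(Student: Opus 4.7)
The plan is to construct a de Bruijn word via the classical Eulerian-circuit argument on the de Bruijn graph. Let $A$ be the alphabet of size $k$. Define the directed multigraph $B(k,n)$ whose vertex set is $A^{n-1}$, and for every word $w = a_1 a_2 \ldots a_n \in A^n$ put a directed edge labelled $w$ from $a_1 \ldots a_{n-1}$ to $a_2 \ldots a_n$. Thus $B(k,n)$ has $k^{n-1}$ vertices and exactly $k^n$ edges, one per word of length $n$. Note that the case $n=1$ is degenerate (a single vertex with $k$ loops), so it is easiest to treat it separately by taking $\omega$ to be an enumeration of all $k$ letters; I would assume $n \geq 2$ in what follows.

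The first step is to verify that $B(k,n)$ admits an Eulerian circuit. Every vertex $v = a_1 \ldots a_{n-1}$ has out-degree $k$ (one edge per choice of letter to append) and in-degree $k$ (one edge per choice of letter to prepend), so all in- and out-degrees are equal. For strong connectivity, observe that from any vertex $u$ and any target $v = b_1 \ldots b_{n-1}$ one can reach $v$ in exactly $n-1$ steps by successively appending the letters $b_1, b_2, \ldots, b_{n-1}$. Hence Euler's theorem for directed graphs yields a closed walk $e_0, e_1, \ldots, e_{k^n - 1}$ that traverses each edge exactly once.

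Next, I would extract a circular word of length $k^n$ from this circuit. For each $i$, let $c_i \in A$ be the last letter of the label of $e_i$, and set $\omega^{\circ} := c_0 c_1 \ldots c_{k^n - 1}$ regarded as a cyclic word. A direct induction on the position along the circuit, using that consecutive edges $e_i, e_{i+1}$ share a vertex (so their length-$n$ labels overlap in their length-$(n-1)$ suffix/prefix), shows that the length-$n$ factor of $\omega^{\circ}$ ending at position $i$ equals the label of $e_i$. Since the circuit uses each edge exactly once, the $k^n$ cyclic factors of length $n$ of $\omega^{\circ}$ are pairwise distinct and exhaust $A^n$.

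Finally, I would define $\omega := c_0 c_1 \ldots c_{k^n - 1}$ as a linear word of length $k^n$. Its non-cyclic factors of length $n$ form a subset of the cyclic factors of $\omega^{\circ}$, so they are pairwise distinct; in particular every word of $A^n$ appears at most once as a factor of $\omega$, which is exactly the claim. The main piece of bookkeeping is the correspondence between edges of the Eulerian circuit and length-$n$ factors of the extracted word, and checking that the construction behaves correctly at the degenerate case $n = 1$; the rest reduces to the standard criterion for the existence of an Eulerian circuit in a directed graph.
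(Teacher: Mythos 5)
Your proof is correct. Note that the paper itself gives no proof of this proposition: it simply invokes the well-known existence of (cyclic) de Bruijn sequences, and the footnote reduces the stated ``at most once'' linear version to the classical ``exactly once circularly'' version by a counting observation (there are $k^n$ cyclic factors of length $n$ and $k^n$ words in $A^n$, so covering all of them forces each to appear exactly once, hence at most once when the word is read linearly). What you supply is precisely the standard Eulerian-circuit construction underlying that well-known fact: the degree and strong-connectivity checks for the graph on $A^{n-1}$ are right, the inductive identification of the length-$n$ cyclic factor ending at position $i$ with the label of the $i$-th edge is the correct bookkeeping, the passage from cyclic to linear factors only shrinks the set of factors, and you correctly isolate the degenerate case $n=1$. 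So your argument fills in the proof the paper omits rather than diverging from it.
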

	
	\begin{proof}[Proof of Theorem~\ref{thm:sqrt_sufficient}]
		Let $\tau = \left\lceil \sqrt{t} \right\rceil$. Let us prove that $3\tau$ certificates are sufficient to certify a dominating set at distance $t$ (in the anonymous model, where vertices can see at distance~$1$). Let $A := \{1, \ldots, \tau\}$. The certificates used in the scheme will be pairs in $C := \{0,1,2\} \times A$. For $(x,y) \in C$, let $\pi_1(x,y):=x$ and $\pi_2(x,y):=y$. Let $\omega'\in A^\ast$ be a $(\tau,2)$-de Bruijn word (which, by definition, has length at least~$t$), and let us denote by $\omega = \omega_1\ldots\omega_t$ its prefix of length exactly $t$.

		Let $G = (V,E)$ be a graph and $S \subseteq V$ be the set of labeled vertices.
		If $S$ is dominating at distance $t$, the certificate function given by the prover is the following. The vertices of $S$ are given an arbitrary certificate, and for every $u \in V \setminus S$ at distance $i$ from $S$, the prover gives to $u$ the certificate $c(u)$ which is $(i\!\!\mod 3,~\omega_i)$.
		
		The informal idea of the verification is the following one. In its certificate, every vertex $u$ is given a letter of $\omega$. By looking at its neighbors, $u$ will be able to determine its position in~$\omega$ (since a pair of letters defines a unique position in the de Bruijn word $\omega$), which corresponds to its distance to $S$.
		
		More formally, let $c$ be a certificate function. Each vertex $u$ checks the certificate as follows~:
		\begin{enumerate}[label=(\roman*)]
			\item If $N[u] \cap S \neq \emptyset$, then $u$ accepts.\footnote{We could verify that $\pi_2(c(u))$ corresponds to a letter in the beginning of $\omega$, but it is not necessary.}
			\item Else, $u$ checks that, for all $u',u'' \in N[u]$, if $\pi_1(c(u'))=\pi_1(c(u''))$ then $\pi_2(c(u'))=\pi_2(c(u''))$. If it is not the case, $u$ rejects.
			\item Then, $u$ checks if it has at least one neighbor $v$ such that \mbox{$\pi_1(c(v)) = \pi_1(c(u))-1 \mod 3$}, and if $\pi_2(c(v))\pi_2(c(u))$ is a factor of $\omega$. If it is not the case, $u$ rejects.
			\item Finally, for every neighbor $w$ such that \mbox{$\pi_1(c(w)) = \pi_1(c(u))+1 \mod 3$}, $u$ checks if $\omega$ has $\pi_2(c(v))\pi_2(c(u))\pi_2(c(w))$ as a factor. If it is not the case, then $u$ rejects.
			\item If $u$ did not reject at this point, it accepts.
		\end{enumerate}
		
		It remains to show that there exists a certificate function such that all the vertices of $G$ accept if and only if $S$ is dominating at distance $t$.
		If $S$ is a dominating set at distance $t$, then one can easily check that all the vertices accept with the certificates assigned by the prover as described previously. Note that with this certificate function, for step (ii), if $u', u'' \in N[u]$ satisfy $\pi_1(c(u'))=\pi_1(c(u''))$, then $d(u',S)=d(u'',S)$ so $\pi_2(c(u'))=\pi_2(c(u''))$.
		
		For the converse, assume that $G$ is accepted with some certificate function $c$. By (iii), every $u\in V$ such that no vertex is labeled in $N[u]$ should have a neighbor $v$ such that \mbox{$\pi_1(c(v)) = \pi_1(c(u))-1 \mod 3$} and $\pi_2(c(v))\pi_2(c(u))=\omega_\ell\omega_{\ell+1}$ for some $\ell$. Note that since $\omega$ is a de Bruijn word, this $\ell$ is unique. Let us prove by induction on $\ell \in \{1, \ldots, t-1\}$ that $d(u,S) \leqslant \ell+1$.
		\begin{itemize}
			\item For $\ell=1$, we have $\pi_2(c(v))\pi_2(c(u)) = \omega_1\omega_2$. Let us prove that $d(u,S) \leqslant 2$. It is sufficient to prove that $d(v,S) \leqslant 1$. If $v$ accepts at step (i), the conclusion holds. So we can assume that $v$ accepts at step~(v). By~(iii), $v$ has a neighbor~$v'$ such that \mbox{$\pi_1(c(v')) = \pi_1(c(v))-1 \mod 3$} and $\pi_2(c(v'))\pi_2(c(v))$ is a factor of~$\omega$. Since $v$ does not reject at step (iv), $\pi_2(c(v'))\pi_2(c(v))\pi_2(c(u))$ is a factor of $\omega$. So the letters $\omega_1\omega_2=\pi_2(c(v))\pi_2(c(u))$ appear at least twice as a factor of $\omega$, which is a contradiction with the definition of de Bruijn words. Indeed, $\pi_2(c(v))\pi_2(c(u))$ are already the first two letters of $\omega$. So if a factor $\pi_2(c(v'))\omega_1\omega_2$ appears somewhere, $\omega_1\omega_2$ must appear at least twice. Hence, $v$ accepts at step (i), so $d(v,S) \leqslant 1$.
			\item Assume now that $\ell \geqslant 2$. To prove that $d(u,S) \leqslant \ell+1$, it is sufficient to prove that $d(v,S) \leqslant \ell$. If $v$ accepts at step (i) the conclusion holds. So we can assume that $v$ accepts at step~(v). By (iii), $v$ has a neighbor~$v'$ such that \mbox{$\pi_1(c(v')) = \pi_1(c(v))-1 \mod 3$} and $\pi_2(c(v'))\pi_2(c(v))$ is a factor of~$\omega$. Since $v$ does not reject at step (iv), $\pi_2(c(v'))\pi_2(c(v))\pi_2(c(u))$ is a factor of~$\omega$. Since the factor $\pi_2(c(v))\pi_2(c(u))$ appears exactly once in $\omega$ and $\pi_2(c(v))\pi_2(c(u))=\omega_\ell\omega_{\ell+1}$, we have $\pi_2(c(v'))\pi_2(c(v)) = \omega_{\ell-1}\omega_\ell$ by~(iv). By induction hypothesis, it implies that $d(v,S) \leqslant \ell$.
		\end{itemize}
		Thus, for any vertex $u \in V$, if $u$ accepts at step (i) we have $d(u,S) \leqslant 1$, otherwise we have $d(u,S) \leqslant t$ by the previous induction. So $S$ is indeed a dominating set at distance $t$.
	\end{proof}

	In the case where vertices can see at distance $d \geqslant t$, we will see in Corollary~\ref{cor:UB_dominating} that the bound of Corollary~\ref{cor:LB_dominating} is also optimal: the number of bits is divided by $d$.

	\ThmUBDominating*
	
	\begin{corollary}
		\label{cor:UB_dominating}
		In the anonymous model where vertices can see at distance $d$, $O(\log(t)/d)$ bits are sufficient to certify a dominating set at distance $t$.
	\end{corollary}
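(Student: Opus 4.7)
The plan is to generalize the construction of Theorem~\ref{thm:sqrt_sufficient} by replacing its $(\tau,2)$-de Bruijn word with a $(\tau,d+1)$-de Bruijn word. Set $\tau := \lceil t^{1/(d+1)}\rceil$, $A := \{1, \ldots, \tau\}$, and let $\omega = \omega_1 \cdots \omega_t$ be a length-$t$ prefix of a $(\tau,d+1)$-de Bruijn word (which exists since $\tau^{d+1} \geq t$). The certificates will be pairs in $\{0, \ldots, 2d\} \times A$, giving $(2d+1)\tau = O(\sqrt[d+1]{t})$ certificates for fixed~$d$. The prover will assign $c(u) := (d(u,S) \bmod (2d+1),\, \omega_{d(u,S)})$ to each $u \notin S$, and an arbitrary certificate to vertices in $S$. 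The modulus $2d+1$ is chosen so that, within any $B(u,d)$, vertices whose distances-to-$S$ differ by at most $2d$ (by the triangle inequality) have distinguishable $\pi_1$ values.

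\textbf{Verification.} Each vertex $u$ performs four checks. First, if $B(u,d) \cap S \neq \emptyset$, it accepts. Otherwise, it checks the consistency condition that any two vertices of $B(u,d)$ sharing the same $\pi_1$ must share the same $\pi_2$. Third, it looks for a path $u = v_0, v_1, \ldots, v_d$ inside $B(u,d)$ satisfying $\pi_1(c(v_j)) \equiv \pi_1(c(u)) - j \pmod{2d+1}$ for every $j$, such that $\pi_2(c(v_d))\cdots\pi_2(c(v_0))$ is a factor of $\omega$: by the de Bruijn property this $(d+1)$-letter factor appears at a unique position $\ell - d$ of $\omega$, which defines the \emph{claimed position} $\ell$ of $u$. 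Finally, for every $w \in B(u,d)$, the vertex $u$ computes the unique integer $\ell_w \in [\ell - d(u,w),\, \ell + d(u,w)]$ with $\ell_w \equiv \pi_1(c(w)) \pmod{2d+1}$, and checks that $\ell_w \in [1,t]$ and $\pi_2(c(w)) = \omega_{\ell_w}$; if any check fails, $u$ rejects. Completeness is straightforward since in the prover's assignment one has $\ell_w = d(w,S)$ and all checks are satisfied.

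\textbf{Soundness.} Assuming $t \geq 2d+1$, the plan is to show by strong induction on $\ell \in \{d+1, \ldots, t\}$ that if $u$ is accepted via the backward-path route then $d(u,S) \leq \ell$. Let $v_d$ be the endpoint of $u$'s backward path. If $v_d$ is accepted via the first check, then $d(u,S) \leq d + d(v_d,S) \leq 2d$. Otherwise, $v_d$ has its own backward path defining some claimed position $\ell' \geq d+1$. The key observation is that every vertex $v_j$ on $u$'s backward path also lies in $B(v_d,d)$, at distance $d-j$ from $v_d$, so $v_d$'s own fourth check applied to $v_j$ forces $\pi_2(c(v_j)) = \omega_{\ell' + d - j}$, while $u$'s third check gives $\pi_2(c(v_j)) = \omega_{\ell - j}$. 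Collecting these equalities for $j = 0, \ldots, d$ produces the equality of $(d+1)$-letter strings $\omega_{\ell'} \cdots \omega_{\ell' + d} = \omega_{\ell - d} \cdots \omega_{\ell}$, and since each $(d+1)$-letter factor appears at most once in $\omega$, this forces $\ell' = \ell - d$. The induction hypothesis then gives $d(v_d, S) \leq \ell - d$, hence $d(u,S) \leq \ell$. Altogether $d(u,S) \leq \max(\ell, 2d) \leq t$, proving that $S$ dominates at distance $t$.

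\textbf{Main obstacle and edge case.} The crucial technical challenge is the inductive step above. A weaker verification relying only on the backward path would yield merely the single letter equality $\omega_{\ell'} = \omega_{\ell - d}$, which is far too weak to force $\ell' = \ell - d$. The point of the fourth check is precisely to propagate consistency to all $d+1$ vertices of the backward path simultaneously, producing a full $(d+1)$-letter factor match that de Bruijn uniqueness then converts into the desired position identity. The remaining regime $d \geq t/2$, in which $2d \leq t$ may fail, is not problematic because $t^{1/(d+1)} = O(1)$ there: one can either check $B(u,t) \cap S \neq \emptyset$ directly when $d \geq t$, or fall back on the $d=1$ scheme of Theorem~\ref{thm:sqrt_sufficient}, both using $O(1)$ certificates.
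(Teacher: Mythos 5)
Your overall strategy --- a length-$t$ prefix of a $(\tau,d+1)$-de Bruijn word with $\tau=\lceil t^{1/(d+1)}\rceil$, a backward path of length $d$ whose letters pin down a unique position in $\omega$, and an induction on that position --- is the same as the paper's. But two of your deviations from it create genuine gaps.

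First, the certificate count. You pair each letter with a counter modulo $2d+1$, so that your fourth check can locate every $w\in B(u,d)$ at a unique position in $[\ell-d(u,w),\ell+d(u,w)]$. This costs $(2d+1)\tau$ certificates, i.e.\ $\log(2d+1)+\frac{\log t}{d+1}$ bits, and $\log(2d+1)$ is \emph{not} $O(\log(t)/d)$ once $d$ grows faster than $\log t/\log\log t$; your fallback only covers $d\geqslant t/2$, so the regime in between (e.g.\ $d=\log t$, where you use $\Theta(\log\log t)$ bits against a claimed $O(1)$) is not handled. The paper keeps the counter modulo $3$: since consecutive vertices on a decreasing path have distances to $S$ differing by exactly one, a mod-$3$ counter already forbids backtracking and orients the path, and no check ever needs to disambiguate positions across a whole ball. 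That is what keeps the count at $3\tau$.

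Second, the soundness induction. You step from $u$ directly to the endpoint $v_d$ of its backward path, so claimed positions drop by $d$ per step, and your base case (``$v_d$ accepted via the first check'') only yields $d(u,S)\leqslant 2d$, which does not establish the inductive claim $d(u,S)\leqslant\ell$ when $\ell\in\{d+1,\dots,2d-1\}$. Unrolling the chain $u\to v_d\to v_d'\to\cdots$ with this base case gives $d(u,S)\leqslant\ell+d-1$ in the worst case (the chain bottoms out at a vertex whose ball merely \emph{meets} $S$ at distance exactly $d$ while its claimed position is near $d+1$), so with $\ell=t$ you only get $d(u,S)\leqslant t+d-1$, not $t$; your closing line ``$d(u,S)\leqslant\max(\ell,2d)\leqslant t$'' is not what the induction proves. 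The paper avoids this by inducting one vertex at a time along the backward path: the position drops by $1$ per step, the ``accepts via the first check'' branch then gives $d(u_0,S)\leqslant d+1\leqslant \ell+d$ for every $\ell\geqslant 1$, and the true base case $\ell=1$ is closed by showing the neighbour \emph{must} accept via the first check, since otherwise the factor $\omega_1\cdots\omega_{d+1}$ would occur twice in $\omega$. To repair your argument you would either have to adopt this one-step descent, or prove something substantially stronger about the intermediate vertices of the backward paths (whose own claimed positions are forced into residue classes with no representative in $[d+1,t]$ near the bottom of the chain) --- neither of which is in your write-up.
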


	\begin{proof}[Proof of Theorem~\ref{thm:UB_dominating}]
		The idea of the proof is similar to the one of Theorem~\ref{thm:sqrt_sufficient}.
		
		Let $\tau = \left\lceil \sqrt[d+1]{t} \right\rceil$. Let us prove that $3\tau$ certificates are sufficient to certify a dominating set at distance $t$ (in the anonymous model, where vertices can see at distance $d$). Let $A = \{1, \ldots, \tau\}$. The set of certificates that will be given to vertices consists of pairs in $C = \{0,1,2\} \times A$. For $(x,y) \in C$, let $\pi_1(x,y):=x$ and $\pi_2(x,y):=y$. Let $\omega'\in A^\ast$ be a $(\tau,d+1)$-de Bruijn word (which, by definition, has length at least~$t$), and let us denote by $\omega = \omega_1\ldots\omega_t$ its prefix of length exactly $t$.

		Let $G = (V,E)$ be a graph and $S \subseteq V$ be the set of labeled vertices.
		If $S$ is a dominating set at distance~$t$, the certificate function given by the prover is the following. Vertices of $S$ are given an arbitrary certificate, and for every $u \in V \setminus S$ at distance $i$ from $S$, the certificate $c(u)$ is $(i\mod 3,~\omega_i)$.
		
		The informal idea of the verification is the following one. In its certificate, every vertex $u$ is given a letter of $\omega$. By looking at its neighbors, $u$ will be able to determine its position in~$\omega$, which corresponds to its distance to $S$.
		
		More formally, let $c$ be a certificate function. Each vertex $u_0$ checks the certificate as follows~:
		\begin{enumerate}[label=(\roman*)]
			\item If $B(u_0,d) \cap S \neq \emptyset$, then $u_0$ accepts.
			\item Else, $u_0$ checks if there exists a path $u_0u_1\ldots u_d$ such that $\pi_1(c(u_i)) = \pi_1(c(u_0))-i\mod 3$ for every $i \in \{0, \ldots, d\}$. Let us call such a path a \emph{decreasing path (starting at $u_0$)}. If $u_0$ does not have any decreasing path, $u_0$ rejects.
			\item Then, $u_0$ checks if $\pi_2(c(u_d))\ldots\pi_2(c(u_0))$ is the same for all the decreasing paths, and if it is a factor of $\omega$. If it is not the case, $u_0$ rejects.
			\item Finally, for every neighbor $w$ such that \mbox{$\pi_1(c(w)) = \pi_1(c(u_0))+1 \mod 3$}, $u_0$ checks if $\omega$ has $\pi_2(c(u_d))\ldots\pi_2(c(u_0))\pi_2(c(w))$ as a factor. If it is not the case, then $u_0$ rejects.
			\item If $u_0$ did not reject at this point, it accepts.
		\end{enumerate}
		
		It remains to show that there exists a certificate function such that all the vertices of $G$ accept if and only if $S$ is a dominating set at distance $t$.
		If $S$ is a dominating set at distance $t$, then one can easily check that all vertices accept with the certificates assigned by the prover described previously. Note that with this certificate function, for step (iii), if $u_0\ldots u_d$ is a decreasing path, then $d(u_i,S)=d(u_0,S)-i$ for all $i \in \{0, \ldots, d\}$, so $\pi_2(c(u_d))\ldots\pi_2(c(u_0))$ is a factor of $\omega$.
		
		For the converse, assume that $G$ is accepted with some certificate function $c$. By~(ii) and~(iii), every $u_0\in V$ such that no vertex is labeled in $B(u_0, d)$ should have a decreasing path $u_0u_1\ldots u_d$, such that $\pi_2(c(u_d))\ldots\pi_2(c(u_0))=\omega_\ell\ldots\omega_{\ell+d}$ for some $\ell$. Note that since $\omega$ is a de Bruijn word, this $\ell$ is unique. Let us prove by induction on $\ell \in \{1, \ldots, t-d\}$ that $d(u_0,S) \leqslant \ell+d$.
		\begin{itemize}
			\item For $\ell=1$, we have $\pi_2(c(u_d))\ldots\pi_2(c(u_0)) = \omega_1\ldots\omega_{d+1}$. Let us prove that \mbox{$d(u_0,S) \leqslant d+1$}. It is sufficient to prove that $d(u_1,S) \leqslant d$. If $u_1$ accepts at step~(i), the conclusion holds. So we can assume that $u_1$ accepts at step~(v). By~(ii) and~(iii), there exists a decreasing path $u_1v_1\ldots v_d$ (starting at~$u_1$) such that $\pi_2(c(v_d))\ldots\pi_2(c(v_1))\pi_2(c(u_1))$ is a factor of~$\omega$. Since $u_1$ does not reject at step~(iv), $\pi_2(c(v_d))\ldots\pi_2(c(v_1))\pi_2(c(u_1))\pi_2(c(u_0))$ is a factor of $\omega$. Note that $u_0u_1v_1\ldots v_{d-1}$ is a decreasing path. Since $u_0$ does not reject at step~(iii), we get $\pi_2(c(v_{d-1}))\ldots\pi_2(c(v_1))\pi_2(c(u_1))\pi_2(c(u_0)) = \pi_2(c(u_d))\ldots\pi_2(c(u_0))$. Thus, the factor $\omega_1\ldots\omega_{\ell+1}=\pi_2(c(u_d))\ldots\pi_2(c(u_0))$ appears at least twice in $\omega$, which is a contradiction. Hence, $u_1$ accepts at step (i), so $d(u_1,S) \leqslant d$.
			\item Assume now that $\ell \geqslant 2$. To prove that $d(u_0,S) \leqslant \ell+d$, it is sufficient to prove that $d(u_1,S) \leqslant \ell+d-1$. If $u_1$ accepts at step~(i) the conclusion holds. So we can assume that $u_1$ accepts at step~(v). By~(ii) and~(iii), there exists a decreasing path $u_1v_1\ldots v_d$ (starting at $u_1$) such that $\pi_2(c(v_d))\ldots\pi_2(c(v_1))\pi_2(c(u_1))$ is a factor of~$\omega$. Since $u_1$ does not reject at step~(iv), $\pi_2(c(v_d))\ldots\pi_2(c(v_1))\pi_2(c(u_1))\pi_2(c(u_0))$ is a factor of~$\omega$. Note that $u_0u_1v_1\ldots v_{d-1}$ is a decreasing path. Since $u_0$ does not reject at step~(iii), then $\pi_2(c(v_{d-1}))\ldots\pi_2(c(v_1))\pi_2(c(u_1))\pi_2(c(u_0)) = \pi_2(c(u_d))\ldots\pi_2(c(u_0))$.
			Since the factor $\pi_2(c(u_d))\ldots\pi_2(c(u_0))$ appears exactly once in $\omega$ and $\pi_2(c(u_d))\ldots\pi_2(c(u_0))=\omega_\ell\ldots\omega_{\ell+d}$, we have $\pi_2(c(v_d))\ldots\pi_2(c(v_1))\pi_2(c(u_1)) = \omega_{\ell-1}\ldots\omega_{\ell+d-1}$ by~(iv). By induction hypothesis, we get $d(u_1,S) \leqslant \ell+d-1$ and then the conclusion holds.
		\end{itemize}
		Thus, for any vertex $u_0 \in V$, if $u$ accepts at step (i) then $d(u,S) \leqslant d$, otherwise we have $d(u,S) \leqslant t$ by the previous induction. So $S$ is indeed a dominating set at distance $t$.
	\end{proof}
	
	\section{Perfect matchings}
	
	In this section, we will consider the certification of the existence of a \emph{perfect matching}. Let us recall that a perfect matching of a graph $G$ is a set $M \subseteq E(G)$ which satisfies the following property: for all $v \in V(G)$, there exists a unique $e \in M$ such that $v\in e$.

	\subsection{Upper bounds}
	
	In order to certify perfect matching, we will certify matching colorings which are defined as follows:
	
	\begin{definition}
		Let $G=(V,E)$ be a graph. A \emph{$k$-matching coloring} of~$G$ is a mapping \mbox{$\varphi : V \rightarrow \{1, \ldots, k\}$} such that there exists a perfect matching $M$ satisfying the following property: for every edge $(u,v) \in E$, $\varphi(u)=\varphi(v)$ if and only if $(u,v) \in M$.
	\end{definition}
	
	This is not the same definition as the one given in the introduction, but the two are equivalent.
	
	For every graph $G$, we denote by $\Delta(G)$ (or simply $\Delta$ when $G$ is clear from context) the maximum degree of $G$.
	
	\begin{lemma}
		\label{lem:coloring_of_a_matching}
		Let $G=(V,E)$ be a graph with a perfect matching $M$. There exists a $(2\Delta-1)$-matching coloring of $G$.
	\end{lemma}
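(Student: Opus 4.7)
The plan is to reformulate the problem as a proper coloring of an auxiliary graph whose vertices are the edges of $M$, and then bound its maximum degree so that a greedy coloring uses at most $2\Delta-1$ colors.

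First, I would observe that any $k$-matching coloring $\varphi$ with associated perfect matching $M$ is equivalent to the following data: a coloring of the edges of $M$ (each matched edge receives one color, shared by its two endpoints) such that for every non-matching edge $(u,v) \in E \setminus M$, the colors of the matched edge containing $u$ and the matched edge containing $v$ are different. So the problem reduces to edge-coloring $M$ (viewed as a set of ``super-vertices'') subject to these constraints.

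Next, I would define the \emph{conflict graph} $H$ whose vertex set is $M$, with an edge between $e_1, e_2 \in M$ whenever there exists an edge of $E \setminus M$ joining an endpoint of $e_1$ to an endpoint of $e_2$. A valid $k$-matching coloring (with matching $M$ fixed) then corresponds exactly to a proper vertex coloring of $H$ using $k$ colors. The key step is to bound $\Delta(H)$: for any $e = (u,u') \in M$, the matched edges in conflict with $e$ are precisely those matched edges containing a vertex adjacent in $G \setminus M$ to $u$ or to $u'$. Since $u$ has at most $\Delta - 1$ neighbors different from $u'$, and similarly $u'$ has at most $\Delta - 1$ neighbors different from $u$, the edge $e$ has at most $(\Delta - 1) + (\Delta - 1) = 2\Delta - 2$ neighbors in $H$. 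Therefore $\Delta(H) \leq 2\Delta - 2$.

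Finally, I would apply the standard greedy coloring argument: any graph of maximum degree $D$ admits a proper coloring with $D + 1$ colors. Applied to $H$, this yields a proper coloring of $H$ with at most $2\Delta - 1$ colors, which translates back to a $(2\Delta-1)$-matching coloring of $G$ with respect to $M$. There is no real obstacle here; the only mild subtlety is to double-check that the definition used in this section (with $\varphi(u)=\varphi(v)$ iff $(u,v) \in M$, rather than just the ``only if'' direction from the introduction) is the one being established, which is handled precisely by the conflict graph construction above.
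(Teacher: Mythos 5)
Your proposal is correct and is essentially the paper's argument: the paper also greedily colors the edges of $M$ one by one, using exactly the same count that each matched edge $(u,v)$ has at most $(\Delta-1)+(\Delta-1)=2\Delta-2$ forbidden colors coming from the other matched edges meeting $N(u)\setminus\{v\}$ and $N(v)\setminus\{u\}$. Your packaging of this as a proper coloring of an explicit conflict graph on $M$ is just a slightly more formal presentation of the same greedy argument.
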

	
	\begin{proof}
		We construct a $(2\Delta-1)$-matching coloring $\varphi$ of $G$ as follows: we greedily color one after the other the edges of $M$. For an edge $(u,v) \in M$, $u$ has at most $\Delta-1$ neighbors different from $v$, and $v$ has at most $\Delta-1$ neighbors different from $u$, each of them being incident to at most one edge of $M$. Thus, there are at most $2\Delta-2$ forbidden colors for the edge $(u,v)$ in total. So there is at least one available color to color $u$ and $v$.
	\end{proof}
	
	\ThmUBPerfectMatching*

	\begin{proof}[Proof of Theorem~\ref{thm:UB_perfect_matching}]
		Let $G=(V,E)$ be a graph in $\mathcal{C}$ having a perfect matching. Let $\varphi$ be a $k$-matching coloring of $G$. For every $v$, the prover gives the certificate $c(v):=\varphi(v)$ to the vertex $v$.
		
		For every $G\in \mathcal{C}$ and certificate function $c$, every vertex $v$ checks the certificate as follows: $v$ checks if it has exactly one neighbor~$u$ such that $c(u)=c(v)$. If it is the case, $v$ accepts. Otherwise, $v$ rejects.
		
		It remains to show that a graph $G\in \mathcal{C}$ is accepted if and only if it has a perfect matching. If $G$ has a perfect matching, then every vertex accepts with the certificate function given by the prover described above, by definition of a matching coloring. Conversely, if all the vertices accept, we construct a perfect matching by matching each vertex $v$ to the unique vertex $u$ in his neighborhood such that $c(u)=c(v)$.
	\end{proof}
	
	Lemma~\ref{lem:coloring_of_a_matching} and Theorem~\ref{thm:UB_perfect_matching} directly implies that the following holds:

	\CoroMatchingGUB*
	
	We will see in Section~\ref{sec:matchingLB} that the dependency on $\Delta$ is unavoidable in the anonymous model, and unavoidable in the locally checkable proofs model when verification algorithms consist in checking for a matching coloring.
	
	One can wonder if the dependency on $\Delta$ is always unavoidable, or if there exist some non-trivial graph classes with a certification of constant size. We give a very short proof that such graph classes exist, \emph{e.g.} planar graphs: 
	
	\begin{theorem}
		\label{thm:PM_planar}
		Let $k \in \NN$. Let $\mathcal{C}$ be a class of graphs closed by edge contraction such that for all $G\in\mathcal{G}$, the chromatic number of $G$ is at most $k$. Then, $k$ certificates are enough to certify a perfect matching for graphs in $\mathcal{C}$.
	\end{theorem}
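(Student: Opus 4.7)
The plan is to combine the matching coloring framework of Theorem~\ref{thm:UB_perfect_matching} with the contraction trick that was already sketched for planar graphs in the discussion preceding Corollary~\ref{coro:minor-treewidth}. By Theorem~\ref{thm:UB_perfect_matching}, it suffices to prove that every graph $G \in \mathcal{C}$ that admits a perfect matching also admits a $k$-matching coloring; the certification scheme then follows for free by simply giving each vertex the color of its class.

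So the core of the proof reduces to the following construction. Given $G \in \mathcal{C}$ with a perfect matching $M$, I would form the graph $G/M$ obtained by contracting every edge of $M$ (removing any resulting loops, and keeping multi-edges as simple edges since we work with simple graphs). Since $\mathcal{C}$ is closed under edge contraction, $G/M$ belongs to $\mathcal{C}$, so by hypothesis its chromatic number is at most $k$. Fix a proper $k$-coloring $\psi$ of $G/M$, and define $\varphi : V(G) \to \{1, \dots, k\}$ by letting $\varphi(v)$ be the color assigned by $\psi$ to the vertex of $G/M$ into which $v$ has been merged.

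The last step is to verify that $\varphi$ is a $k$-matching coloring witnessed by $M$, in the sense of the definition preceding Lemma~\ref{lem:coloring_of_a_matching}. If $(u,v) \in M$, then $u$ and $v$ are identified in $G/M$, so $\varphi(u) = \varphi(v)$. Conversely, if $(u,v) \in E(G) \setminus M$, then $u$ and $v$ map to two distinct vertices of $G/M$ that are joined by an edge (the image of $(u,v)$), and hence receive different colors under $\psi$, giving $\varphi(u) \neq \varphi(v)$. Applying Theorem~\ref{thm:UB_perfect_matching} to the class $\mathcal{C}$ then yields the desired certification with $k$ certificates.

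I expect no serious obstacle: the only point that deserves a line of justification is that contractions may create parallel edges, but this is harmless since any two copies of the same edge impose the same constraint on $\psi$, and loops cannot appear between matched vertices because $G$ is simple. Corollary~\ref{coro:minor-treewidth} for planar graphs is then recovered by taking $k=4$ via the four color theorem.
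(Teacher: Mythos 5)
Your proposal is correct and follows essentially the same route as the paper: contract the matching edges, use closure under contraction to get a proper $k$-coloring of the contracted graph, lift it back to a $k$-matching coloring, and conclude via Theorem~\ref{thm:UB_perfect_matching}. The only difference is that you spell out the verification that the lifted coloring satisfies both directions of the matching-coloring condition, which the paper leaves implicit.
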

	
	\begin{proof}
		By Theorem~\ref{thm:UB_perfect_matching}, it is sufficient to show that for every $G\in\mathcal{C}$, if $G$ has a perfect matching then $G$ has a $k$-matching coloring.
		
		Let $G\in\mathcal{C}$ such that $G$ has a perfect matching $M$. Let us define a $k$-matching coloring~$\varphi$ of $G$. Let $G'$ be the graph obtained from $G$ by contracting every edge of $M$. Since $\mathcal{C}$ is closed by edge contraction, $G'\in\mathcal{C}$ so $G'$ also is $k$-colorable. Let $\varphi'$ be a proper $k$-coloring of $G'$. For every vertex $u\in V(G)$, there exists a unique edge $e\in M$ such that $u\in e$. Let $u'\in V(G')$ be the vertex of $G'$ resulting from the contraction of $e$. Let $\varphi(u):=\varphi'(u')$. Then, $\varphi$ is a $k$-matching coloring of $G$ (because $M$ is a perfect matching and $\varphi'$ is a proper $k$-coloring of~$G'$).
	\end{proof}
	
	The following corollary is a direct consequence of Theorem~\ref{thm:PM_planar} and classical theorems of structural graph theory.
	
	\CoroMinorTreewidth*
	
	\begin{proof}
		The first statement is a consequence of the $4$-color theorem. The second point follows from the best bound known on Hadwiger's conjecture by Delcourt and Postle~\cite{DelcourtP21}. The last statement simply follows from the fact that edge contractions cannot increase the treewidth and graphs of treewidth $k$ are $k$-degenerate (and then $(k+1)$-colorable).
	\end{proof}
	
	Note that the behavior on planar graphs is completely different in the three properties we considered: for coloring, we get a tight bound of $4$ certificates, for dominating sets at distance $t$, the general lower bounds hold, and for certification of perfect matchings, the natural approach which gives $O(\Delta)$ certificates can be drastically improved.
	

	\subsection{Lower bounds}\label{sec:matchingLB}
	
	The first result of this section consists in proving that the number of bits obtained in Corollary~\ref{coro:matching_GUB} to certify that a graph admits a perfect matching is (essentially) tight.
	
	\ThmLBPerfectMatching*
	
	\begin{proof}
		Assume by contradiction that $\Delta-1$ certificates are sufficient to certify the existence of a perfect matching for graphs with maximum degree at most $\Delta$.
		
		Let $B_\Delta = (V_\Delta,E_\Delta)$ be the bipartite graph on vertex set $V_\Delta:=\{u_1, \ldots, u_\Delta\}\cup\{v_1, \ldots, v_\Delta\}$, and $(u_i,v_j)$ is an edge for every $i\geqslant j$. This graph is usually called a \emph{half graph}.
		The graph~$B_\Delta$ has a perfect matching (consisting of the edges $(u_i,v_i)$ for all $i \in \{1, \ldots, \Delta\}$). Thus, there exists a certificate function $c : V_\Delta \rightarrow \{1, \ldots, \Delta-1\}$ which makes every vertex accept. It is easy to check that there actually is a unique perfect matching in $B_\Delta$. By pigeonhole principle, there must exist $1 \leqslant j_1<j_2 \leqslant \Delta$ such that \mbox{$c(v_{j_1})=c(v_{j_2})$}.
		See Figure~\ref{fig:B_7} for illustration.
		
		\begin{figure}[h]
			\centering
			\begin{tikzpicture}
				\node[circle, draw, minimum size=0.8cm, fill=magenta!50] at (0, 3) (1) {\footnotesize$u_1$};
				\node[circle, draw, minimum size=0.8cm, fill=blue!50] at (2.2, 3) (2) {\footnotesize$u_2$};
				\node[circle, draw, minimum size=0.8cm, fill=yellow!50] at (4.4, 3) (3) {\footnotesize$u_3$};
				\node[circle, draw, minimum size=0.8cm, fill=green!50] at (6.6, 3) (4) {\footnotesize$u_4$};
				\node[circle, draw, minimum size=0.8cm, fill=blue!50] at (8.8, 3) (5) {\footnotesize$u_5$};
				\node[circle, draw, minimum size=0.8cm, fill=orange!50] at (11, 3) (6) {\footnotesize$u_6$};
				\node[circle, draw, minimum size=0.8cm, fill=green!50] at (13.2, 3) (7) {\footnotesize$u_7$};
				\node[circle, draw, minimum size=0.8cm, fill=orange!50] at (0, 0) (8) {\footnotesize$v_1$};
				\node[circle, draw, minimum size=0.8cm, fill=yellow!50] at (2.2, 0) (9) {\footnotesize$v_2$};
				\node[circle, draw, minimum size=0.8cm, fill=magenta!50, line width=1.7pt] at (4.4, 0) (10) {\footnotesize$v_3$};
				\node[circle, draw, minimum size=0.8cm, fill=green!50] at (6.6, 0) (11) {\footnotesize$v_4$};
				\node[circle, draw, minimum size=0.8cm, fill=teal!50] at (8.8, 0) (12) {\footnotesize$v_5$};
				\node[circle, draw, minimum size=0.8cm, fill=magenta!50, line width=1.7pt] at (11, 0) (13) {\footnotesize$v_6$};
				\node[circle, draw, minimum size=0.8cm, fill=blue!50] at (13.2, 0) (14) {\footnotesize$v_7$};
				
				\draw
				(1) -- (8)
				(2) -- (8)
				(2) -- (9)
				(3) -- (8)
				(3) -- (9)
				(3) -- (10)
				(4) -- (8)
				(4) -- (9)
				(4) -- (10)
				(4) -- (11)
				(5) -- (8)
				(5) -- (9)
				(5) -- (10)
				(5) -- (11)
				(5) -- (12)
				(6) -- (8)
				(6) -- (9)
				(6) -- (10)
				(6) -- (11)
				(6) -- (12)
				(6) -- (13)
				(7) -- (8)
				(7) -- (9)
				(7) -- (10)
				(7) -- (11)
				(7) -- (12)
				(7) -- (13)
				(7) -- (14);
			\end{tikzpicture}
			\caption{The graph $B_7$ with an accepting certificate assignment (that we will use in the next pictures). Here, certificates are represented by the colors, and only 6 different certificates are used. In this example, we take $j_1=3$ and $j_2=6$.}
			\label{fig:B_7}
		\end{figure}
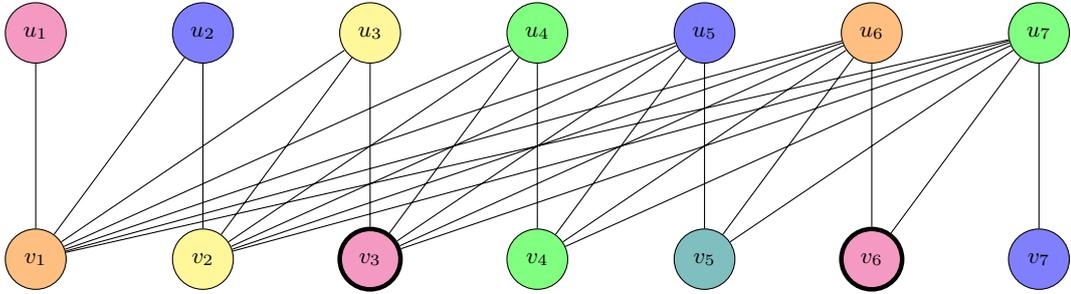
		
		To obtain a contradiction, we construct a graph $G_\Delta$ which does not have a perfect matching, but has a certificate function such that all the vertices accept. We construct $G_\Delta$ as follows. We take two copies of $B_\Delta$, denoted by $B_\Delta'$ and $B_\Delta''$. We denote by $V_\Delta' = \{u_1', v_1', \ldots\}$ (resp. $V_\Delta'' = \{u_1'', v_1'', \ldots\}$) the vertex set of $B_\Delta'$ (resp. $B_\Delta''$). For all $i \in \{j_1, \ldots, j_2-1\}$, we delete the edge $(u_i'',v_{j_1}'')$ and we add the edge $(u_i'',v_{j_2}')$. Finally, the prover gives to the vertices of $B_\Delta'$ and $B_\Delta''$ their certificates in $B_\Delta$. For instance, with the certificates of $B_7$ represented in Figure~\ref{fig:B_7}, the graph $G_7$ receives the certificates represented in Figure~\ref{fig:G_7}.
		
		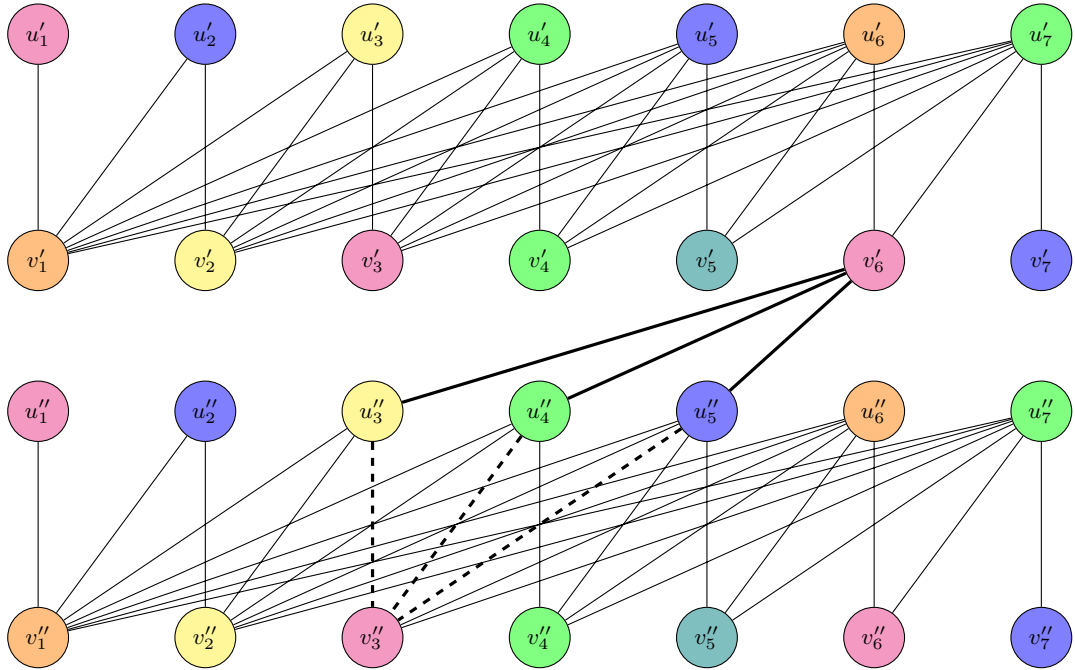
\begin{figure}[h]
			\centering
			\begin{tikzpicture}
				\node[circle, draw, minimum size=0.8cm, fill=magenta!50] at (0, 3) (1) {\footnotesize$u_1'$};
				\node[circle, draw, minimum size=0.8cm, fill=blue!50] at (2.2, 3) (2) {\footnotesize$u_2'$};
				\node[circle, draw, minimum size=0.8cm, fill=yellow!50] at (4.4, 3) (3) {\footnotesize$u_3'$};
				\node[circle, draw, minimum size=0.8cm, fill=green!50] at (6.6, 3) (4) {\footnotesize$u_4'$};
				\node[circle, draw, minimum size=0.8cm, fill=blue!50] at (8.8, 3) (5) {\footnotesize$u_5'$};
				\node[circle, draw, minimum size=0.8cm, fill=orange!50] at (11, 3) (6) {\footnotesize$u_6'$};
				\node[circle, draw, minimum size=0.8cm, fill=green!50] at (13.2, 3) (7) {\footnotesize$u_7'$};
				\node[circle, draw, minimum size=0.8cm, fill=orange!50] at (0, 0) (8) {\footnotesize$v_1'$};
				\node[circle, draw, minimum size=0.8cm, fill=yellow!50] at (2.2, 0) (9) {\footnotesize$v_2'$};
				\node[circle, draw, minimum size=0.8cm, fill=magenta!50] at (4.4, 0) (10) {\footnotesize$v_3'$};
				\node[circle, draw, minimum size=0.8cm, fill=green!50] at (6.6, 0) (11) {\footnotesize$v_4'$};
				\node[circle, draw, minimum size=0.8cm, fill=teal!50] at (8.8, 0) (12) {\footnotesize$v_5'$};
				\node[circle, draw, minimum size=0.8cm, fill=magenta!50] at (11, 0) (13) {\footnotesize$v_6'$};
				\node[circle, draw, minimum size=0.8cm, fill=blue!50] at (13.2, 0) (14) {\footnotesize$v_7'$};
				
				\node[circle, draw, minimum size=0.8cm, fill=magenta!50] at (0, -2) (15) {\footnotesize$u_1''$};
				\node[circle, draw, minimum size=0.8cm, fill=blue!50] at (2.2, -2) (16) {\footnotesize$u_2''$};
				\node[circle, draw, minimum size=0.8cm, fill=yellow!50] at (4.4, -2) (17) {\footnotesize$u_3''$};
				\node[circle, draw, minimum size=0.8cm, fill=green!50] at (6.6, -2) (18) {\footnotesize$u_4''$};
				\node[circle, draw, minimum size=0.8cm, fill=blue!50] at (8.8, -2) (19) {\footnotesize$u_5''$};
				\node[circle, draw, minimum size=0.8cm, fill=orange!50] at (11, -2) (20) {\footnotesize$u_6''$};
				\node[circle, draw, minimum size=0.8cm, fill=green!50] at (13.2, -2) (21) {\footnotesize$u_7''$};
				\node[circle, draw, minimum size=0.8cm, fill=orange!50] at (0, -5) (22) {\footnotesize$v_1''$};
				\node[circle, draw, minimum size=0.8cm, fill=yellow!50] at (2.2, -5) (23) {\footnotesize$v_2''$};
				\node[circle, draw, minimum size=0.8cm, fill=magenta!50] at (4.4, -5) (24) {\footnotesize$v_3''$};
				\node[circle, draw, minimum size=0.8cm, fill=green!50] at (6.6, -5) (25) {\footnotesize$v_4''$};
				\node[circle, draw, minimum size=0.8cm, fill=teal!50] at (8.8, -5) (26) {\footnotesize$v_5''$};
				\node[circle, draw, minimum size=0.8cm, fill=magenta!50] at (11, -5) (27) {\footnotesize$v_6''$};
				\node[circle, draw, minimum size=0.8cm, fill=blue!50] at (13.2, -5) (28) {\footnotesize$v_7''$};
				
				\draw
				(1) -- (8)
				(2) -- (8)
				(2) -- (9)
				(3) -- (8)
				(3) -- (9)
				(3) -- (10)
				(4) -- (8)
				(4) -- (9)
				(4) -- (10)
				(4) -- (11)
				(5) -- (8)
				(5) -- (9)
				(5) -- (10)
				(5) -- (11)
				(5) -- (12)
				(6) -- (8)
				(6) -- (9)
				(6) -- (10)
				(6) -- (11)
				(6) -- (12)
				(6) -- (13)
				(7) -- (8)
				(7) -- (9)
				(7) -- (10)
				(7) -- (11)
				(7) -- (12)
				(7) -- (13)
				(7) -- (14)
				
				(15) -- (22)
				(16) -- (22)
				(16) -- (23)
				(17) -- (22)
				(17) -- (23)
				(18) -- (22)
				(18) -- (23)
				(18) -- (25)
				(19) -- (22)
				(19) -- (23)
				(19) -- (25)
				(19) -- (26)
				(20) -- (22)
				(20) -- (23)
				(20) -- (24)
				(20) -- (25)
				(20) -- (26)
				(20) -- (27)
				(21) -- (22)
				(21) -- (23)
				(21) -- (24)
				(21) -- (25)
				(21) -- (26)
				(21) -- (27)
				(21) -- (28);
				
				\draw
				(13) edge[line width=1.2pt] (17)
				(13) edge[line width=1.2pt] (18)
				(13) edge[line width=1.2pt] (19)
				(17) edge[line width=1.2pt, dashed] (24)
				(18) edge[line width=1.2pt, dashed] (24)
				(19) edge[line width=1.2pt, dashed] (24);;
			\end{tikzpicture}
			\caption{The graph $G_7$, with the certificates used in the proofs. Dashed edges represent edges that have been removed when building $G_7$. The sets used in Claim \protect \textit{}\ref{claim:no_PM_in_Gdelta} are $A=\{v_3'',v_6'',v_7''\}$ and $N(A)=\{u_6'',u_7''\}$.}
			\label{fig:G_7}
		\end{figure}
		
		\begin{claim}
			\label{claim:no_PM_in_Gdelta}
			The graph $G_\Delta$ does not have a perfect matching.
		\end{claim}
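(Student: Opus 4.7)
The plan is to invoke Hall's marriage theorem. Observe that $G_\Delta$ is bipartite with left part $U = \{u_i', u_i'' : 1 \leqslant i \leqslant \Delta\}$ and right part $V = \{v_i', v_i'' : 1 \leqslant i \leqslant \Delta\}$, both of size $2\Delta$ (neither the deletions nor the additions in the construction create an edge inside $U$ or inside $V$). A perfect matching therefore exists if and only if Hall's condition holds on $V$, so it suffices to exhibit a set $A \subseteq V$ with $|N_{G_\Delta}(A)| < |A|$.

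First, I would carefully track how the modifications alter neighborhoods. The only edges deleted are $(u_i'', v_{j_1}'')$ for $i \in \{j_1, \ldots, j_2 - 1\}$, and the only edges added are $(u_i'', v_{j_2}')$ for the same range of $i$. Hence $v_{j_1}''$, originally adjacent in $B_\Delta''$ to $u_{j_1}'', u_{j_1+1}'', \ldots, u_\Delta''$, has its neighborhood in $G_\Delta$ reduced to $\{u_{j_2}'', u_{j_2+1}'', \ldots, u_\Delta''\}$, while every other $v_k''$ keeps its original half-graph neighborhood $\{u_k'', \ldots, u_\Delta''\}$, and $v_{j_2}'$ is the unique right-side vertex that gains new neighbors.

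The candidate violating set is
\[
A \;=\; \{v_{j_1}''\} \cup \{v_{j_2}'', v_{j_2+1}'', \ldots, v_\Delta''\},
\]
of size $\Delta - j_2 + 2$ (matching the illustrative choice $A = \{v_3'', v_6'', v_7''\}$ when $\Delta = 7$, $j_1 = 3$, $j_2 = 6$). The key computation is that every vertex of $A$ has all its $G_\Delta$-neighbors inside $\{u_{j_2}'', u_{j_2+1}'', \ldots, u_\Delta''\}$: this holds for $v_{j_1}''$ by the deletion above, and for each $v_k''$ with $k \geqslant j_2$ by the half-graph structure of $B_\Delta''$ which is unaffected at those vertices. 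Crucially, the newly added edges all land on $v_{j_2}'$, which does not belong to $A$, so they contribute nothing to $N_{G_\Delta}(A)$. Thus $|N_{G_\Delta}(A)| \leqslant \Delta - j_2 + 1 < \Delta - j_2 + 2 = |A|$, Hall's condition fails, and $G_\Delta$ admits no perfect matching.

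I do not expect a genuine obstacle here; the work is purely bookkeeping of neighborhoods before and after the edge swap. The only point that warrants care is verifying that no added edge secretly creates a neighbor of $A$, which is handled by the observation that $v_{j_2}' \notin A$. The conceptual point driving the choice of $A$ is that redirecting the $(u_i'', v_{j_1}'')$ edges to a vertex outside $A$ is what opens the one-vertex deficit exploited by Hall's theorem.
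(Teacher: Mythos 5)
Your proposal is correct and matches the paper's own argument: it uses the exact same deficient set $A=\{v_{j_1}''\}\cup\{v_{j_2}'',\ldots,v_\Delta''\}$ with $N(A)=\{u_{j_2}'',\ldots,u_\Delta''\}$, the paper merely phrasing the conclusion as ``not all vertices of $A$ can be matched'' rather than invoking Hall's theorem by name. The extra bookkeeping you do (tracking how the edge swap alters neighborhoods and noting that the added edges land on $v_{j_2}'\notin A$) is exactly the verification implicit in the paper's one-line computation of $N(A)$.
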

		
		\begin{proof}
			The set $A=\{v_{j_1}'', v_{j_2}'', v_{j_2+1}'', \ldots, v_\Delta''\}$ is an independent set of size $\Delta-j_2+2$. And $N(A)$ is $\{u_{j_2}'', \ldots, u_\Delta''\}$, which is a set of size $\Delta-j_2+1$. Therefore, not all vertices of $A$ can be matched. 
		\end{proof}
		
		We show that the certificates given by the prover as described previously make all vertices of $G_\Delta$ accept. In the graph $G_\Delta$, all the vertices of $B_\Delta'$ have the same view as their copy in~$B_\Delta$, except $v_{j_2}'$ which has the view of $v_{j_1}$ in~$B_\Delta$. 
		Since all the vertices of $B_\Delta$ accept, all the vertices of $B_\Delta'$ accept as well. 
		Similarly, in~$B_\Delta''$, all the vertices have the same view as their copy in $B_\Delta$, except $v_{j_1}''$ which has the view of $v_{j_2}$ in $B_\Delta$. Note that, for $u_{j_1}'', \ldots, u_{j_2-1}''$, it is because $c(v_{j_1})=c(v_{j_2})$. Thus, all the vertices of $B_\Delta''$ accept.
		So all the vertices of $G_\Delta$ accept. Together with Claim~\ref{claim:no_PM_in_Gdelta}, this is a contradiction.
	\end{proof}

	We can now wonder if fewer certificates are needed in the locally checkable proofs model. We did not succeed to answer this question in full generality, but we prove the result for a specific type of certification. 
	Remember that in the proof of Theorem~\ref{thm:UB_perfect_matching} with the matching coloring, the vertices check independently their neighbors in order to look for an edge of the matching. Every vertex then accepts if it is adjacent to exactly one edge of the matching. In other words, by looking only at the certificates and identifiers of its endpoints, each edge can be described as \emph{valid} or \emph{invalid}, in the sense that the nodes consider it to be an edge of the matching or not.
	Each vertex accepts if and only if it is the endpoint of exactly one valid edge. We call such a checking algorithm a~\emph{constructive} checking. If we have a constructive checking which makes every vertex accept, it is possible to construct a perfect matching simply by matching each vertex with the endpoint of its valid edge. Note that, as in the proof of Theorem~\ref{thm:UB_perfect_matching}, $2\Delta-1$ different certificates are always sufficient to certify a perfect matching with a constructive checking. In fact, we will show that it is optimal (up to a constant factor on the number of bits).
	
	\begin{theorem}
		\label{thm:LB_constructive_PM}
		In the locally checkable proofs model, for every $\Delta \geqslant 1$, at least $\Omega(\log \Delta)$ bits are needed to certify a perfect matching with a constructive checking.
	\end{theorem}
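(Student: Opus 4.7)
My plan is to adapt the half-graph construction from Theorem~\ref{thm:LB_perfect_matching}, but now consider a large family of half-graph variants parameterized by permutations, and exploit that a constructive checking is completely determined by a single validity predicate $\phi$ acting on pairs of (certificate, identifier). This converts the lower bound into a counting argument on the number of certificate assignments.

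Fix $2\Delta$ vertices $\{u_1,\ldots,u_\Delta\}\cup\{v_1,\ldots,v_\Delta\}$ together with an arbitrary but fixed identifier assignment. For each permutation $\sigma\in S_\Delta$, I will define the half graph $G_\sigma$ with edge set $\{(u_i,v_{\sigma(j)}):j\leq i\}$. Each $G_\sigma$ has maximum degree $\Delta$, and by the standard greedy argument (match the degree-$1$ vertex $u_1$ with its only neighbor $v_{\sigma(1)}$, delete the pair, iterate) it admits the \emph{unique} perfect matching $M_\sigma=\{(u_i,v_{\sigma(i)}):i\in\{1,\ldots,\Delta\}\}$. Since the checking is constructive, any accepting certificate function $c_\sigma$ must satisfy $\{e\in E(G_\sigma):\phi\text{ validates }e\}=M_\sigma$.

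The heart of the argument is to show that the map $\sigma\mapsto c_\sigma$ is injective. Suppose $\sigma\neq\sigma'$ but $c_\sigma=c_{\sigma'}=c$. Because $\phi$ depends only on endpoint types, the set $\Phi(c):=\{(u_i,v_j)\in E(K_{\Delta,\Delta}):\phi(c(u_i),\mathrm{Id}(u_i),c(v_j),\mathrm{Id}(v_j))=1\}$ is well-defined in $K_{\Delta,\Delta}$ and must contain both $M_\sigma$ and $M_{\sigma'}$. Let $\tau=\sigma^{-1}\sigma'$; since $\tau\neq\mathrm{id}$, the elementary identity $\sum_i\tau(i)=\sum_i i$ forbids $\tau(i)\geq i$ everywhere, so some index $i$ satisfies $\tau(i)<i$. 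For that $i$ we have $\sigma(i)\neq\sigma'(i)$ and $\sigma^{-1}(\sigma'(i))=\tau(i)\leq i$, which places the edge $e=(u_i,v_{\sigma'(i)})$ inside $E(G_\sigma)$. Then $e\in M_{\sigma'}\subseteq\Phi(c)$, hence $e\in\Phi(c)\cap E(G_\sigma)=M_\sigma$, contradicting $\sigma(i)\neq\sigma'(i)$.

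Injectivity yields $\Delta!\leq f^{2\Delta}$, where $f$ is the number of distinct certificates, so $\log_2 f\geq \log_2(\Delta!)/(2\Delta)=\Omega(\log\Delta)$, giving the claimed lower bound on the certificate size in bits. The main obstacle is precisely the injectivity step: one needs a family of instances rigid enough (unique perfect matchings) and numerous enough ($\Delta!$ many) so that reusing a single certificate function across two instances provokes an incompatibility in the globally-defined set $\Phi(c)$; the half-graph family and the observation about descents of a non-identity permutation are what make this work uniformly.
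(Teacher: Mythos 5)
Your proof is correct and follows essentially the same route as the paper: a family of $\Delta!$ half-graph instances indexed by permutations, injectivity of $\sigma\mapsto c_\sigma$ forced by the uniqueness of the perfect matching together with the edge-local validity predicate of a constructive checking, and the counting bound $\Delta!\le f^{2\Delta}$. The only cosmetic difference is that you permute the adjacency of the half-graph while keeping identifiers fixed, whereas the paper keeps the graph $B_\Delta$ fixed and permutes the identifiers of the $u_i$ (and you locate the conflicting valid edge via an index with $\sigma^{-1}\sigma'(i)<i$ rather than via the smallest index of disagreement); both arguments culminate in the same contradiction of a vertex incident to two valid edges.
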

	
	\begin{proof}
		Assume that $m$ bits are sufficient to certify a perfect matching with a constructive checking.
		Let us fix $2\Delta$ different identifiers, denoted by $Id_1, \ldots, Id_{2\Delta}$. Let us consider the bipartite graph $B_\Delta = (V_\Delta, E_\Delta)$ introduced in the proof of Theorem~\ref{thm:LB_perfect_matching}. For each permutation $\sigma$ of $\{1, \ldots, \Delta\}$, let us consider $B_\Delta(\sigma)$ the graph where each vertex $u_i$ has the identifier $Id_{\sigma(i)}$, and each vertex $v_i$ the identifier $Id_{\Delta+i}$. Since $B_\Delta(\sigma)$ has a perfect matching, there exists a certificate function \mbox{$c_\sigma : \{Id_1, \ldots, Id_{2\Delta}\} \rightarrow \{0, \ldots, 2^m-1\}$} which makes all vertices of $B_\Delta(\sigma)$ accept.
		
		\begin{claim}
			\label{claim:constructive_verification_PM}
			Let $\sigma, \tau$ be two permutations of $\{1, \ldots, \Delta\}$ such that $c_\sigma = c_\tau$. Then, $\sigma=\tau$.
		\end{claim}
		\begin{proof}[Proof of Claim~\ref{claim:constructive_verification_PM}]
			Let $\sigma$, $\tau$ be such that $c_\sigma=c_\tau$. By definition of a constructive verification, the set of valid edges in the graph $B_\Delta(\sigma)$ certified with $c_\sigma$ is a perfect matching. Since the only perfect matching in $B_\Delta$ is $\{(u_1,v_1), \ldots, (u_\Delta, v_\Delta)\}$, it is the set of valid edges in $B_\Delta(\sigma)$ certified with $c_\sigma$.
			
			By contradiction, assume that $\sigma\neq\tau$. Let $i_0$ the smallest element of $\{1, \ldots, \Delta\}$ such that $\sigma(i_0)\neq\tau(i_0)$. In $B_\Delta(\tau)$, the edge $(u_{i_0}, v_{i_0})$ is valid (because valid edges form a perfect matching). Let $j_0$ be such that vertex $\tau(j_0)=\sigma(i_0)$. By definition of $i_0$, we get $j_0 \in \{i_0+1, \ldots, \Delta\}$. Since $c_\sigma=c_\tau$, the edge $(u_{j_0}, v_{i_0})$ has same identifiers and certificates than edge $(u_{i_0}, v_{i_0})$ in $B_\Delta(\sigma)$. Thus, the edge $(u_{j_0}, v_{i_0})$ is also valid in $B_\Delta(\tau)$. This is a contradiction because vertex $v_{i_0}$ would then be the endpoint of two valid edges in $B_\Delta(\tau)$, so it should reject.
		\end{proof}
		We can now conclude the proof of Theorem~\ref{thm:LB_constructive_PM}. Assume that $m$ bits are sufficient to certify a perfect matching in the locally checkable proofs model, with a constructive verification algorithm. By Claim~\ref{claim:constructive_verification_PM}, the number of different functions $\{Id_1, \ldots, Id_{2\Delta}\} \rightarrow \{0, \ldots, 2^m-1\}$ is at least the number of permutations of $\{1, \ldots, \Delta\}$. Thus, we get \mbox{$2^{2m\Delta} \geqslant \Delta!$}, leading to $m \geqslant \frac{\log_2(\Delta!)}{2\Delta}$. Since $\log_2(\Delta!)=\Omega(\Delta\log\Delta)$, we get the result.
	\end{proof}

	\bibliographystyle{plain}
	\bibliography{biblio}

\end{document}